\documentclass[11pt]{article}
\usepackage{fullpage}
\usepackage{graphicx}
\usepackage{bbm}
\usepackage{amsmath,amssymb, amsfonts, xspace, amsthm}
\usepackage{color}
\usepackage{natbib}

\newtheorem{Theorem}{Theorem}
\newtheorem{Theorem*}{Theorem}

\newtheorem{Claim*}[Theorem]{Claim}

\newtheorem{CounterExample*}{$\overline{\hbox{\bf Example}}$}
\newtheorem{Definition}[Theorem]{Definition}

\newtheorem{Example*}[Theorem]{Example}

\newtheorem{Intuition*}[Theorem]{Intuition}
\newtheorem{Joke*}[Theorem]{Joke}
\newtheorem{Lemma}[Theorem]{Lemma}
\newtheorem{Lemma*}[Theorem]{Lemma}
\newtheorem{Open problem}[Theorem]{Open problem}

\newtheorem{Question*}[Theorem]{Question}




\def \bSubexa    {\begin{subexa}}








\newcommand{\ignore}[1]{}









\newcommand{\II}{\mathbb{I}} 
\newcommand{\EE}{\mathbb{E}}




\def \cG     {{\cal G}}
\def \cH     {{\cal H}}

\def \cO     {{\cal O}}

\def \cQ     {{\cal Q}}

\def \cS     {{\cal S}}

\def \tcO    {\tilde\cO}


\newcommand{\poi}{{\rm poi}}
\newcommand{\Var}{{\rm Var}}




\newcommand{\ie}{\textit{i.e.,}\xspace}  


\definecolor{light}{gray}{.75}




\def \upto  {{,}\ldots{,}}

\def \ok    {1\upto k}


\def \setok    {\{\ok\}}








\def \Paren#1{{\left({#1}\right)}}




\newcommand{\ed}{\stackrel{\mathrm{def}}{=}}















\def\ignore#1{}








\newcommand{\bi}{\begin{itemize}}
\newcommand{\ei}{\end{itemize}}



\def\orpro{\mathop{\mathchoice
   {\vee\kern-.49em\raise.7ex\hbox{$\cdot$}\kern.4em}
   {\vee\kern-.45em\raise.63ex\hbox{$\cdot$}\kern.2em}
   {\vee\kern-.4em\raise.3ex\hbox{$\cdot$}\kern.1em}
   {\vee\kern-.35em\raise2.2ex\hbox{$\cdot$}\kern.1em}}\limits}

\def\andpro{\mathop{\mathchoice
 {\wedge\kern-.46em\lower.69ex\hbox{$\cdot$}\kern.3em}
 {\wedge\kern-.46em\lower.58ex\hbox{$\cdot$}\kern.25em}
 {\wedge\kern-.38em\lower.5ex\hbox{$\cdot$}\kern.1em}
 {\wedge\kern-.3em\lower.5ex\hbox{$\cdot$}\kern.1em}}\limits}

\def\simge{\mathrel{%
   \rlap{\raise 0.511ex \hbox{$>$}}{\lower 0.511ex \hbox{$\sim$}}}}

\def\simle{\mathrel{
   \rlap{\raise 0.511ex \hbox{$<$}}{\lower 0.511ex \hbox{$\sim$}}}}






\newcommand{\high}{\text{high}}
\newcommand{\low}{\text{low}}
\newcommand{\guess}{\text{guess}}
\newcommand{\lowvalue}{\text{low}}
\newcommand{\highvalue}{\text{high}}

\def \mup1  {{\boldsymbol{\mu}_1}}
\def \mup2  {{\boldsymbol{\mu}_2}}

\def \mup  {{\boldsymbol\mu}}

\newcommand{\norm}[1]{\left|\left|#1\right|\right|}

\newcommand{\ExtAbs}[1]{\ifthenelse{\equal{\version}{ExtAbs}}{#1}{}}
\newcommand{\FullVer}[1]{\ifthenelse{\equal{\version}{FullVer}}{#1}{}}


\def\tcO{\widetilde{\cO}}

\def\Var{{\rm{Var}}}
\def\lV{\left\lvert}
\def\rV{\right\rvert}


\newcommand{\tilOm}{\widetilde\Omega}
\newcommand{\tilTh}{\widetilde\Theta}

\usepackage[backref,colorlinks,citecolor=blue,bookmarks=true]{hyperref}
\usepackage{aliascnt}
\usepackage[numbered]{bookmark}

\title{Faster Algorithms for Testing under Conditional Sampling}
\author{
\begin{tabular}[t]{c@{\extracolsep{5em}}c@{\extracolsep{5em}}c} 
Moein Falahatgar  & Ashkan Jafarpour & Alon Orlitsky \\
\small\texttt{mfalahat@ucsd.edu} & \small\texttt{ashkan@ucsd.edu} & \small\texttt{alon@ucsd.edu}
\end{tabular}
\vspace{2ex} \\
\begin{tabular}[t]{c@{\extracolsep{8em}}c} 
Venkatadheeraj Pichapathi  & Ananda Theertha Suresh\\
\small\texttt{dheerajpv7@gmail.com} & \small\texttt{asuresh@ucsd.edu} 
\end{tabular}
\vspace{2ex}\\
University of California, San Diego
}
\begin{document}

\maketitle

\begin{abstract}
There has been considerable recent interest in 
distribution-tests whose run-time and
sample requirements are sublinear in the domain-size $k$. 
We study two of the most important tests under the
conditional-sampling model where each query
specifies a subset $S$ of the domain, and the
response is a sample drawn from $S$ according
to the underlying distribution.

For identity testing, which asks whether the
underlying distribution equals a specific given
distribution or $\epsilon$-differs from it, 
we reduce the known time and sample complexities
from $\tcO(\epsilon^{-4})$ to $\tcO(\epsilon^{-2})$,
thereby matching the information theoretic lower bound. 
For closeness testing, which asks whether two
distributions underlying observed data sets are
equal or different, we reduce existing complexity
from $\tcO(\epsilon^{-4} \log^5 k)$ to an even
sub-logarithmic $\tcO(\epsilon^{-5} \log \log k)$ thus
providing a better bound to 
 an open problem in Bertinoro Workshop on Sublinear Algorithms~\citep{Previous_authors}.

\end{abstract}

\textbf{Keywords:} Property testing, conditional sampling, sublinear algorithms

\newcommand{\scikeds}[2]{N_{\textrm{id}}({#1},\epsilon#2)} 
\newcommand{\sciked}{\scikeds{k}{,\delta}}
\newcommand{\sciket}{\scikeds{k}{,0.1}}
\newcommand{\scike}{\scikeds{k}{}}

\newcommand{\scckeds}[2]{N_{\textrm{cl}}({#1},\epsilon#2)} 
\newcommand{\sccked}{\scckeds{k}{,\delta}}
\newcommand{\sccket}{\scckeds{k}{,0.1}}
\newcommand{\sccke}{\scckeds{k}{}}

\newcommand{\sccike}{N^*_{\textrm{id}}(k,\epsilon)}
\newcommand{\scccke}{N^*_{\textrm{cl}}(k,\epsilon)}
\newcommand{\scccko}{N^*_{\textrm{cl}}(k,1/4)}

\section{Introduction}
\subsection{Background}
The question of whether two probability distributions are the same
or substantially different arises in many important applications.
We consider two variations of this problem:
\emph{identity testing} where one distribution is known while the
other is revealed only via its samples,
and 
\emph{closeness testing}
where both distributions are revealed only via their samples.

As its name suggests, identity testing arises when an identity
needs to be verified. For example, testing whether a given
person generated an observed fingerprint, if a specific author wrote
an unattributed document, or if a certain disease caused the symptoms
experienced by a patient. In all these cases we may have sufficient
information to accurately infer the true identity's underlying
distribution, and ask whether this distribution
also generated newly-observed samples.
For example, multiple original high-quality fingerprints
can be used to infer the fingerprint structure, and then be
applied to decide whether it generated newly-observed fingerprints.

Closeness testing arises when we try to discern whether the same
entity generated two different data sets. For example, if two
fingerprints were generated by the same individual, two documents
were written by the same author, or two patients suffer from the
same disease. In these cases, we do not know the distribution
underlying each data set, but would still like to determine whether
they were generated by the same distribution or by two different ones.

Both problems have been studied extensively.
In the hypothesis-testing framework, researchers studied the
asymptotic test error as the number of samples tends to
infinity, ~\citep[see ][and references therein]{Ziv88, Unnikrishnan12}.
We will follow a more recent, non-asymptotic approach.
Two distributions $p$ and $q$ are 
$\epsilon$-\emph{far} if 
\[
\norm{p-q}_1\ge\epsilon.
\]
An \emph{identity test} for a given distribution $p$ considers independent
samples from an unknown distribution $q$ and declares either
$q=p$ or they are $\epsilon$-far.
The test's \emph{error probability} is the highest probability that it errs,
maximized over $q=p$ and every $q$ that is $\epsilon$-far from $p$.
Note if $p$ and $q$ are neither same nor $\epsilon$-far,
namely if $0<\norm{q-p}_1<\epsilon$, neither answer constitutes an error.

Let $\sciked$ be the smallest number of samples to identity test every $k$-element distribution with error
probability $\leq\delta$.
It can be shown that the sample complexity depends on $\delta$ mildly,
$\sciked \leq  \cO(\sciket)\cdot\log\frac1\delta$. Hence we focus on $\sciket$, denoting it by $\scike$.

This formulation was introduced by~\cite{GoldreichR00} 
who, motivated by testing graph expansion, considered
identity testing of uniform distributions.~\cite{Paninski08} showed that the sample complexity of identity testing for the uniform distributions is $\Theta(\epsilon^{-2}\sqrt{k})$. 
General identity testing was studied by~\cite{BatuFFKRW01} who
showed that $\scike\le \tcO(\epsilon^{-2}\sqrt{k})$, and 
recently~\cite{ValiantV13} proved a matching lower bound, implying that
$\scike=\Theta(\epsilon^{-2}\sqrt{k})$, where $\tcO$ 
and later $\tilTh$ and $\tilOm$, hide multiplicative logarithmic factors.

Similarly, a \emph{closeness test} takes independent samples
from $p$ and $q$ and declares them either to be the same or
$\epsilon$-far.
The test's \emph{error probability} is the highest probability that it errs,
maximized over $q=p$ and every $p$ and $q$ that are
$\epsilon$-far.
Let $\sccked$ be the smallest number of samples that suffice to
closeness test every two 
$k$-element 
distributions
with error 
probability $\leq \delta$.
Here too it suffices to consider 
$\sccke\ed\sccket$.

Closeness testing was first studied by~\cite{BatuFRSW00} who showed that
$\sccke \leq \tcO(\epsilon^{-4}k^{2/3})$.
Recently~\cite{Valiant11a,ChanDVV13}
showed that $\sccke=\Theta(\max(\epsilon^{-4/3}k^{2/3}, \epsilon^{-2} \sqrt{k}))$.

\subsection{Alternative models}
The problem's elegance, intrinsic interest, and potential
applications have led several researchers to consider scenarios where
fewer samples may suffice.
Monotone, log-concave, and $m$-modal distributions
were considered in~\cite{RubinfeldS09, DaskalakisDSVV13,  DiakonikolasKN14,ChanDSS14},
and their sample complexity was shown to decline from a polynomial in
$k$ to a polynomial in $\log k$.
For example, identity testing of
monotone distributions over $k$ elements requires
$\cO(\epsilon^{-5/2}\sqrt{\log k})$ samples, and identity testing 
log-concave distributions over $k$ elements requires
$\tcO(\epsilon^{-9/4})$ samples, independent of the support size $k$.

A competitive framework that analyzes the optimality for every
pair of distributions was considered in~\cite{AcharyaDJOPS12, ValiantV13}.
Other related scenarios include classification~\citep{AcharyaDJOPS12}, outlier detection~\citep{AcharyaJOS14b},
testing collections of distributions~\citep{LeviRR13}, testing for the class of monotone distributions~\citep{BatuKR04}, testing for the class of Poisson Binomial distributions~\citep{AcharyaD15},
testing under different distance measures~\citep{GuhaMV09, Waggoner15}.

\ignore{
Some of the connections to the related problems of classification and outlier
detection are outlined in~\cite{AcharyaDJOPS12}
and~\cite{AcharyaJOS14b} respectively. Testing properties of collections of distributions was considered in~\cite{LeviRR13}.
}

Another direction lowered the sample complexity of all distributions
by considering more powerful queries. Perhaps the most natural is the
\emph{conditional-sampling} model introduced independently
in~\cite{ChakrabortyFGM13} and~\cite{CanonneRS14},
where instead of obtaining samples from the entire support set, each query
specifies a \emph{query set} $S\subseteq[k]$ and the samples
are then selected from $S$ in proportion to their original
probability, namely element $i$ is selected with probability
\[
P_S(i)=
\begin{cases}
\frac{p(i)}{p(S)} & i \in S,\\
0 &\text{otherwise,}
\end{cases}
\]
where $p(S)$ is the probability of set $S$ under $p$.
Conditional sampling is a natural extension of sampling,
and~\cite{ChakrabortyFGM13} describes several scenarios where it may arise. Note that unlike other works in distribution testing, conditional sampling algorithms can be adaptive, \ie each query set can depend on previous queries and observed samples. It is similar in spirit to the machine learning's
popular \emph{active testing} paradigm, where additional information is interactively requested for specific domain elements.
\cite{BalcanBBY12} showed that 
various problems such as testing unions of intervals, testing linear separators
benefit significantly from the active testing model.

Let $\sccike$ and $\scccke$ be the number of samples required for
identity- and closeness-testing under conditional sampling model.
For identity testing,~\cite{CanonneRS14} showed that 
conditional sampling eliminates the dependence on $k$,
\[
\Omega(\epsilon^{-2})\le \sccike\le\tcO(\epsilon^{-4}).
\] 
For closeness testing, the same paper showed that
\[
\scccke\le\tcO(\epsilon^{-4} \log^5 k).
\]
\cite{ChakrabortyFGM13} showed that  $\sccike \leq \text{poly}(\log^*k, \epsilon^{-1})$ and designed a $\text{poly}(\log k,\epsilon^{-1})$ algorithm for testing
any label-invariant property. They also derived a $\Omega(\sqrt{\log\log k})$ lower bound for testing any label-invariant property.

An open problem posed by~\cite{Previous_authors} asked the 
sample complexity of closeness testing under conditional sampling which was partly answered by~\cite{AcharyaCK14}, who showed
\[
\scccko \geq \Omega(\sqrt{\log\log k}).
\]
\subsection{New results}
Our first result resolves the sample complexity of 
identity testing with conditional sampling.
For identity testing we show that
\[
\sccike\le\tcO(\epsilon^{-2}).
\]
Along with the information-theoretic lower bound above, this yields
\[
\sccike=\tilTh(\epsilon^{-2}).
\]
For closeness testing, we address the open problem of~\cite{Previous_authors} by
reducing the upper bound from $\log^5 k$ to $\log\log k$.
We show that
\[
\scccke
\le
\tcO\Paren{\epsilon^{-5}\log\log k}.
\]
This very mild, double-logarithmic dependence on the alphabet
size may be the first sub-poly-logarithmic growth rate of any
non-constant-complexity property and together
with the lower bound in~\cite{AcharyaCK14} shows that the 
dependence on $k$ is indeed a poly-double-logarithmic.

Rest of the paper is organized as follows. We first study identity testing in Section~\ref{sec:identity}. In Section~\ref{sec:closeness} we propose an algorithm for closeness testing. All the proofs are given in Appendix.

\section{Identity testing}
\label{sec:identity}
In the following, $p$ is a distribution over $[k]\ed\setok$, 
$p(i)$ is the probability of $i\in[k]$, $|S|$ is the
cardinality of $S\subseteq[k]$, $p_S$ is the conditional
distribution of $p$ when $S$ is queried, and $n$ is the number of samples. 
For an element $i$, $n(i)$ is used to denote the number of occurrences of $i$.

This section is organized as follows. We first motivate our identity test using restricted uniformity testing, 
a special case of identity testing.
We then highlight two important aspects of our identity test: finding a \emph{distinguishing element} $i$ and finding a 
\emph{distinguishing set} $S$. We then provide a simple algorithm for finding a distinguishing element.
As we show, finding distinguishing sets are easy for testing \emph{near-uniform} 
distributions and we give an algorithm for testing near-uniform distributions. 
We later use the near-uniform case as a subroutine for testing any general distribution.
\subsection{Example: restricted uniformity testing}
Consider the class of distributions $\cQ$, 
where each $q \in \cQ$ has $k/2$ elements with probability $(1+\epsilon)/k$, 
and $k/2$ elements with probability $(1-\epsilon)/k$.
Let $p$ be the uniform distribution, namely $p(i)=1/k$ for all $1\leq i \leq k$. Hence for every $q \in \cQ$, $\norm{p-q}_1 = \epsilon$.

We now motivate our test via a simpler \emph{restricted uniformity testing}, a
special case of identity testing where one determines
if a distribution is $p$ or if it belongs to the class $\cQ$.

If we know two elements $i,j$ such that
$q(i) = \frac{1+\epsilon}{k}> \frac{1}{k}=p(i)$ and $q(j) = \frac{1-\epsilon}{k} < \frac{1}{k} = p(j)$, it suffices to 
consider the set $S = \{i,j\}$. For this set 
\[p_S(i) = \frac{p(i)}{p(i)+p(j)} = p_S(j)= \frac{p(j)}{p(i)+p(j)} = 
\frac{1/k}{2/k}= \frac{1}{2},
\] while 
\[
q_S(i) = \frac{q(i)}{q(i)+q(j)} = \frac{(1+\epsilon)/k}{(1+\epsilon)/k + (1-\epsilon)/k} = \frac{1+\epsilon}2,
\]
 and similarly $q_S(j) =(1-\epsilon)/2$. Thus differentiating between $p_S$ and $q_S$ is same as differentiating between $B(1/2)$ and $B((1+\epsilon)/2)$ for which a simple application of the Chernoff bound shows that  $\cO(\epsilon^{-2})$ samples suffice. Thus the sample complexity is $\cO(\epsilon^{-2})$ if we knew such a set $S$.

Next consider the same class of distributions $\cQ$, but without the knowledge of elements $i$ and $j$. We can pick two elements uniformly at random from all possible ${k \choose 2}$ pairs. With probability $\geq 1/2$, 
the two elements will have different probabilities as above, and again we could determine whether root the distribution 
is uniform.
Our success probability is half the success probability when $S$ is known, but it can be increased by repeating the experiment several times and declaring the distribution to be non-uniform if one of the choices of $i$ and $j$ indicates non-uniformity.

While the above example illustrates tests for uniform distribution,
for non-uniform distributions finding elements $i,j$ can be difficult.
Instead of finding pairs of elements, we find a distinguishing element $i$ and a distinguishing set $S$
such that $q(i)<p(i)\approx p(S)<q(S)$, thus when
conditional samples from $S \cup \{i\}$ are observed,
the number of times $i$  appears would differ significantly,
and one can use Chernoff-type arguments to
differentiate between \texttt{same} and \texttt{diff}.
While previous authors have used similar methods, our main contribution is to design a 
information theoretically
near-optimal identity test.

Before we proceed to identity testing, 
we quantify the Chernoff-type arguments formally using~\textsc{Test-equal}. It takes samples from two unknown binary distributions $p,q$ (without loss of generality assume over $\{0,1\}$),  error probability $\delta$, and a parameter $\epsilon$ and
it tests if $p=q$ or $\frac{(p-q)^2}{(p+q)(2-p-q)}\geq \epsilon$.
We use the chi-squared distance $\frac{(p-q)^2}{(p+q)(2-p-q)}$ as the measure of distance instead of $\ell_1$  since it captures the dependence on sample complexity more accurately. For example, consider two scenarios:
$p,q = B(1/2),B(1/2+\epsilon/2)$ or $p,q = B(0),B(\epsilon/2)$. In both cases $\norm{p-q}_1 = \epsilon$, but the number of samples required to distinguish $p$ and $q$ in the first case is  $\cO(\epsilon^{-2})$, while in the second case $\cO(\epsilon^{-1})$ suffice. 
However, chi-squared distance correctly captures 
the sample complexity as in the first case it is $\cO(\epsilon^2)$ 
and in the second case it is $\cO(\epsilon)$.  
While several other simple hypothesis tests exist, the algorithm below has near-optimal sample complexity in terms of $\epsilon,\delta$.
\begin{center}
\fbox{\begin{minipage}{1.0\textwidth}
Algorithm \textsc{Test-equal} \newline
\textbf{Input:} chi-squared bound $\epsilon$, error $\delta$, distributions $B(p)$ and $B(q)$.\newline
\textbf{Parameters:} $n=\cO({1}/{\epsilon})$.
\newline
Repeat $18 \log \frac{1}{\delta}$ times and output the majority:
\begin{enumerate}
\item
Let $n' = \poi(n)$ and $n'' = \poi(n)$ be two independent Poisson variables with mean $n$.
\item
Draw samples $x_1,x_2\ldots x_{n'}$ 
from the first distribution  and 
$y_1,y_2\ldots y_{n''}$ from the second one.
\item
Let $n_1 = \sum_{i=1}^{n'} x_i$ and $n_2 = \sum_{i=1}^{n''} y_i$.
\item 
If 
$\frac{(n_1 -n_2)^2 - n_1-n_2}{n_1+n_2-1} 
+
\frac{(n_1 -n_2)^2 - n_1-n_2}{n'+n'' - n_1-n_2-1} 
\leq \frac{n\epsilon}{2}$
then output \texttt{same}, else \texttt{diff}.
\end{enumerate}
\end{minipage}}
\end{center} 
\begin{Lemma}[Appendix~\ref{app:test_equal}]
\label{lem:test_equal}
If $p=q$, then \textsc{Test-equal} outputs \texttt{same}
with probability $1-\delta$.
If $\frac{(p-q)^2}{(p+q)(2-p-q)} \geq \epsilon$, it outputs \texttt{diff}
with probability $\geq 1- \delta$. 
Furthermore the algorithm uses $\cO \bigl(\frac{1}{\epsilon} \cdot \log \frac{1}{\delta} \bigr)$ samples.
\end{Lemma}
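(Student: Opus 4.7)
The plan is to leverage Poissonization. Since $n' \sim \poi(n)$ and each $x_i$ is Bernoulli$(p)$, the counts $n_1 \sim \poi(np)$ and $n' - n_1 \sim \poi(n(1-p))$ are independent Poissons, with an analogous decomposition for $n_2$. Consequently the test statistic decomposes as $T = T_1 + T_2$, where $T_1 = [(n_1-n_2)^2 - n_1 - n_2]/(n_1+n_2-1)$ depends only on the independent Poissons $X \sim \poi(np),\ Y \sim \poi(nq)$, and $T_2$ is an independent instance of the same functional applied to $\poi(n(1-p)),\poi(n(1-q))$. Each summand is a two-sample chi-squared type statistic in the style of Chan--Diakonikolas--Valiant--Valiant.

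First I would compute the expectation of $T_1$. Conditioning on $X+Y = m$, the variable $X$ is $\Bin(m, a/(a+b))$ with $a = np,\ b = nq$, and a direct second-moment calculation gives
\[
\EE\bigl[\,(X-Y)^2 - X - Y \,\bigm|\, X+Y = m\bigr] \;=\; \frac{m(m-1)(a-b)^2}{(a+b)^2}.
\]
The $(m-1)$ factor exactly cancels the $-1$ in the denominator of $T_1$, so averaging over $m \sim \poi(a+b)$ yields $\EE[T_1] \approx (a-b)^2/(a+b) = n(p-q)^2/(p+q)$ (up to a $1 - e^{-(a+b)}$ correction that is negligible whenever $a+b$ is not tiny). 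A symmetric computation gives $\EE[T_2] \approx n(p-q)^2/(2-p-q)$, so
\[
\EE[T] \;\approx\; \frac{2n(p-q)^2}{(p+q)(2-p-q)},
\]
which is exactly $0$ when $p=q$ and at least $2n\epsilon$ under the alternative. Taking $n$ a sufficiently large constant multiple of $1/\epsilon$ places $\EE[T]$ a constant factor above the threshold $n\epsilon/2$ in the alternative case.

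The next step is variance control. Standard Poisson moment identities give $\Var((X-Y)^2 - X - Y) = \cO((a+b)^2 + (a-b)^2(a+b))$; conditioning on $X+Y = m$ once more handles the fluctuation of the denominator. Combining the two pieces yields $\Var(T) = \cO(1 + \EE[T])$, so Chebyshev's inequality produces a constant success probability per run at $n = \Theta(1/\epsilon)$. A standard Chernoff bound applied to the majority of $\Theta(\log(1/\delta))$ independent runs then boosts this to $1 - \delta$, giving the claimed $\cO(\epsilon^{-1} \log(1/\delta))$ sample complexity.

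The main obstacle is the random denominator $n_1 + n_2 - 1$, which can vanish or be negative when counts are very small. The cleanest resolution is the exact cancellation above: after conditioning on $X + Y = m$, the $(m-1)$ factor in the conditional mean of the numerator kills the $-1$ in the denominator, producing a well-defined closed form for $m \geq 2$; the values $m \in \{0,1\}$ contribute a bounded bias that is absorbed into the threshold gap. Under the alternative, the chi-squared lower bound forces $n(p+q)$ to be at least a large constant, so $\{X+Y \leq 1\}$ is not a pathological event, and the symmetric term $T_2$ takes care of the regime where $p, q$ are both close to one. No separate case analysis beyond the generic variance bound is required.
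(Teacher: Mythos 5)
Your proposal is correct and takes essentially the same route as the paper: Poissonize, split the statistic into two independent chi-squared--type terms, bound the mean and variance of each, apply Chebyshev for a constant-probability single-run guarantee at $n = \Theta(1/\epsilon)$, and amplify by Chernoff over $\Theta(\log(1/\delta))$ repetitions. The only difference is cosmetic: you re-derive the mean of the $\frac{(X-Y)^2 - X - Y}{X+Y-1}$ statistic by conditioning on $X+Y=m$ (and assert the variance bound via Poisson moments), whereas the paper simply quotes both the mean and variance formulas from Lemma~\ref{lem:var} and also uses the observation $p+q \geq \epsilon$, $2-p-q\ge\epsilon$ to control the $1-e^{-\lambda-\lambda'}$ factor.
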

\ignore{
\begin{Lemma}
\label{lem:majority}
Given a value $p$ and $n=11\frac{\log\frac2\epsilon}{p}$ samples
from $\cH_0 = B(p_1)$ or $\cH_1 = B(p_2)$ such that 
$p_1 \leq p \leq p_2/2$, with probability $\geq 1-\epsilon$, one can 
correctly identify the underlying hypothesis.
\end{Lemma}
\begin{proof}
Let $X_1,X_2\ldots X_n$ be the observed samples.
Suppose a test that outputs $\cH_0$ if $\sum_{i=1}^n X_i <\frac32 np$
and $\cH_1$ otherwise.
If Hypothesis $\cH_1$ is true, after using Lemma~\ref{lem:eqchern},
with probability $>1-\epsilon$,
\[
 p_2-\frac{\sum_{i=1}^n X_i}n \leq \sqrt{ \frac{2p_2(1-p_2)}{n} \log \frac{2}{\epsilon}} + \frac{2}{3} \frac{\log \frac{2}{\epsilon}}{n}.
\]
By substituting $n$ in RHS and simplifying the result
we get $\frac{\sum_{i=1}^n X_i}n  > \frac32 p$ hence our 
test declare $\cH_1$ with probability $>1-\epsilon$.
Similarly, if hypothesis $\cH_0$ is true then
$\frac{\sum_{i=1}^n X_i}n  <\frac32 p $ with probability $>1-\epsilon$
and our test declare $\cH_0$. Hence the lemma is proved.
\end{proof}
}
\subsection{Finding a distinguishing element $i$}
We now give an algorithm to find
 an element $i$ such that $p(i) > q(i)$.
In the above mentioned example, we could find such an element with probability $\geq1/2$, by randomly selecting $i$ out of all elements.
However, for some distributions, this probability is much lower.
For example consider the following distributions $p$ and $q$.
$p(1)=\epsilon/2$,
$p(2)=0$,
$p(i)=\frac{1-\epsilon/2}{k-2}$ for $i\ge2$,
and
$q(1)=0$,
$q(2)=\epsilon/2$,
$q(i)=\frac{1-\epsilon/2}{k-2}$ for $i\ge 2$.
Again note that $\norm{p-q}_1 = \epsilon$.
If we pick $i$ at random, the chance that $p(i) > q(i)$ is $1/k$,
very small for our purpose. A better way of selecting $i$ would be sampling according to $p$ itself. For example, the probability of finding an element $i$ such that $p(i) > q(i)$ when sampled from $p$ is $\epsilon/2 \gg 1/k$.

We quantify the above idea next by using the following simple algorithm 
that picks elements such that $p(i) > q(i)$. We first need the following definition. Without loss of generality assume that the elements are ordered such that $p(1) \geq p(2) \geq p(3) \ldots \geq p(k)$.
\begin{Definition}
For a distribution $p$, element $i$ is $\alpha$-heavy, if $
\sum_{i': i' \geq i} p(i') \geq \alpha$.
\end{Definition}
\ignore{\begin{Definition}
For a distribution pair $p,q$, an element $i$ is $\beta$-far, if $p(i) - q(i) \geq \beta p(i)$.
\end{Definition}}
\ignore{
$\beta$-far quantifies how far $p(i)$ and $q(i)$ are. We also need another definition, if $p$ is one of the relatively heavy elements. The usage of this definition becomes clear in the proofs.}

As we show in proofs, symbols that are heavy ($\alpha$ large)
can be used as distinguishing symbols easily and hence our goal is to choose symbols such that $p(i) > q(i)$ and $i$ is $\alpha$-heavy for a large value of $\alpha$. 
To this end, first consider an auxiliary result that shows if for some non-negative values $a_i$, $\sum_{i} p(i) a_i > 0$, then the following sampling algorithm will pick an element $x_i$ such that $x_i$ is $\alpha_i$-heavy and $a_{x_i} \geq \beta_i$. While several other algorithms have similar properties, the following algorithm achieves a good trade-between $\alpha$ and $\beta$ (one of the tuples satisfy $\alpha \beta = \tilde \Omega(1)$), hence
 is useful in achieving near-optimal sample complexity.

\begin{center}
\fbox{\begin{minipage}{1.0\textwidth}
Algorithm \textsc{Find-element} \newline
\textbf{Input:} Parameter $\epsilon$, distribution $p$. \newline
\textbf{Parameters:} $m = 16/\epsilon$, $\beta_j = j\epsilon/8 $, $\alpha_j = 1/(4j \log (16/\epsilon)) $.
\begin{enumerate}
\item Draw $m$ independent samples $x_1,x_2\ldots x_m$ from $p$.
\item Output tuples $(x_1,\beta_1,\alpha_1), (x_2, \beta_2,\alpha_2),\ldots ,(x_m,\beta_m,\alpha_m)$.
\end{enumerate}
\end{minipage}}
\end{center} 
\begin{Lemma}[Appendix~\ref{app:find_far}]
\label{lem:find_far}
For $1 \leq i \leq k$, let $a_i$ be such that $0 \leq a_i \leq 2$. 
If $\sum^k_{i=1} p_ia_i \geq \epsilon/4$, then with probability $\geq 1/5$, at least one tuple $(x,\alpha,\beta)$ returned by \textsc{Find-element}$(\epsilon,p)$ satisfy the property that 
$x$ is $\alpha$-heavy and $a_x \geq \beta$.
Furthermore it uses $16/\epsilon$ samples.
\end{Lemma}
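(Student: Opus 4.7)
The algorithm draws $x_1,\dots,x_m$ i.i.d.\ from $p$, so the events $E_j\ed\{x_j\text{ is }\alpha_j\text{-heavy and }a_{x_j}\ge\beta_j\}$ are mutually independent. Writing $G_j=\{i:h_i\ge\alpha_j,\ a_i\ge\beta_j\}$ with $h_i\ed\sum_{i'\ge i}p(i')$ denoting the heaviness of $i$, this gives
\[
\Pr\Bigl[\bigcup_{j=1}^m E_j\Bigr]=1-\prod_{j=1}^m(1-p(G_j))\;\ge\;1-\exp\Bigl(-\sum_{j=1}^m p(G_j)\Bigr),
\]
so the plan is to show $\sum_j p(G_j)\ge c_0$ for an absolute constant $c_0$; any $c_0\ge\ln(5/4)\approx 0.23$ yields the claimed $\ge 1/5$. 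By Fubini, $\sum_j p(G_j)=\EE_{x\sim p}[N_x]$, where $N_x$ is the number of $j\in[m]$ with $\alpha_j\le h_x$ and $\beta_j\le a_x$. Substituting $\alpha_j=1/(4jR)$ and $\beta_j=j\epsilon/8$ for $R\ed\log(16/\epsilon)$, $N_x$ counts integers in $[1/(4h_xR),\,8a_x/\epsilon]$, which sits inside $[1,m]$ since $a_x\le 2$.

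The crucial structural point is that $\alpha_j\beta_j\equiv\epsilon/(32R)$ is \emph{independent of $j$}: the $m$ thresholds trace a single hyperbola in the $(h,a)$-plane, so any element $i$ with $a_i h_i\gtrsim\epsilon/R$ activates an index-interval of length proportional to $a_i/\epsilon$. I would make this precise through a dyadic decomposition. Partition $[k]$ into heaviness levels $L_r=\{i:2^{-r-1}<h_i\le 2^{-r}\}$, exploiting $\sum_{i\in L_r}p(i)\le 2^{-r}$. Elements in the tail $\{h_i\le 2^{-R}=\epsilon/16\}$ have total $p$-mass at most $\epsilon/16$ and thus contribute at most $\epsilon/8$ to $\sum_ip(i)a_i$. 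For each in-scope level $L_r$ ($0\le r<R$), the ``bad'' elements with $a_i h_i<c\epsilon/R$ satisfy $a_i<2c\epsilon\cdot 2^r/R$, so $\sum_{L_r\cap\mathrm{bad}}p(i)a_i\le 2c\epsilon/R$ and the $R$ levels sum to $2c\epsilon$. Choosing $c$ a sufficiently small absolute constant leaves the ``good'' elements $\{i:h_i>\epsilon/16,\ a_i h_i\ge c\epsilon/R\}$ accounting for $\Omega(\epsilon)$ of the total $\sum_i p(i)a_i\ge\epsilon/4$.

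For each good $i$, the hyperbolic inequality yields $1/(4h_i R)\le a_i/(4c\epsilon)$, which is strictly below the upper endpoint $8a_i/\epsilon$ by a constant factor, so the real interval length is $\Omega(a_i/\epsilon)$ and after converting to an integer count $N_i\ge\Omega(a_i/\epsilon)$. Summing,
\[
\sum_{j=1}^m p(G_j)=\sum_i p(i)N_i\;\ge\;\Omega(\epsilon^{-1})\sum_{i\text{ good}}p(i)a_i\;=\;\Omega(1),
\]
which is the desired constant lower bound. The sample count $m=16/\epsilon$ is immediate by inspection of Step~1. The main technical obstacle is purely numerical: the three constants---the tail cutoff, the goodness threshold $c$, and the hyperbolic slack between the two endpoints---must be tuned tightly enough that (i) the good mass exceeds an explicit $\Omega(\epsilon)$ lower bound, (ii) the $\pm 1$ loss from the integer count in $N_i$ does not destroy the lower bound for good elements with small $a_i$ (handled by splitting off those with $a_i\lesssim\epsilon$ and using $N_i\ge 1$ on the boundary), and (iii) the resulting $c_0$ clears $\ln(5/4)$.
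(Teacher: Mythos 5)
Your high-level framing is right — lower-bound $\sum_j \Pr(E_j)$ and invoke $1-\prod(1-\Pr(E_j))\ge 1-e^{-\sum\Pr(E_j)}$ — and the Fubini exchange $\sum_j p(G_j)=\sum_i p(i)N_i$ is sound. But your route for bounding $\sum_i p(i)N_i$ diverges from the paper's and has a real gap. The paper \emph{decouples} the two constraints: it defines $B_j=\{i:a_i\ge\beta_j\}$ (ignoring heaviness) and observes $\Pr(A_j)\ge p(B_j)-\alpha_j$, since the set of non-$\alpha_j$-heavy elements has $p$-mass $<\alpha_j$. The heaviness cost is then $\sum_j\alpha_j=\tfrac{1}{4\log(16/\epsilon)}\sum_{j=1}^m\tfrac1j\le\tfrac14$, an \emph{absolute} constant, while $\sum_j p(B_j)=\sum_i p(i)\lfloor 8a_i/\epsilon\rfloor\ge\tfrac12$ follows from $\sum_{a_i\ge\epsilon/8}p(i)a_i\ge\epsilon/8$ and $\lfloor 8a_i/\epsilon\rfloor\ge 4a_i/\epsilon$ on that range. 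This gives $\sum_j\Pr(A_j)\ge 1/4$ cleanly. Your approach keeps the two constraints coupled inside $N_i$ and pays three separate losses against the same $\epsilon/4$ budget.

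The accounting does not close. Your tail $\{h_i\le 2^{-R}\}$ (with your choice $2^{-R}=\epsilon/16$) can carry up to $\epsilon/8$ of $\sum_i p(i)a_i$, and the bad set costs another $2c\epsilon$. But there is a third, unavoidable loss you cannot absorb: every element with $a_i<\beta_1=\epsilon/8$ has $8a_i/\epsilon<1$, hence $N_i=0$ regardless of heaviness, yet these elements can account for up to $\epsilon/8$ of $\sum_i p(i)a_i$ and they are \emph{disjoint} from the tail. Together tail $+$ low-$a_i$ already reach $\epsilon/4$, so the ``good'' mass can be zero before you subtract the bad set — and the claimed fix ``$N_i\ge 1$ on the boundary'' fails precisely because $N_i=0$ below $\beta_1$. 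Worse, for $a_i$ just above $\epsilon/8$ the integer count is fragile: one needs $\lceil 1/(4h_iR)\rceil\le\lfloor 8a_i/\epsilon\rfloor$, which for $c$ near $1/32$ (the degenerate point of your hyperbola) need not hold, and pushing $c$ up to make the interval robustly nonempty (roughly $c\gtrsim 1/16$) inflates the bad loss to $\ge\epsilon/8$, again exhausting the budget. In short, the constants in (i)–(iii) of your closing paragraph cannot be tuned to clear $\ln(5/4)$; the obstruction is structural, not merely numerical. The paper's decoupling sidesteps the entire issue because the $1/4$ heaviness loss is not charged against the $\epsilon$-scale mass $\sum_i p(i)a_i$, only against the probability total $\sum_j p(B_j)\ge 1/2$.
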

We now use the above lemma to pick elements such that $p(i) > q(i)$. Since $\norm{p-q}_1 \geq \epsilon$, 
\[
\sum_{i : p(i) \geq q(i)} \left( p(i) - q(i) \right) \geq \epsilon/2.
\]
Hence 
\[
\sum_{i} p(i)  \max \left( 0,\frac{p(i) - q(i)}{p(i)} \right) \geq \frac{\epsilon}{2}.
\]
Applying Lemma~\ref{lem:find_far} with $a_i = \max \left( 0,\frac{p(i) - q(i)}{p(i)} \right)$, yields
\begin{Lemma}
\label{lem:pickgood}
If $\norm{p-q}_1 \geq \epsilon$, then with probability $\geq 1/5$ at least one of the tuple  $(i,\beta,\alpha)$ returned by \textsc{Find-element}$(\epsilon,p)$ satisfies $p(i) - q(i) \geq \beta p(i)$ and $i$ is $\alpha$-heavy. Furthermore \textsc{Find-element} uses $16/\epsilon$ samples.
\end{Lemma}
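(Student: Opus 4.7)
The plan is to recognize that Lemma~\ref{lem:pickgood} is simply Lemma~\ref{lem:find_far} instantiated for a particular choice of $a_i$, so the task reduces to (i) choosing the right $a_i$, (ii) verifying its range, and (iii) lower bounding $\sum_i p(i) a_i$ from $\|p-q\|_1 \ge \epsilon$.

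First I would set
\[
a_i \;=\; \max\!\left(0,\;\frac{p(i) - q(i)}{p(i)}\right)
\]
for every $i$ with $p(i) > 0$ (and $a_i = 0$ otherwise). Since $q(i) \ge 0$, we have $a_i \le 1 \le 2$, and $a_i \ge 0$ by construction, so the hypothesis $0 \le a_i \le 2$ required by Lemma~\ref{lem:find_far} holds.

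Next I would verify the weighted sum bound. Using that $p$ and $q$ are probability distributions,
\[
\|p - q\|_1 \;=\; 2 \sum_{i : p(i) \ge q(i)} \bigl(p(i) - q(i)\bigr) \;\ge\; \epsilon,
\]
so $\sum_{i: p(i)\ge q(i)} (p(i) - q(i)) \ge \epsilon/2$. Rewriting this sum as $\sum_i p(i)\, a_i$ (the terms with $p(i) < q(i)$ contribute zero by definition of $a_i$) gives
\[
\sum_{i=1}^{k} p(i)\, a_i \;\ge\; \frac{\epsilon}{2} \;\ge\; \frac{\epsilon}{4},
\]
which is exactly the hypothesis of Lemma~\ref{lem:find_far}.

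Finally I would invoke Lemma~\ref{lem:find_far} on $(p, \epsilon, a_1,\dots,a_k)$: with probability at least $1/5$, the $m = 16/\epsilon$ samples produced by \textsc{Find-element}$(\epsilon,p)$ yield at least one tuple $(x,\beta,\alpha)$ such that $x$ is $\alpha$-heavy and $a_x \ge \beta$. Since $\beta > 0$, the inequality $a_x \ge \beta$ forces $a_x = (p(x)-q(x))/p(x)$ (rather than $0$), hence $p(x) - q(x) \ge \beta\, p(x)$, which is the desired conclusion. The sample cost $16/\epsilon$ is immediate from \textsc{Find-element}'s definition. There is no substantial obstacle here: the only subtlety is the factor of $2$ that appears when converting $\|p-q\|_1$ into the one-sided deviation $\sum_{p(i)\ge q(i)}(p(i)-q(i))$, and checking that the $\max$ with $0$ does not affect the weighted sum.
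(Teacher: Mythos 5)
Your proof is correct and follows essentially the same route the paper takes: instantiate Lemma~\ref{lem:find_far} with $a_i = \max\left(0,\frac{p(i)-q(i)}{p(i)}\right)$, check $0\le a_i\le 1\le 2$, derive $\sum_i p(i)a_i \ge \epsilon/2 \ge \epsilon/4$ from $\|p-q\|_1\ge\epsilon$, and read off the conclusion from the guarantee that $a_x\ge\beta>0$. The one small addition you make — explicitly writing $\|p-q\|_1 = 2\sum_{i:p(i)\ge q(i)}(p(i)-q(i))$ using normalization — is a fine justification of the paper's unexplained $\ge\epsilon/2$ step.
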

Note that even though the above algorithm does not use distribution $q$, it finds $i$ such that $p(i) - q(i) \geq \beta p(i)$
just by the properties of $\ell_1$ distance.
Furthermore, $\beta_j$ increases with $j$ and $\alpha_j$ decreases with $j$; thus the above lemma states that the algorithm finds an element $i$  such that either $(p(i) -q(i))/p(i)$ is large, but may not be heavy, or $(p(i) -q(i))/p(i)$ is small, yet it belongs to one of the higher probabilities. This precise trade-off becomes important to bound the sample complexity.
\subsection{Testing for near-uniform distributions}
We define a distribution $p$ to be \emph{near-uniform} if
$\max_i p(i) \leq 2 \min_i p(i)$. Recall that we need to find a distinguishing element and a distinguishing set. As we show, for near-uniform distributions, there are singleton distinguishing sets and hence are easy to find.
Using \textsc{Find-element}, we first define a meta algorithm
to test for near-uniform distributions. The inputs to the algorithm
are parameter $\epsilon$, error $\delta$, distributions $p,q$ and an element $y$ such that $p(y) \geq q(y)$.  Since we use \textsc{Near-uniform-identity-test} as a subroutine later,  
$y$ is given from the main algorithm. However, if we  want to use \textsc{Near-uniform-identity-test} by itself,
 we can find a $y$ using \textsc{Find-element}$(\epsilon,p)$.

The algorithm uses \textsc{Find-element} to find an element $x$ such that
$q(x) - p(x) \geq \beta q(x)$. Since $p(y) \geq q(y)$ and 
$q(x) - p(x) \geq \beta q(x)$, running \textsc{Test-equal} on set 
$\{x,y\}$ will yield an algorithm for identity testing.
The precise bounds in Lemmas~\ref{lem:test_equal} and~\ref{lem:find_far} help  us to obtain the optimal sample complexity.
In particular,
\begin{Lemma}[Appendix~\ref{app:uniformidentitytesting}]
\label{lem:uniformidentitytesting}
If $p=q$, then \textsc{Near-uniform-identity-test} returns \texttt{same} with probability $\geq 1-\delta$.
If $p$ is near-uniform and $\norm{p-q}_1 \geq \epsilon$, 
then
\textsc{Near-uniform-identity-test} returns \texttt{diff}
with probability $\geq 1/5-\delta$. The algorithm uses $\cO \left(\frac{1}{\epsilon^2} \cdot\log \frac{1}{\delta\epsilon}\right)$ samples.
\end{Lemma}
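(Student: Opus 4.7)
The plan is to establish both correctness claims and the sample complexity bound by carefully tracking the guarantees of the two subroutines invoked by \textsc{Near-uniform-identity-test}. The algorithm runs \textsc{Find-element}$(\epsilon,q)$ to obtain $m=16/\epsilon$ tuples $(x_j,\beta_j,\alpha_j)$, and for each tuple it invokes \textsc{Test-equal} on the conditional distributions of $p$ and $q$ restricted to $S_j=\{x_j,y\}$ with chi-squared bound $\epsilon_j=\Theta(\beta_j^2)$ and per-test error $\delta'=\delta/m$. If any subtest returns \texttt{diff}, the algorithm outputs \texttt{diff}; otherwise \texttt{same}.

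For completeness ($p=q$): for every $j$ the induced Bernoulli parameters satisfy $p_{S_j}(x_j)=q_{S_j}(x_j)$, so Lemma~\ref{lem:test_equal} guarantees each \textsc{Test-equal} call outputs \texttt{same} with probability $\geq 1-\delta'$. A union bound over the $m$ calls gives $1-m\delta'=1-\delta$. For soundness ($p$ near-uniform and $\|p-q\|_1\geq\epsilon$): apply Lemma~\ref{lem:pickgood} with $q$ in the role of $p$; with probability $\geq 1/5$, \textsc{Find-element}$(\epsilon,q)$ produces a \emph{good} tuple $(x,\beta,\alpha)$ satisfying $q(x)-p(x)\geq\beta q(x)$ and $x$ being $\alpha$-heavy with respect to $q$.

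The central calculation is to show that this good tuple induces chi-squared distance $\Omega(\beta^2)$ between the Bernoulli parameters $\pi=p(x)/(p(x)+p(y))$ and $\theta=q(x)/(q(x)+q(y))$. Writing $a=p(x),b=p(y),c=q(x),d=q(y)$, I rewrite $\theta-\pi=(bc-ad)/((a+b)(c+d))$ and bound the numerator by $bc-ad\geq b(c-a)\geq\beta bc$ using $c-a\geq\beta c$ together with $b\geq d$ (which follows from $p(y)\geq q(y)$). Near-uniformity of $p$ gives $b/2\leq a\leq 2b$, hence $a+b\leq 3b$ and $c>a\geq b/2$. A short case split on whether $c\leq b$ or $c>b$ then yields $\theta-\pi\geq\beta/12$ in both cases. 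Since $(\pi+\theta)(2-\pi-\theta)\leq 1$, the chi-squared distance is at least $\beta^2/144$, so \textsc{Test-equal} with bound $\beta_j^2/144$ detects the good tuple with probability $\geq 1-\delta'$, and the algorithm outputs \texttt{diff} with probability $\geq (1/5)(1-\delta')\geq 1/5-\delta$.

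For the sample budget, \textsc{Test-equal} with chi-squared bound $\beta_j^2/144$ and error $\delta'$ uses $\cO(\log(1/\delta')/\beta_j^2)$ samples. Substituting $\beta_j=j\epsilon/8$ and summing over $j=1,\dots,16/\epsilon$ gives a convergent $\sum_j 1/j^2$, so the total is $\cO(\epsilon^{-2}\log(1/\delta'))$; choosing $\delta'=\delta\epsilon/16$ (as needed for the completeness union bound) yields $\cO(\epsilon^{-2}\log(1/(\delta\epsilon)))$, plus the negligible $16/\epsilon$ samples consumed by \textsc{Find-element}. The main obstacle is the chi-squared calculation: one has to verify that the bounds $\theta-\pi\geq\beta/12$ and $(\pi+\theta)(2-\pi-\theta)=O(1)$ together give $\Omega(\beta^2)$, so that summing $1/\beta_j^2$ against the quadratic-decay schedule of $\beta_j$ exactly reproduces the $\epsilon^{-2}$ rate rather than picking up an extra logarithmic factor. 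Note that $\alpha$-heaviness plays no role in this near-uniform argument (the structural assumption on $p$ already pins down $b/2\leq a\leq 2b$), but it will be essential when this routine is later used as a subroutine for general distributions.
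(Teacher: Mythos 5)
Your proof is correct and follows the same high-level architecture as the paper's: use \textsc{Find-element}$(\epsilon,q)$ to obtain, with probability $\geq 1/5$, a tuple $(x,\beta,\alpha)$ with $q(x)-p(x)\geq\beta q(x)$; lower bound the chi-squared distance between $p_{\{x,y\}}$ and $q_{\{x,y\}}$ by $\beta^2/144$; invoke \textsc{Test-equal}; union bound for the $p=q$ case; and sum $\cO\bigl(\beta_j^{-2}\log(1/\delta')\bigr)$ over the quadratic schedule $\beta_j=j\epsilon/8$. The genuine difference is in how you reach the chi-squared bound. You write $\theta-\pi=(bc-ad)/\bigl((a+b)(c+d)\bigr)$ with $a=p(x),b=p(y),c=q(x),d=q(y)$, get $bc-ad\geq b(c-a)\geq\beta bc$ from $d\leq b$ and $a\leq(1-\beta)c$, then split on $c\leq b$ vs.\ $c>b$ (using near-uniformity $b/2\leq a\leq 2b$ and $c>a$) to conclude $\theta-\pi\geq\beta/12$, and finish with $(\pi+\theta)(2-\pi-\theta)\leq 1$. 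The paper instead invokes the generic Lemma~\ref{lem:chilow} with $\epsilon=\beta q(x)/(p(x)+q(x))$ and grinds through a chain of substitutions using the same facts $p(x)\leq 2p(y)$, $p(y)\leq 2p(x)$, $q(y)\leq p(y)$, $p(x)\leq q(x)$. Both routes land on $\beta^2/144$; yours is more elementary and easier to audit, while the paper's factors out Lemma~\ref{lem:chilow} because it is reused in Theorem~\ref{thm:identity} and Lemma~\ref{lem:diff_1}. One cosmetic discrepancy: the algorithm actually sets the per-test error to $6\delta/(\pi^2 j^2)$ rather than your uniform $\delta'=\delta/m$, but either choice gives the same union bound and the same $\cO(\epsilon^{-2}\log(1/(\delta\epsilon)))$ sample complexity. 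Your closing remark that $\alpha$-heaviness is unused in the near-uniform argument is also accurate.
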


\begin{center}
\fbox{\begin{minipage}{1.0\textwidth}
Algorithm \textsc{Near-uniform-identity-test} \newline
\textbf{Input:} distance $\epsilon$, error $\delta$, distributions $p,q$, 
an element $y$ such that $p(y)\ge q(y)$.
\begin{enumerate}
\item 
Run \textsc{Find-element}$(\epsilon,q)$ to obtain tuples $(x_j,\beta_j,\alpha_j)$ for $1 \leq j \leq 16/\epsilon$.
\item
For every tuple $(x_j,\beta_j,\alpha_j)$, run \textsc{Test-equal}$(\beta^2_j/144, 6\delta/(\pi^2j^2), p_{\{x,y\}},q_{\{x,y\}})$.
\item Output \texttt{same} if \textsc{Test-equal}
in previous step returns \texttt{same} for all tuples,
 otherwise output \texttt{diff}.
\end{enumerate}
\end{minipage}}
\end{center} 

\subsection{Finding a distinguishing set for general distributions}
We now extend \textsc{Near-uniform-identity-test} to general distributions.
Recall that we need to find a distinguishing element and a distinguishing set.

Once we have an element such that $p(i) > q(i)$, our  objective is  to find a 
distinguishing set $S$ such that $p(S) < q(S)$ and $p(S) \approx p(i)$. 
Natural candidates for such sets are combinations of elements whose probabilities $\leq p(i)$. Since $p$ is known, we can select such sets easily. 
Let $G_i = \{j: j \geq i\}$.
Consider the sets $H_1,H_2,\ldots$ formed by combining 
elements in $G_i$ such that $p(i) \leq p(H_j) \leq 2p(i), \forall j$. 
We ideally would like to use one of these $H_j$s as $S$, however
depending on the values of $p(H_j)$ 
three possible scenarios arise and that constitutes the main algorithm.

We need one more definition for describing the main identity test. For any distribution $p$,
 and a partition of $S$ into disjoint subsets $\cS = \{S_1,S_2,..\}$, the induced distribution $p^{\cS}$ 
is a distribution over $S_1, S_2,\ldots $ such that $\forall i, \, p^{\cS}_{S}(S_i) = \frac{p(S_i)}{p(S)}$.
\ignore{Note that induced distributions follow triangle inequality, \ie 
for any partition $\cS$ over $S$, $\norm{p^{\cS}_S - q^{\cS}_S}_1 \leq
\norm{p_S - q_S}_1$.}

\subsection{Proposed identity test}
The algorithm is a combination of tests for each possible scenarios.
First it finds a set of tuples $(i,\beta,\alpha)$ such that one tuple satisfies $(p(i)-q(i))/p(i) \geq \beta$ and $i$ is $\alpha$-heavy.
Then, it divides $G_i$ into $H_1,H_2,\ldots$ such that $, \forall j, \,p(i) \leq p(H_j) \leq 2p(i)$. 
If $\norm{p-q}_1 \geq \epsilon$, then there are three possible cases. 
\begin{enumerate}
\item
$p(H_j)(1-\beta/2) \leq q(H_j)$ for most $j$s.
We can randomly pick a set $H_j$ 
and sample from $H_j \cup \{i\}$ and  we would be able to test if $\norm{p-q}_1 \geq \epsilon$
using $n(i)$ when sampled from $H_j \cup\{i\}$.
\item
$p(H_j)(1-\beta/2) \geq q(H_j)$ for most $j$.
Since for most $j$'s, $p(H_j)(1-\beta/2) \geq q(H_j)$,
we have $p(G_i) (1-\beta/2) \geq q(G_i)$, and
since $p(G_i)\geq \alpha$, we can sample from the entire distribution
and use $n(G_i)$ to test if $\norm{p-q}_1 \geq \epsilon$.
\item 
For some $j$, $p(H_j)(1-\beta/2) \geq q(H_j)$ and 
for some $j$, $p(H_j)(1-\beta/2) \leq q(H_j)$.
It can be shown that 
this condition implies that 
elements in $G_i$ can be 
grouped into $H_1,H_2,\ldots$ such that
induced distribution on groups is near-uniform and yet the 
$\ell_1$ distance between the induced distributions is large. We use \textsc{Near-uniform-identity-test} 
for this scenario.
\end{enumerate}
The algorithm has a step corresponding to each of the above three scenarios.
If $p=q$, then all three steps would output \texttt{same} with high probability, otherwise one of the steps would output~\texttt{diff}.
The main result of this section is to bound the sample complexity of \textsc{Identity-test}
\begin{Theorem}[Appendix~\ref{app:thmidentity}]
\label{thm:identity}
If $p=q$, then \textsc{Identity-test} returns \texttt{same} with probability $\geq 1-\delta$
and if $\norm{p-q}_1 \geq \epsilon$, then \textsc{Identity-test} returns \texttt{diff}
with probability $\geq 1/30$.
The algorithm uses at most 
$\sccike \leq \Theta\left( \frac{1}{\epsilon^2} \cdot \log^2 \frac{1}{\epsilon} \cdot  \log \frac{1}{\epsilon\delta} \right)$ samples.
\end{Theorem}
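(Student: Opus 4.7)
The proof splits along the usual two directions. For the \emph{completeness} side ($p=q$), I would show that each of the three sub-routines that \textsc{Identity-test} invokes on the candidate tuples $(x_j,\beta_j,\alpha_j)$ returned by \textsc{Find-element} is an instance of \textsc{Test-equal} (directly, or on an induced distribution, or inside \textsc{Near-uniform-identity-test}) with failure probability budgeted as $\cO(\delta/j^2)$. A union bound against $\sum_{j\ge 1}1/j^2 = \pi^2/6$ then gives total failure at most $\delta$ and an output of \texttt{same}. This part is routine.

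For the \emph{soundness} side ($\|p-q\|_1\ge\epsilon$), the first step is to invoke Lemma~\ref{lem:pickgood}: with probability $\ge 1/5$, at least one tuple $(i,\beta,\alpha)$ produced by \textsc{Find-element}$(\epsilon,p)$ satisfies $p(i)-q(i)\ge\beta p(i)$ and $i$ is $\alpha$-heavy. Condition on such a good tuple. Form the blocks $H_1,H_2,\ldots$ out of $G_i=\{j\ge i\}$ with $p(i)\le p(H_j)\le 2p(i)$ and split into the three scenarios listed before the theorem, according to whether $p(H_j)(1-\beta/2)\le q(H_j)$ or the reverse holds for most blocks. Exactly one of the three must hold (up to relabelling), and I would show each yields a \texttt{diff} verdict with constant probability.

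The three cases then get the following treatments. \emph{Case 1:} a uniformly random $H_j$ satisfies the condition, so the Bernoulli obtained by thresholding samples from the conditional on $H_j\cup\{i\}$ at ``$=i$'' has parameters $p(i)/(p(i)+p(H_j))$ and $q(i)/(q(i)+q(H_j))$; a short calculation shows the chi-squared distance here is $\Omega(\beta^2)$, so \textsc{Test-equal} separates them in $\cO(\beta^{-2}\log(1/\delta))$ samples. \emph{Case 2:} summing the inequalities over most blocks gives $q(G_i)\le p(G_i)(1-\beta/4)$, and $p(G_i)\ge\alpha$, so testing the induced Bernoulli on $\{G_i,G_i^c\}$ requires $\cO((\alpha\beta^2)^{-1}\log(1/\delta))$ samples. \emph{Case 3} is where the main conceptual obstacle lies: I need to show that the induced distribution on the blocks $\{H_1,H_2,\ldots\}$ is near-uniform (which follows from $p(i)\le p(H_j)\le 2p(i)$) and that its $\ell_1$ distance from the corresponding induced distribution of $q$ is $\Omega(\beta)$ (which follows because the two opposing inequalities contribute additively to $\sum_j|p(H_j)-q(H_j)|$). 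Then Lemma~\ref{lem:uniformidentitytesting} applied to these induced distributions, with $y$ a dummy ``block containing $i$,'' returns \texttt{diff} with constant probability using $\cO(\beta^{-2}\log(1/\beta\delta))$ conditional samples.

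\textbf{Sample complexity.} Running all three tests for every tuple $j=1,\ldots,16/\epsilon$ with $\beta_j=j\epsilon/8$ and $\alpha_j=1/(4j\log(16/\epsilon))$ gives totals
\[
\sum_{j=1}^{16/\epsilon}\frac{1}{\beta_j^2}\log\frac{j}{\delta}
\;+\;\sum_{j=1}^{16/\epsilon}\frac{1}{\alpha_j\beta_j^2}\log\frac{j}{\delta}
\;=\;\cO\!\Paren{\frac{1}{\epsilon^2}\log^2\frac{1}{\epsilon}\log\frac{1}{\epsilon\delta}},
\]
because the first sum is dominated by $\sum_j 1/(j^2\epsilon^2)$ and the second by $\sum_j \log(1/\epsilon)/(j\epsilon^2)$. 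Finally, the overall success probability of $1/30$ follows by combining the $1/5$ probability from Lemma~\ref{lem:pickgood} with the constant success probabilities inside each case, minus the small completeness-side error. The main obstacle I anticipate is Case~3: justifying both the near-uniformity of the induced distribution and that the opposing inequalities produce an $\Omega(\beta)$ $\ell_1$ gap rather than cancelling, so that Lemma~\ref{lem:uniformidentitytesting} can be applied with the correct parameters.
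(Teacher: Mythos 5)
Your high-level plan (union bound for completeness; \textsc{Find-element} giving a good tuple; three cases handled by the three sub-routines) matches the paper, and your sample-complexity calculation is essentially right. However, the case decomposition you use for the soundness direction is the informal one sketched before the theorem, not the one the paper's proof actually uses, and this matters: your Case~3 argument has a real gap.

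You claim that when ``for some $j$, $p(H_j)(1-\beta/2)\ge q(H_j)$ and for some $j$, $p(H_j)(1-\beta/2)\le q(H_j)$,'' the two opposing inequalities ``contribute additively'' so that $\norm{p^{\cH}_{G_x}-q^{\cH}_{G_x}}_1 = \Omega(\beta)$. This does not follow. Having one block on each side of the threshold, possibly by a tiny margin and with most of the $p$-mass on blocks where $q(H_j)\approx p(H_j)(1-\beta/2)$, gives no lower bound on the induced $\ell_1$ distance. In fact the paper's decomposition is built precisely to handle the situation you are trying to rule out: its three cases in the appendix are $(i)$ $|p(G_x)-q(G_x)|\ge\alpha\beta/5$, $(ii)$ $|p(G_x)-q(G_x)|<\alpha\beta/5$ and $\norm{p^{\cH}_{G_x}-q^{\cH}_{G_x}}_1\ge\beta/5$, and $(iii)$ both of these quantities small. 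Case $(ii)$ is where \textsc{Near-uniform-identity-test} applies, and there the $\Omega(\beta)$ $\ell_1$ gap is an explicit case assumption --- nothing to prove, so this is the \emph{easy} case. The genuinely delicate case is $(iii)$, where both gaps are small, and there the paper shows that a block $y$ drawn from $p^{\cH}_{G_x}$ satisfies $\frac{p(y)-q(y)}{p(y)}\le\frac{4\beta}{5}$ with probability $\ge 1/2$ (using both smallness assumptions to bound the $p^{\cH}$-mass of blocks with large relative discrepancy), and then a careful chi-squared estimate on $\{x,y\}$ using $q(x)\le(1-\beta)p(x)$ and $q(y)\ge(1-\frac{4\beta}{5})p(y)$ yields a $\beta^2/1800$ lower bound. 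You fold this into your Case~1 as ``a short calculation,'' which understates it: the hard part is showing that the sampled $y$ lies on the right side with constant probability, and your case definition (most blocks satisfying $p(H_j)(1-\beta/2)\le q(H_j)$) does not match the assumption actually used for that calculation.

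A secondary issue: your three cases (most blocks $\le$, most blocks $\ge$, mixed) are not obviously exhaustive or mutually exclusive without carefully specifying ``most,'' and your Case~2 derivation $q(G_i)\le p(G_i)(1-\beta/4)$ from ``most blocks'' requires controlling the possibly large $q$-mass of the minority of blocks, which you do not address. The paper sidesteps all of this by partitioning on the two quantitative thresholds above, which are exhaustive by construction. I would redo the soundness side using the paper's trichotomy and supply the missing concentration argument for a random $y$ in the case where both $|p(G_x)-q(G_x)|$ and $\norm{p^{\cH}_{G_x}-q^{\cH}_{G_x}}_1$ are small.
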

The proposed identity testing has different error probabilities when $p=q$ and $\norm{p-q}_1 \geq \epsilon$. In particular, if $p=q$, the algorithm returns \texttt{same} with probability $\geq 1-\delta$ and if $\norm{p-q}_1 \geq \epsilon$ it outputs
\texttt{diff} with probability $\geq 1/30$. While the probability of success
for $\norm{p-q}_1\geq \epsilon$ is small, it
can be boosted arbitrarily close to $1$, by repeating
the algorithm $\cO(\log (1/\delta))$ times
and testing if more than $1/60$ fraction of times the algorithm outputs \texttt{diff}.
By a simple Chernoff type argument, it can be shown that 
for both cases $p=q$ and $\norm{p-q}_1$, the error probability of 
the boosted algorithm is $\leq \delta$.
Furthermore, throughout the paper we have calculated all the constants except sample complexities which we have left in $\cO$ notation.

\begin{center}
\fbox{\begin{minipage}{1.0\textwidth}
Algorithm \textsc{Identity-test} \newline
\textbf{Input:} error $\delta$, distance $\epsilon$ an unknown distribution $q$,  and a known distribution $p$.
\begin{enumerate}
\item Run \textsc{Find-element} $(\epsilon,p)$ to obtain tuples $(x,\beta,\alpha)$.
\item For every tuple $(x,\beta,\alpha)$:
\begin{enumerate}
\item Let $G_x=\{y: y\geq x\}$.
\item Partition $G_x$ into groups $\cH= H_1,H_2,\ldots $s.t. for each group $H_j$, $p(x) \leq p(H_j) \leq 2p(x)$.
\item Take a random sample $y$ from $p^{\cH}_{G_x}$ and run \textsc{Test-equal}$\bigl(\frac{\beta^2}{1800},\frac{\epsilon\delta}{48},p_{\{x,y\}},q_{\{x,y\}}\bigr)$.
\item Run \textsc{Test-equal}$\bigl(\bigl(\frac{\alpha\beta}{5}\bigr)^2,\frac{\epsilon\delta}{48},p^{\{G_x,G^c_x\}},q^{\{G_x,G^c_x\}}\bigr)$.
\item Run \textsc{Near-uniform-identity-test}$\bigl(\frac{\beta}{5},\frac{\epsilon\delta}{48},p^{\cH}_{G_x},q^{\cH}_{G_x}\bigr)$.
\end{enumerate}
\item Output \texttt{diff} if any of the above tests returns \texttt{diff} for any tuple, otherwise output \texttt{same}.
\end{enumerate}
\end{minipage}}
\end{center} 

\section{Closeness testing}
\label{sec:closeness}
Recall that in closeness testing, both $p$ and $q$ are unknown and we 
test if $p=q$ or $\norm{p-q}_1 \geq \epsilon$ using samples.
First we relate identity testing to closeness testing.

Identity testing had two parts: finding a distinguishing element $i$ and a distinguishing set $S$.
The algorithm we used to generate $i$ did not use any a priori knowledge of the distribution.
Hence it carries over to closeness testing easily. The main difficulty of extending
identity testing to closeness testing is to find a distinguishing set.
Recall that in identity testing, we ordered elements 
such that their probabilities are decreasing and considered set $G_i = \{j: j \geq i\}$
to find a distinguishing set. 
$G_i$ was known in identity testing, however in closeness testing, it is unknown
and is difficult to find. 

The rest of the section is organized as follows:
We first outline a method of identifying a distinguishing set by sampling at a certain frequency (which is unknown).
We then formalize finding a distinguishing element and then
show how one can use a binary search to find the sampling frequency and a distinguishing set. We finally describe our main closeness test, which requires few additional techniques to handle some \emph{special cases}.

\subsection{Outline for finding a distinguishing set}

Recall that in identity testing, we ordered elements 
such that their probabilities are decreasing  and considered $G_i = \{j: j \geq i\}$. We then used a subset of $S \subset G_i$ such that $p(S) \approx p(i)$
as the distinguishing set. However, in closeness test this is not possible as set $G_i$ is unknown. We now outline a method of finding such a set $S$ using random sampling without the knowledge of $G_i$.

Without loss of generality, assume that elements are ordered such that 
$p(1)+q(1) \geq p(2) + q(2) \geq \ldots \geq p(k)+q(k)$. The algorithm does 
not use this fact and the assumption is for the ease of proof notation.
Let $G_i = \{j : j \geq i\}$ under this ordering ($G_i$ serves same purpose as $G_i$ for identity testing, 
however is symmetric with respect to $p,q$ and hence easy to handle compared to that of identity testing). 
Furthermore, for simplicity in the rest of the section, assume that $p(i) > q(i)$ and $p(G_i) \leq q(G_i)$. 
Suppose we come up with a scheme that finds subset $S$ of $G_i$
such that $p(S) \approx p(i)$ and $p(S) < q(S)$,
then as in \textsc{Identity-test}, we can use that scheme together with \textsc{Test-equal} on $S\cup \{i\}$ to differentiate between $p=q$ and $\norm{p-q}_1 \geq \epsilon$.

The main challenge of the algorithm is to find a distinguishing subset of $G_i$. Let $r= (p+q)/2$, \ie $r(j) = (p(j)+q(j))/2 \, \, \forall \, 1\leq j \leq k$.
Suppose we know $r_{0} = \frac{r(i)}{r(G_i)}$. 
Consider a set $S$ formed  by including each element $j$ independently with probability $r_0$. Thus the probability of that set can be written as
\[
p(S) = \sum^k_{j=1} \II_{j \in S} p(j),
\]
where $\II_{j \in S}$ is the indicator random variable for $j \in S$.
In any such set $S$, there might be elements that are not from $G_i$.
We can prune these elements (refer to them as $j'$) by sampling from the distribution 
$p_{\{j,j'\}}$ and testing if $j'$ appeared more than $j$.  Precise probabilistic arguments are given later.
Suppose we remove all elements in $S$ that are not in $G_i$. Then,
\[
p(S) =  \sum_{j \in G_i} \II_{j \in S} p(j).
\]
Since $\Pr( \II_{j \in S} =1) = r_0$,
\begin{equation*}
\label{eq:expectation}
\EE[p(S)] = \sum_{j \in G_i} \EE[\II_{j \in S}] p(j) = r_0 \sum_{j \in G_i} p(j) = \frac{r(i)}{r(G_i)} \cdot p(G_i).
\end{equation*}
Similarly one can show that $\EE[q(S)] = \frac{ r(i)}{r(G_i)} \cdot q(G_i)$. Thus $\EE[p(S)] < \EE[q(S)]$ and 
$\EE[p(S)]+\EE[q(S)] = p(i)+q(i)$. 
Note that for efficiently using \textsc{Test-equal}, we 
not only need $p(i) > q(i)$ and  $\EE[p(S)] < \EE[q(S)]$, 
but we the chi-squared distance needs to be large. 
It can be shown that this condition is same as stating $p(S)+q(S) \approx p(i) + q(i)$ is necessary and hence $\EE[p(S)]+\EE[q(S)] = p(i)+q(i)$ is useful.

Thus in expectation $S$ is a good candidate for distinguishing set.
Hence if we take samples from $S \cup \{i\}$ 
and compare $p(S), p(i)$ and $q(S), q(i)$, we can test if $p=q$
or $\norm{p-q}_1 \geq \epsilon$.

We therefore have to find  an $i$ such that $p(i) > q(i)$ and $p(G_i)  < q(G_i)$,
estimate $r(i)/r(G_i)$ and convert the above expectation argument to a probabilistic one.
While the calculations and analysis in expectation seem natural, judiciously analyzing 
the success probability of these events takes a fair amount of effort. Furthermore, note that given a conditional sampling access to $p$ and $q$, one can generate a conditional sample from $r$, by selecting $p$ or $q$ independently with probability $1/2$ and then obtaining a conditional sample from the selected distribution.
\subsection{Finding a distinguishing element $i$}
We now show that using an algorithm similar to \textsc{Find-element}, we can find 
an $i$ such that ($p(i) > q(i)$ and $p(G_i)  \leq q(G_i)$) or ($p(i) < q(i)$ and $p(G_i) > q(G_i)$). 
To quantify the above statement we 
\ignore{We now outline each step of the algorithm. The first step of the algorithm
finds element $i$ such that $p(G_i) \leq q(G_i)$ and $p(i)> q(i)$. \ignore{Furthermore for symmetry, we define $r$ to be 
the average of $p$ and $q$, \ie $r(i) = \frac{p(i)+q(i)}{2}, \forall i$. Furthermore, redefine  $G_i = \{j: r(j) \leq  r(i)\}$. }}
need the following definition. We define $\beta$-approximability as
\begin{Definition}
For a pair of distributions $p$ and $q$, 
element $i$ is $\beta$-approximable,
if
\[
\left \lvert \frac{p(i) - q(i)}{p(i)+q(i)} - \frac{p(G_i) - q(G_i)}{p(G_i) +q(G_i)} \right  \rvert \geq \beta.
\]
\end{Definition}
As we show later, it is sufficient to consider $\beta$-approximable elements instead of elements with  $p(i) > q(i)$ and $p(G_i)  \leq q(G_i)$.
Thus the first step of our algorithm is to find $\beta$-approximable elements. To this end, we show that 
\begin{Lemma}[Appendix~\ref{app:exp_approx}]
\label{lem:exp_approx}
If $\norm{p-q}_1 \geq \epsilon$, then
\[
\sum_{i} \frac{p(i)+q(i)}{2} \left \lvert \frac{p(i) - q(i)}{p(i)+q(i)} - \frac{p(G_i) - q(G_i)}{p(G_i) +q(G_i)} \right  \rvert \geq \frac{\epsilon}{4}.
\]
\end{Lemma}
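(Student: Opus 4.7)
The plan is to pass to the symmetric quantities $r(i)=(p(i)+q(i))/2$, $d(i)=p(i)-q(i)$, $d(G_i)=\sum_{j\ge i}d(j)$, and the ``tail ratio'' $D_i=d(G_i)/r(G_i)$ (well defined on the support, with the conventions $D_{k+1}=0$ and $r(G_{k+1})=0$). Using $r(G_i)=r(i)+r(G_{i+1})$ together with the telescoping relation $d(i)=r(G_i)D_i-r(G_{i+1})D_{i+1}$, the $i$-th summand of the claim simplifies to
\[
\frac{p(i)+q(i)}{2}\left|\frac{d(i)}{2r(i)}-\frac{d(G_i)}{2r(G_i)}\right|=\frac12\,|d(i)-r(i)D_i|=\frac12\,r(G_{i+1})\,|D_i-D_{i+1}|,
\]
so it suffices to prove $\sum_i r(G_{i+1})|D_i-D_{i+1}|\ge \epsilon/2$.

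For this I would rewrite the same identity as $d(i)=r(i)D_i+r(G_{i+1})(D_i-D_{i+1})$, apply the triangle inequality termwise, and sum, which yields
\[
\norm{p-q}_1\ \le\ \sum_i r(i)|D_i|\ +\ \sum_i r(G_{i+1})\,|D_i-D_{i+1}|.
\]
Since $\norm{p-q}_1\ge\epsilon$, the lemma will follow provided the first sum on the right can be absorbed into the second.

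The key step is exactly this absorption, and it rests on the boundary observation $D_1=d(G_1)/r(G_1)=0$ (because $p$ and $q$ are both probability distributions). Telescoping then gives $|D_i|\le\sum_{j<i}|D_{j+1}-D_j|$, and swapping the order of summation,
\[
\sum_i r(i)|D_i|\ \le\ \sum_j |D_{j+1}-D_j|\sum_{i>j}r(i)\ =\ \sum_j r(G_{j+1})\,|D_{j+1}-D_j|.
\]
Substituting back produces $\norm{p-q}_1\le 2\sum_i r(G_{i+1})|D_i-D_{i+1}|$, so the target sum is at least $\norm{p-q}_1/2\ge \epsilon/2$, which combined with the first paragraph gives the claimed bound $\ge \epsilon/4$.

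I expect the only real obstacle to be keeping the boundary terms straight: one must check that $D_1=0$ (so that the telescoping bound on $|D_i|$ starts at zero rather than at some uncontrolled baseline) and that the $i=k$ case collapses correctly to $|d(k)|=r(k)|D_k|$ under $r(G_{k+1})=0$. No probabilistic ingredient is needed; the argument is just two applications of the triangle inequality and one Abel-type exchange of summation order.
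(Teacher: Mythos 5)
Your proof is correct, and it takes a genuinely different route from the paper's. The paper normalizes with $r_i=(p(i)+q(i))/2$, $s_i=(p(i)-q(i))/2$ and runs an index-merging induction: it shows that collapsing the two heaviest elements into one decreases the left-hand side by at most $\frac{|s_1|+|s_2|-|s_1+s_2|}{2}$, and then telescopes the partial sums $|s_1+\cdots+s_i|$; the two-index merge inequality is established by a three-way case analysis on the signs and relative magnitudes of $s_1,s_2$. Your argument instead identifies the $i$-th summand exactly as $\frac12\, r(G_{i+1})\,|D_i-D_{i+1}|$ via the tail ratios $D_i=d(G_i)/r(G_i)$, applies the triangle inequality once to the decomposition $d(i)=r(i)D_i + r(G_{i+1})(D_i-D_{i+1})$, and absorbs $\sum_i r(i)|D_i|$ back into $\sum_i r(G_{i+1})|D_i-D_{i+1}|$ by an Abel-type exchange of summation anchored at $D_1=0$. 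This avoids the casework entirely and makes the mechanism transparent: the lemma is a weighted total-variation bound on the tail-ratio sequence $D_i$, with $\norm{p-q}_1$ feeding in through $\sum_i|d(i)|$. The two boundary points you flag --- the $i=k$ term vanishing because $r(G_{k+1})=0$, and discarding elements with $p(i)+q(i)=0$ --- are indeed the only loose ends, and both close cleanly.
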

Hence if we use \textsc{Find-element} for the distribution $r = (p+q)/2$, then one of the tuples would be $\beta_j$-approximable for some $\beta_j$. 
Note that with $a_i = \left \lvert \frac{p(i) - q(i)}{p(i)+q(i)} - \frac{p(G_i) - q(G_i)}{p(G_i) +q(G_i)} \right  \rvert$, $0 \leq a_i \leq 2$ and $\sum^k_{i=1} r(i)a_i \geq \epsilon/4$.
By Lemma~\ref{lem:find_far}, \textsc{Find-element} outputs a tuple $(i,\alpha,\beta)$ such that $i$ that is $\alpha$-heavy
and $\beta$-approximable. Note that although we obtain $i$ and guarantees on $G_i$, the algorithm does not find $G_i$.
\begin{Lemma}
\label{lem:find_approx}
With probability $\geq 1/5$, of the tuples returned by~\textsc{Find-element}$(\epsilon,r)$ there exist at least one tuple that is both $\alpha$-heavy and $\beta$-approximable. 
\end{Lemma}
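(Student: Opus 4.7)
The plan is to reduce Lemma~\ref{lem:find_approx} to Lemma~\ref{lem:find_far} by picking the right weight function $a_i$ and verifying the two hypotheses (boundedness and a lower bound on the weighted mass). Concretely, I would set
\[
a_i \eqdef \left\lvert \frac{p(i) - q(i)}{p(i)+q(i)} - \frac{p(G_i) - q(G_i)}{p(G_i) +q(G_i)} \right\rvert,
\]
so that ``$a_x \ge \beta$'' is, by definition, the same as ``$x$ is $\beta$-approximable.'' Since each of the two ratios inside the absolute value lies in $[-1,1]$, their difference lies in $[-2,2]$, and hence $0 \leq a_i \leq 2$, which is the boundedness hypothesis required by Lemma~\ref{lem:find_far}.

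Next, under the assumption $\norm{p-q}_1 \ge \epsilon$, Lemma~\ref{lem:exp_approx} gives exactly
\[
\sum_{i=1}^k r(i)\, a_i \;\ge\; \frac{\epsilon}{4},
\]
which is the second hypothesis of Lemma~\ref{lem:find_far} applied with the distribution $r = (p+q)/2$ in place of $p$. (We can draw a sample from $r$ by flipping a fair coin and then drawing a conditional sample from $p$ or $q$ as noted in the text, so \textsc{Find-element}$(\epsilon, r)$ is implementable.) Invoking Lemma~\ref{lem:find_far}, with probability at least $1/5$ at least one of the $16/\epsilon$ tuples $(x,\beta,\alpha)$ returned by \textsc{Find-element}$(\epsilon,r)$ satisfies both: $x$ is $\alpha$-heavy (with respect to $r$), and $a_x \ge \beta$. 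Reading off the definitions, this is precisely the statement that $x$ is $\alpha$-heavy and $\beta$-approximable, which is what we needed.

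The only subtle point --- and it is a bookkeeping issue rather than a real obstacle --- is checking that the ordering used to define ``$\alpha$-heavy'' matches the ordering used to define $G_i$. Recall that $G_i = \{j : j \ge i\}$ under the convention $p(1)+q(1) \ge p(2)+q(2) \ge \ldots$, which is the same as ordering elements by $r(j)$ in decreasing order. Thus $\alpha$-heaviness of $x$ relative to $r$, namely $\sum_{j \ge x} r(j) \ge \alpha$, is exactly the statement $r(G_x) \ge \alpha$, and the two notions are consistent. With this sanity check, the lemma follows immediately by composing Lemmas~\ref{lem:exp_approx} and~\ref{lem:find_far}, so no further estimates are required.
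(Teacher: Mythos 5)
Your proof is correct and matches the paper's own argument essentially line for line: the paper also sets $a_i = \left\lvert \frac{p(i)-q(i)}{p(i)+q(i)} - \frac{p(G_i)-q(G_i)}{p(G_i)+q(G_i)}\right\rvert$, checks $0 \le a_i \le 2$, invokes Lemma~\ref{lem:exp_approx} to get $\sum_i r(i) a_i \ge \epsilon/4$, and then applies Lemma~\ref{lem:find_far} to $r=(p+q)/2$. Your ordering sanity check (that $\alpha$-heaviness under $r$ agrees with the ordering used to define $G_i$) is a nice explicit confirmation of a point the paper leaves implicit.
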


\subsection{Approximating $\frac{r(i)}{r(G_i)}$ via binary search}
Our next goal is to estimate $r_0 = \frac{r(i)}{r(G_i)}$ using samples.
It can be easily shown that it is sufficient to  know $\frac{r(i)}{r(G_i)}$ up-to a multiplicative factor, say $\gamma$ (we later choose  $\gamma=\Theta(\log \log \log k)$).
Furthermore by the definition of $G_i$,  $r(G_i) \geq r(i)$
and $r(G_i) = \sum_{j \geq i} r(j) \leq  \sum_{j \geq i } r(i) \leq k r(i)$. Therefore,
\[
 \frac{1}{k} \leq \frac{r(i)}{r(G_i)} \leq 1,
\]
and $\log k \geq - \log \frac{r(i)}{r(G_i)} \geq 0$. Approximating $\frac{r(i)}{r(G_i)}$
up-to a multiplicative factor $\gamma$ is the same as approximating  $ \log \frac{r(i)}{r(G_i)} $ up-to an additive factor $\log \gamma$. We can thus run our algorithm for $ \frac{r(i)}{r(G_i)}$ corresponding to each value of $\{0, \log \gamma, 2 \log \gamma, 3 \log \gamma,\ldots, \log k\}$,
and if $\norm{p-q}_1 \geq \epsilon$, at least for one value of $\frac{r(i)}{r(G_i)}$ we output \texttt{diff}. Using carefully chosen thresholds we can also ensure that if $p=q$, the algorithm outputs \texttt{same} always. The sample complexity for the above algorithm is $\frac{\log k}{\log \gamma} \approx \tilde \Theta(\log k)$ times the complexity when we know $\frac{r(i)}{r(G_i)}$.
We improve the sample complexity by using a better search algorithm over $\{0, \log \gamma, 2 \log \gamma, 3 \log \gamma,\ldots, \log k\}$. We develop a comparator (step $4$ in \textsc{Binary-search}) with the following property: if our guess value $r_{\guess} \geq \gamma\frac{r(i)}{r(G_i)}$ it outputs \texttt{heavy} and if $r_{\guess} \leq \frac{1}{\gamma} \cdot \frac{r(i)}{r(G_i)}$ it outputs \texttt{light}.
 Using such a comparator, we do a binary search and find the right value faster. Recall that binary search over $m$ elements uses $\log m$ queries. For our problem $m = \log k$ and thus our sample complexity is approximately  $\log \log k$ times the sample complexity of the case when we know $\frac{r(i)}{r(G_i)}$. \ignore{In Appendix~\ref{app:closeness}, we show that this sample complexity is $\tcO(\log \log k)$.}

However, our comparator cannot identify if we have a good guess \ie if  $\frac{1}{\gamma} r_{\guess} \frac{r(i)}{r(G_i)} \leq r_{\guess} \leq \gamma \cdot \frac{r(i)}{r(G_i)}$. Thus, our binary search instead of outputting the value of $\frac{r(i)}{r(G_i)}$ up-to some approximation factor $\gamma$, 
finds a set of candidates $r^1_{\guess}, r^2_{\guess},\ldots$ such that at least one of the $r^j_\guess$s satisfies
\[
\frac{1}{\gamma}\frac{r(i)}{r(G_i)}
\leq
r^j_{\guess}
\leq 
\gamma \frac{r(i)}{r(G_i)}.
\]
Hence, for each value of $r_{\guess}$ we assume that $r_{\guess} \approx r(i)/r(G_i)$, and run the closeness test. At least for one value of $r_{\guess}$ we would be correct. The algorithm is given in \textsc{Binary-search}. 

The algorithm \textsc{Prune-set} removes all elements of probability $\geq 4r(i)$, yet does not remove any element of probability $\leq r(i)$. Since after pruning  $S$ only contains elements of probability $\leq 4r(i)$, we show that at some point of the $\log \log k$ steps, the algorithm encounters $r_{\guess} \approx \frac{r(i)}{r(G_i)}$.

\ignore{
\begin{center}
\fbox{\begin{minipage}{1.0\textwidth}
Algorithm \textsc{Closeness-test} \newline
\textbf{Input:} $\epsilon$, oracles $p,q$.
\textbf{parameters:} 
\begin{enumerate}
\item 
Generate a set of tuples using \textsc{Find-elements} using the distribution $r$.
\item 
For every tuple, run \textsc{Assisted-closeness-test}.
\item
If any of them returned \texttt{diff}, output \texttt{diff},
else \texttt{same}.
\end{enumerate}
\end{minipage}}
\end{center}
The main sub-routine for the algorithm is the binary search for estimating the value of 
$\frac{r(i)}{r(G_i)}$. At each step the algorithm finds sets $S_1,S_2\ldots S_m$.
It first prunes the sets to remove all heavy elements using $\textsc{Prune-set}$.
If $r_{\guess}$ is close to the underlying $\frac{r(i)}{r(G_i)}$,
then we show that for at least one of the underlying sets $S$
has a subset which we can use to differentiate between the two hypothesis.
The algorithm also has access to a comparator \textsc{Test-heavy}, that outputs 
\texttt{heavy} if $r_{\guess}$ is much higher than the actual value 
and outputs \texttt{light} if $r_{\guess}$ is much smaller than the actual value.
Thus the algorithm conducts a binary search over all possible values of $\frac{r(i)}{r(G_i)}$,
resulting a sample complexity of $\log \log k$.
}

\begin{center}
\fbox{\begin{minipage}{1.0\textwidth}
Algorithm \textsc{Prune-set} \newline
\textbf{Input:} $S$, $\epsilon$, $i$, $\alpha$, $m$, and $\gamma$.
\newline
\textbf{Parameters:} $\delta' = \frac{\delta}{40 m \log \log k}$, $n_1 = \cO \left(\left(\log \frac{\gamma}{\delta'\alpha\beta} \right)\cdot 
 \left( \frac{\gamma}{\alpha\beta} \log \frac{\gamma}{\alpha \beta}+\log\frac{1}{\delta'}\log\log \frac{1}{\delta'}\right) \right)$, $n_2 = \cO(\log \log \log k + \log \frac{1}{\epsilon\delta})$.
 \newline
 Repeat $n_1$ times:
\newline
Obtain a sample $j$ from $r_S$ and 
sample $n_2$ times from $r_{\{j,i\}}$. If $n(j) \geq 3n_2/4$, remove $j$ from set $S$.
\end{minipage}}
\end{center} 
\begin{center}
\fbox{\begin{minipage}{1.0\textwidth}
Algorithm \textsc{Binary-search} \newline
\textbf{Input:} Tuple $(i,\beta,\alpha)$.
\newline
\textbf{Parameters:} $\gamma = 1000 \log \frac{ \log \log k}{\delta\epsilon}$, $n_3 = \cO \left(\gamma^2 \log \frac{\log \log k}{\delta} \right) $.
\newline Initialize $\log r_{\guess} = -\log \sqrt{k}$.
Set $\lowvalue = -\log k$ and $\highvalue = 0$. Do $\log \log k$ times:
\begin{enumerate}
\item Create a set $S$ 
by independently keeping elements $\{1,2,\ldots,k\} \setminus \{i\}$ each w.p. $r_{\guess}$.
\item Prune $S$ using \textsc{Prune-set}$(S,\epsilon,i,\alpha,1,\gamma)$.
\item 
Run \textsc{Assisted-closeness-test}$(r_{\guess}, (i,\beta,\alpha), \gamma, \epsilon,\delta)$.
\item 
Obtain $n_3$ samples from $S\cup \{i\}$.
If $n(i) < \frac{5n_3}{\gamma}$, then output \texttt{heavy}, else output \texttt{light}.
\begin{enumerate}
\item If output is \texttt{heavy}, update $\high = \log r_{\guess}$ and 
$\log r_{\guess} = (\log r_{\guess}+ \low)/2$.
\item If output is \texttt{light}, update
$\low = \log r_{\guess}$ and
$\log r_{\guess} = (\log r_{\guess}+ \high)/2$.
\end{enumerate}
\item 
If any of the \textsc{Assisted-closeness-test}s return \texttt{diff}, then output \texttt{diff}.
\end{enumerate}
\end{minipage}}
\end{center}

\ignore{
\begin{center}
\fbox{\begin{minipage}{1.0\textwidth}
Algorithm \textsc{Test-heavy} \newline
\textbf{Input:} set $S$, element $i$, $\alpha$.
\newline
\textbf{Parameters:} $n= \cO(\frac{1}{\gamma^2} \log \log k)$.
\begin{enumerate}
\item
Obtain $n$ samples from the entire set $S$.
If $n(i) < \frac{4n}{\gamma}$, then output \texttt{heavy}, else output \texttt{light}.
\end{enumerate}
\end{minipage}}
\end{center} 
}

\begin{Lemma}[Appendix~\ref{app:reaches}]
\label{lem:reaches}
If $i$ is $\alpha$-heavy and $\beta$-approximable,
 then the algorithm~\textsc{Binary-search},
with probability $\geq 1 - \delta$, 
\ignore{During the run of the algorithm 
if~\textsc{Test-heavy} always outputs 
\texttt{heavy} whenever $r_{\guess} >\gamma \frac{r(i)}{r(G_i)}$ 
and \texttt{light} whenever $r_{\guess} < \gamma^{-1} \frac{r(i)}{r(G_i)}$,}
reaches $r_{\guess}$ such that 
\[
 \frac{r(i)}{\gamma} = \frac{r(G_i)}{\gamma} \cdot \frac{r(i)}{r(G_i)}\leq r_{\guess} \leq 
\frac{\gamma}{\beta} \cdot \frac{r(i)}{r(G_i)}.\]
\end{Lemma}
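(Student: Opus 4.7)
The plan is to prove the lemma in three stages: establish the reliability of \textsc{Prune-set} and of the comparator (step 4 of \textsc{Binary-search}) as subroutines, then analyze the dynamics of the binary search assuming these subroutines succeed, and finally union-bound over iterations.

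\textbf{Subroutine reliability.} \textsc{Prune-set} repeatedly draws $j\sim r_S$ and samples $n_2$ times from $r_{\{j,i\}}$, thresholding $n(j)\geq 3n_2/4$, which is a Chernoff test for $r(j)/(r(j)+r(i))\geq 3/4$, i.e., $r(j)\geq 3r(i)$. The chosen $n_2=\cO(\log\log\log k+\log(1/\epsilon\delta))$ makes each such test correct with probability $\geq 1-\delta'/n_1$, so a union bound over the $n_1$ candidates leaves, with probability $\geq 1-\delta'$, no element of $r$-mass $\geq 4r(i)$ and removes no element of $r$-mass $\leq r(i)$. Conditional on this, $S$ is (essentially) a random subset of $G_i$ formed by independently keeping each $j\in G_i$ with probability $r_{\guess}$, so $\EE[r(S)]\approx r_{\guess}\cdot r(G_i)$; a Chernoff bound on $r(S)=\sum_{j\in G_i}\II_{j\in S}\,r(j)$ (individual terms bounded by $4r(i)$) gives $r(S)=r_{\guess}\,r(G_i)\bigl(1\pm\cO(1/\gamma)\bigr)$ w.h.p. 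Conditioning further on this, $n(i)\sim\Bin\bigl(n_3,\,r(i)/(r(i)+r(S))\bigr)$ concentrates by another Chernoff bound, and the choice $n_3=\cO(\gamma^2\log(\log\log k/\delta))$ ensures the comparator outputs \texttt{heavy} (resp.\ \texttt{light}) w.p.\ $\geq 1-\delta/\log\log k$ whenever $r_{\guess}\geq(\gamma/\beta)\,r(i)/r(G_i)$ (resp.\ $r_{\guess}\leq r(i)/\gamma$).

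\textbf{Binary search analysis.} Let $\cG=[r(i)/\gamma,\,(\gamma/\beta)\,r(i)/r(G_i)]$ be the good zone; note $r^*\ed r(i)/r(G_i)\in\cG$ since $r(G_i)\leq 1$ and $\beta\leq 2$, and $r^*\in[1/k,1]$ since $r(G_i)\leq k\,r(i)$. We claim that at some iteration $r_{\guess}\in\cG$. Assume not: then by the above reliability, every comparator call is correct, and a straightforward induction shows $\log r^*\in[\low,\high]$ throughout, because a \texttt{heavy} update occurs only when $r_{\guess}>(\gamma/\beta)r^*>r^*$, so lowering $\high$ to $\log r_{\guess}$ keeps $\log r^*$ in the interval (symmetrically for \texttt{light}). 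Since $[\low,\high]$ halves per iteration from initial width $\log k$, after $\log\log k$ rounds its width is $1$, giving $|\log r_{\guess}-\log r^*|\leq 1/2$. However, lying outside $\cG$ requires $|\log r_{\guess}-\log r^*|\geq \min(\log(\gamma/r(G_i)),\log(\gamma/\beta))\geq \log\gamma-1$, which vastly exceeds $1/2$ for $\gamma=1000\log(\log\log k/(\delta\epsilon))$, a contradiction.

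\textbf{Union bound.} Summing failure probabilities of \textsc{Prune-set} and the comparator across all $\log\log k$ iterations, the chosen $\delta'$ and $n_3$ yield total failure at most $\delta$. The main technical obstacle lies in the subroutine-reliability step, specifically the two-level concentration argument: one must control the randomness of the set $S$ (whose $r$-mass has variance with per-term contributions up to $4r(i)$) and of the samples from $S\cup\{i\}$ jointly, while also bounding the contribution of border elements of $r$-mass in $(r(i),4r(i)]$ that may partly survive pruning. Balancing $n_1,n_2,n_3$ so that both concentrations hold simultaneously at per-iteration failure $\cO(\delta/\log\log k)$ is the delicate technical core.
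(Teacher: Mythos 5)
Your proposal is correct and follows essentially the same route as the paper's proof: you decompose it into (i) reliability of \textsc{Prune-set} (the paper's Lemmas~\ref{lem:two_remains} and~\ref{lem:four_no_remain}), (ii) reliability of the comparator in step $4$ (the paper's Lemmas~\ref{lem:heavy} and~\ref{lem:light}, which bound $r(S')$ with $S'=S\cap G_i'$ from below when $r_{\guess}$ is large and bound $r(S)$ from above after pruning when $r_{\guess}$ is small), (iii) a binary-search invariance argument, and (iv) a union bound over the $\log\log k$ iterations. Your contradiction argument in (iii) makes explicit what the paper asserts informally (the paper merely states that after $\log\log k$ halvings over $[-\log k,0]$ the good zone must be hit), and your invariant that $\log r^*$ stays in $[\low,\high]$ as long as the search remains outside the good zone is the right way to make that rigorous; note only the slight arithmetic that at the $\log\log k$-th query the bracketing interval has width $2$, not $1$, so $|\log r_{\guess}-\log r^*|\le 1$ rather than $\le 1/2$ — still dwarfed by $\log\gamma-1$, so the contradiction survives. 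One point you compress is the choice of $n_1$: it is not just a union bound over $n_1$ candidate elements but a coupon-collector-type estimate on how many draws from $r_S$ are needed to \emph{hit} every heavy element of $S''=S\setminus\{j:r(j)\le 4r(i)\}$; the paper carries this out in the proof of Lemma~\ref{lem:four_no_remain} via the sum $\sum_j\eta_j$, which is what produces the $\frac{\gamma}{\alpha\beta}\log\frac{\gamma}{\alpha\beta}$ term in $n_1$. You flag this as ``the delicate technical core,'' which is fair, but the proposal does not actually carry out that step.
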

Note that due to technical reasons we get an additional $1/\beta$
factor in the upper bound and a factor of  $r(G_i)$ in the lower bound.
\subsection{Assisted closeness test}
We now discuss the proposed test, which uses the above value of $r_{\guess}$. As stated before, in expectation
it would be sufficient to keep elements in the set $S$ with probability $r_{\guess}$ and use the resulting set $S$ to test for closeness.
However, there are two caveats. Firstly, \textsc{Prune-set} can remove only elements
which are bigger than $4(i)$, while we can reduce the factor $4$ to any number $>1$,
but we can never reduce it to $1$ as if there is an element with probability $1+\delta'$ for sufficiently small $\delta'$,
that element is almost indistinguishable from an element with probability $1-\delta'$.
Thus we need a way of ensuring that elements with probability $> r(i)$ and $\leq 4r(i)$  do not affect the concentration inequalities.

Secondly, since we have an approximate value of $r(i)/r(G_i)$, the probability that  required quantities concentrate is small and 
we have to repeat it many times to obtain a higher probability of success.
Our algorithm address both these issues and is given below:

The algorithm picks $m$ sets and prunes them to ensure that
none of the elements has probability $\geq 4r(i)$ and considers two possibilities: there exist many elements $j$ such that $j \notin G_i$ and
\[
\lV \frac{p(i) - q(i)}{r(i)} - \frac{p(j) - q(j)}{r(j)} \rV \geq \beta''\,\, \text{($\beta''$ determined later)},
\]
or the number of such elements is small.
If it is the first case, the algorithm finds such an element $j$ and performs \textsc{Test-equal} over set $\{i,j\}$. Otherwise, we show that $r(S) \approx r(i)$, it concentrates, and with high probability
\[
\lV \frac{p(i) - q(i)}{r(i)} - \frac{p(S) - q(S)}{r(S)} \rV \geq \beta'' \,\, \text{($\beta''$ determined later)},
\]
and thus one can sample from $S \cup \{i\}$ and use $n(i)$ to test closeness.

To conclude, the proposed~\textsc{Closeness-test} uses~\textsc{Find-element} to find a distinguishing element $i$.
It then runs \textsc{Binary-search} to approximate $r(i)/r(G_i)$. However since the search does not identify
if it has found a good estimate of $r(i)/r(G_i)$, for each estimate it runs ~\textsc{Assisted-closeness-test}
which uses the distinguishing element $i$ and the estimate of $r(i)/r(G_i)$.
The main result in this section is the sample complexity of our proposed~\textsc{Closeness-test}.

\begin{Theorem}[Appendix~\ref{app:thmcloseness}]
\label{thm:closeness}
If $p=q$, then \textsc{Closeness-test} returns \texttt{same} with probability $\geq 1-\delta$
and if $\norm{p-q}_1 \geq \epsilon$, then \textsc{Closeness-test} returns \texttt{diff}
with probability $\geq 1/30$.
The algorithm uses 
$\scccke \leq \tcO \left( \frac{\log \log k}{\epsilon^5} \right)$ samples.
\end{Theorem}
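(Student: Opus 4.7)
The plan is to establish correctness and sample complexity by combining the three building blocks already developed: the selection guarantee of \textsc{Find-element} applied to $r=(p+q)/2$ (Lemma~\ref{lem:find_approx}), the convergence guarantee of the binary search (Lemma~\ref{lem:reaches}), and the correctness of \textsc{Assisted-closeness-test} once a good guess $r_{\guess}$ is supplied. The strategy parallels Theorem~\ref{thm:identity}: identify a distinguishing element~$i$, locate a distinguishing set via an expectation-plus-concentration argument, and reduce to \textsc{Test-equal} on two-point conditional distributions.

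First I would dispatch the completeness direction ($p=q$). Conditional on $p=q$, every call to \textsc{Test-equal} inside \textsc{Prune-set}, \textsc{Assisted-closeness-test}, and the comparator of step~4 of \textsc{Binary-search} sees two identical Bernoullis, and by Lemma~\ref{lem:test_equal} each outputs \texttt{same} with probability at least $1-\delta'$ for the relevant error parameter $\delta'$. The parameters in the algorithm (in particular the $\delta' = \delta/(40m\log\log k)$ scaling in \textsc{Prune-set} and the $\epsilon\delta/48$-style error inputs passed down from \textsc{Identity-test}-analogous callers) are chosen so that a union bound over all $O(1/\epsilon)$ tuples, all $O(\log\log k)$ binary-search iterations, and all internal calls leaves total error at most $\delta$.

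Next I would handle the soundness direction ($\|p-q\|_1 \geq \epsilon$). By Lemma~\ref{lem:find_approx}, with probability $\geq 1/5$ at least one tuple $(i,\beta,\alpha)$ returned by \textsc{Find-element}$(\epsilon,r)$ is simultaneously $\alpha$-heavy and $\beta$-approximable; fix that tuple. By Lemma~\ref{lem:reaches}, with probability $\geq 1-\delta$ the binary search on this tuple visits some $r_{\guess}$ with
\[
\frac{r(i)}{\gamma}\;\le\;r_{\guess}\;\le\;\frac{\gamma}{\beta}\cdot\frac{r(i)}{r(G_i)}.
\]
For this $r_{\guess}$, the expectation computation in the outline shows $\EE[p(S)]$ and $\EE[q(S)]$ are controlled by $r_{\guess}\cdot p(G_i)$ and $r_{\guess}\cdot q(G_i)$ (after \textsc{Prune-set} removes elements of probability $\geq 4r(i)$), while $\beta$-approximability of $i$ forces a gap of order $\beta$ between $(p(i)-q(i))/r(i)$ and $(p(S)-q(S))/r(S)$. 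The correctness of \textsc{Assisted-closeness-test} then splits into the two cases laid out in the outline: either many pruned elements $j\notin G_i$ individually satisfy a chi-squared gap of order $\beta''^2$, in which case \textsc{Test-equal} on $\{i,j\}$ detects \texttt{diff}; or the pruned set $S$ itself satisfies $r(S)\approx r(i)$ and the corresponding gap, in which case \textsc{Test-equal} on $S\cup\{i\}$ detects it. Repeating $m$ independent set draws boosts the probability that one of the two cases triggers, and multiplying $1/5$ by the binary-search success and the assisted-test success yields the claimed $\geq 1/30$.

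Finally I would bound the sample complexity. \textsc{Find-element} contributes $O(1/\epsilon)$; each of its $O(1/\epsilon)$ tuples drives one binary search of $O(\log\log k)$ steps; each step invokes \textsc{Prune-set}, \textsc{Assisted-closeness-test}, and the comparator, whose Poisson-sample costs per call scale like $\tcO(1/\epsilon^4)$ in the worst case (coming from the $1/(\alpha\beta)^2$-type factors combined with the $\log(1/\delta')$ amplification, together with the $\poly(\gamma)$ factors with $\gamma=\Theta(\log\log\log k/(\epsilon\delta))$ absorbed into $\tcO(\cdot)$). Multiplying gives $\tcO(\log\log k/\epsilon^5)$. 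The main obstacle here, and the place where most of the technical work lives, is turning the clean expectation identities $\EE[p(S)]=r_0\,p(G_i)$, $\EE[q(S)]=r_0\,q(G_i)$ into high-probability guarantees: the variance of $p(S)$ is controlled only after \textsc{Prune-set} has discarded the overweight elements, and this interaction between the pruning radius $4r(i)$, the heavy-vs-light comparator threshold $5n_3/\gamma$, and the $\beta$-approximability condition is what dictates the chain of constants chosen in the algorithm. The remaining verification is routine bookkeeping of chi-squared gaps through \textsc{Test-equal}, collected in Appendix~\ref{app:thmcloseness}.
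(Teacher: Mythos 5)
Your decomposition matches the paper's: completeness by a union bound over all \textsc{Test-equal} calls; soundness by chaining Lemma~\ref{lem:find_approx} (good tuple with probability $1/5$), Lemma~\ref{lem:reaches} (the binary search visits a good $r_{\guess}$), and a two-case analysis of \textsc{Assisted-closeness-test} via Lemmas~\ref{lem:diff_1} and~\ref{lem:diff_2}, multiplying out to $\geq 1/30$; and the same $\tcO(\log\log k/\epsilon^5)$ total.

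Two places need tightening. First, you frame the hard step as ``turning expectation identities into high-probability guarantees,'' but the paper does not and cannot get ordinary concentration of $p(S),q(S)$ around their means: in the second case the variance of $Z=r(S')\,\lvert\beta'-(p(S')-q(S'))/(p(S')+q(S'))\rvert$ can swamp $(\EE[Z])^2$. The paper instead proves the anti-concentration bound $\Pr[Z\geq\EE[Z]/2]\geq\Omega(\alpha\beta^2/\gamma)$ via the Paley--Zygmund inequality (the variance bound there crucially uses the $4r(i)$ pruning cap), and only then boosts by drawing $m=\Theta(\gamma/(\alpha\beta^2))$ independent sets; this is why $m$ has the value it does, and calling it ``routine bookkeeping'' obscures the crux. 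Second, the cost of a single \textsc{Assisted-closeness-test} is $\tcO(\epsilon^{-4}\beta_j^{-1})$, which for the smallest tuples ($\beta_j=\Theta(\epsilon)$) is $\tcO(\epsilon^{-5})$, not the uniform $\tcO(\epsilon^{-4})$ per-step worst case you assert; the final $\tcO(\log\log k/\epsilon^5)$ comes from the harmonic-sum cancellation $\sum_j 1/\beta_j=\tcO(1/\epsilon)$ across the $16/\epsilon$ tuples, so your uniform-per-step multiplication reproduces the right number but for the wrong reason. A small side note: \textsc{Prune-set} and the step-4 comparator do not call \textsc{Test-equal}, so the completeness union bound only needs to cover the calls inside \textsc{Assisted-closeness-test}.
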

As stated in the previous section, by repeating and taking a majority, the success probability can be boosted arbitrarily close to $1$.
Note that none of the constants or the error probabilities have been optimized. Constants for all the parameters except the sample complexities $n_1,n_2,n_3,$ and $n_4$ have been given.

\begin{center}
\fbox{\begin{minipage}{1.0\textwidth}
Algorithm \textsc{Closeness-test} \newline
\textbf{Input:} $\epsilon$, oracles $p,q$.
\begin{enumerate}
\item 
Generate a set of tuples using \textsc{Find-element}$(\epsilon,r)$.
\item 
For every tuple $(i,\alpha,\beta)$, run \textsc{Binary-search}$(i,\beta,\alpha)$.
\item
If any of the \textsc{Binary-search} returned \texttt{diff} output \texttt{diff} otherwise output \texttt{same}.
\end{enumerate}
\end{minipage}}
\end{center}
\begin{center}
\fbox{\begin{minipage}{1.0\textwidth}
Algorithm \textsc{Assisted-closeness-test} \newline
\textbf{Input:}  $r_{\guess}$, tuple $(i,\beta,\alpha)$, $\gamma$, $\epsilon$, and $\delta$.
\newline
\textbf{Parameters:} 
$\beta'' = \frac{\alpha\beta}{128\gamma \log \frac{128\gamma}{\beta^2}}$, $m = \frac{4096\gamma}{\alpha \beta^2}$, $n_4 = O( \gamma/( \alpha \beta))$, 
and $\delta' =\frac{\epsilon\delta}{32m(n_4+1) \log \log k} $.
\begin{enumerate}
\item Create $S_1, S_2,\ldots S_m$ independently by keeping elements $\{1,2,\ldots,k\} \setminus \{i\}$ each w.p. $r_{\guess}$.
\item Run \textsc{Prune-set}$(S_\ell,\epsilon,i,\alpha,m,\gamma)$ for $1\leq \ell \leq m$
\item For each set $S$ do:
\begin{enumerate}
\item Take $n_4$ samples from $r_{S\cup\{i\}}$ 
and 
for all  seen elements $j$, run \textsc{Test-equal} $\left( (\beta'')^2/25 , \delta', p_{\{i,j\}}, q_{\{i,j\}}\right)$.
\item
Let $\cS=\{ \{i\},S\}$. Run \textsc{Test-equal}$\left( \frac{\alpha^3\beta^3}{2^{23} \gamma^2\log^3 \frac{128}{\gamma\beta^2}}  , \delta', p^{\cS}_{S_1\cup \{i\}}, q^{\cS}_{S_1\cup \{i\}}\right)$.
\end{enumerate}
\item 
If any of the above tests return \texttt{diff}, output 
\texttt{diff}.
\end{enumerate}
\end{minipage}}
\end{center}

\section{Acknowledgements}
\label{sec:ack}

We thank Jayadev Acharya, Cl\'ement Canonne, Sreechakra Goparaju, and
Himanshu Tyagi for useful suggestions and discussions.

\bibliographystyle{abbrvnat}
\bibliography{abr,masterref}
\appendix
\section{Tools}
We use the following variation of the Chernoff bound.
\begin{Lemma}[Chernoff bound]
\label{lem:eqchern}
If $X_1, X_2 \ldots X_n$ are distributed according to Bernoulli $p$, then 
\begin {align*}
\Pr\left(\frac{\sum_{i=1}^n X_i}{n}-p>\delta\right)&  \leq e^{-2n\delta^2},\\
\Pr \left(\frac{\sum_{i=1}^n X_i}{n}-p<-\delta\right) &\leq e^{-2n\delta^2}.
\end{align*}
\end{Lemma}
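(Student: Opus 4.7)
The plan is to prove this via the standard Chernoff/Hoeffding moment generating function argument, which gives the sharp constant $2$ in the exponent for Bernoulli (and more generally $[0,1]$-bounded) summands. The two tail bounds are symmetric, so I would prove the upper tail and then deduce the lower tail by applying the same inequality to the variables $Y_i \eqdef 1 - X_i$, which are Bernoulli$(1-p)$ and satisfy $\frac{1}{n}\sum Y_i - (1-p) = -\bigl(\frac{1}{n}\sum X_i - p\bigr)$.

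For the upper tail, first I would apply Markov's inequality to the nonnegative random variable $e^{t(\sum_i X_i - np)}$ for an arbitrary $t > 0$:
\[
\Pr\Paren{\frac{1}{n}\sum_{i=1}^n X_i - p > \delta}
= \Pr\Paren{e^{t(\sum_i X_i - np)} > e^{tn\delta}}
\leq e^{-tn\delta}\cdot \EE\Brack{e^{t(\sum_i X_i - np)}}.
\]
By independence of the $X_i$'s the expectation factors as $\prod_{i=1}^n \EE\brack{e^{t(X_i - p)}}$, reducing the problem to bounding the moment generating function of a single centered Bernoulli.

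The key single-variable bound I would invoke is Hoeffding's lemma: if $Z$ is a zero-mean random variable supported in $[a,b]$, then $\EE[e^{tZ}] \leq e^{t^2(b-a)^2/8}$. Since $X_i - p$ has mean zero and lies in $[-p, 1-p]$, an interval of length $1$, this gives $\EE[e^{t(X_i-p)}] \leq e^{t^2/8}$. Plugging back in,
\[
\Pr\Paren{\frac{1}{n}\sum_{i=1}^n X_i - p > \delta}
\leq e^{nt^2/8 - tn\delta}.
\]
Choosing $t = 4\delta$ (the minimizer of the exponent) yields exactly $e^{-2n\delta^2}$, as required.

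The only nontrivial ingredient is Hoeffding's lemma itself, which I would prove by noting that $e^{tz}$ on $[a,b]$ lies below the chord, so $\EE[e^{tZ}] \leq \frac{b}{b-a}e^{ta} - \frac{a}{b-a}e^{tb}$; taking logarithms gives a function $\varphi(t)$ with $\varphi(0)=\varphi'(0)=0$, and a direct computation shows $\varphi''(t) \leq (b-a)^2/4$, so Taylor's theorem yields $\varphi(t) \leq t^2(b-a)^2/8$. This is the only place any real calculation is needed; everything else is a direct chain of standard manipulations. The lower tail then follows instantly from the symmetry argument above, completing the proof.
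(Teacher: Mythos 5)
Your proof is correct: the Markov-inequality-on-the-MGF argument combined with Hoeffding's lemma (applied to $X_i - p \in [-p,1-p]$, an interval of length $1$) and the optimal choice $t = 4\delta$ gives exactly the stated bound $e^{-2n\delta^2}$, and the lower tail does follow immediately by the substitution $Y_i = 1-X_i$. The paper states this lemma as a standard tool without proof, and your argument is the canonical derivation one would cite for it, so there is nothing to reconcile.
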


The following lemma follows from a bound in~\cite{AcharyaDJOPS12} and the fact that $y^5 e^{-y}$ is bounded for all non-negative values of $y$.
\begin{Lemma}[\cite{AcharyaDJOPS12}]
\label{lem:var}
  For two independent Poisson random variables $\mu$ and $\mu'$ with means $\lambda$ and $\lambda'$ respectively,
\begin{align*}
  E \left( \frac{(\mu - \mu')^2 - \mu - \mu'}{\mu + \mu' -1} \right) & = \frac{(\lambda-\lambda')^2 }{\lambda+\lambda'} \left( 1 - e^{-\lambda-\lambda'}  \right),\\
\Var \left( \frac{(\mu - \mu')^2 - \mu - \mu'}{\mu + \mu' -1} \right) & \leq  4 \frac{(\lambda-\lambda')^2}{\lambda+\lambda'} + 
c^2,
\end{align*}
where $c$ is a universal constant.
\end{Lemma}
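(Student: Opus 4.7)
The plan is to condition on $N := \mu + \mu'$. Since $\mu$ and $\mu'$ are independent Poissons, $N \sim \Poi(\Lambda)$ with $\Lambda = \lambda + \lambda'$, and conditional on $N$ we have $\mu \sim \Bin(N, p)$ with $p = \lambda/\Lambda$. Writing $q = 1 - p$, $d = p - q = (\lambda - \lambda')/\Lambda$, and $Y = 2\mu - N$, the statistic of interest becomes $f := (Y^2 - N)/(N - 1)$, with the convention $f = 0$ when $N \in \{0,1\}$ (the numerator vanishes there as well, so this is the natural extension).

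For the expectation, conditional on $N$ the shifted-and-scaled binomial $Y$ has $E[Y \mid N] = Nd$ and $\Var(Y \mid N) = N(1 - d^2)$, hence
\[
E\bigl[Y^2 - N \,\big|\, N\bigr] = N^2 d^2 + N(1 - d^2) - N = N(N-1) d^2.
\]
Dividing by $N - 1$ gives $E[f \mid N] = N d^2$ for $N \geq 2$ (and $0$ otherwise), and averaging over the Poisson yields
\[
E[f] = d^2 \bigl(\Lambda - \Lambda e^{-\Lambda}\bigr) = \frac{(\lambda-\lambda')^2}{\Lambda}\bigl(1 - e^{-\Lambda}\bigr),
\]
which is the claimed identity.

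For the variance I would use the law of total variance, $\Var(f) = E[\Var(f \mid N)] + \Var(E[f \mid N])$. The second term equals $d^4 \Var(N \, \II_{\{N \geq 2\}}) = d^4\bigl(\Lambda(1 - e^{-\Lambda}) + \Lambda^2 e^{-\Lambda}(2 - e^{-\Lambda})\bigr)$; since $|\lambda - \lambda'| \leq \Lambda$, the first piece is at most $(\lambda - \lambda')^2/\Lambda$, and the second is universally bounded because $\Lambda^2 e^{-\Lambda}$ is. For the first term I would expand the fourth moment of $Y$ using the central moments of $\Bin(N, p)$; after the cancellations one obtains $\Var(Y^2 \mid N) = 4 N^3 d^2(1 - d^2) + O(N^2)$, where the implicit constants depend only on $d \in [-1,1]$. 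Dividing by $(N - 1)^2$ and taking the Poisson expectation, the leading term contributes at most $4 d^2 \Lambda = 4 (\lambda - \lambda')^2/\Lambda$, while the lower-order $O(N^2)/(N - 1)^2$ pieces integrate against the Poisson to a quantity bounded uniformly in $\Lambda$.

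The hardest part will be the bookkeeping: showing that every contribution beyond the leading $4(\lambda - \lambda')^2/\Lambda$ can be absorbed into a single universal constant $c^2$ valid for all $\lambda, \lambda' \geq 0$. The delicate regimes are small $N$, where $(N - 1)^{-2}$ is large but only $N = 2$ contributes significantly (with Poisson mass $\tfrac{\Lambda^2}{2} e^{-\Lambda}$), and intermediate $\Lambda \sim 1$, which is precisely where the boundedness of $y^5 e^{-y}$ on $y \geq 0$ (remarked immediately before the lemma statement) controls the residual terms of the form $\Lambda^k e^{-\Lambda}$.
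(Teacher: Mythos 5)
The paper does not prove this lemma itself: it imports the bound from \cite{AcharyaDJOPS12}, remarking only that the uniform boundedness of $y^5 e^{-y}$ lets that reference's residual term be absorbed into the universal constant $c^2$. Your proposal gives a self-contained derivation via the standard Poisson-to-binomial conditioning, and the structure is sound: with $N=\mu+\mu'$, $\Lambda=\lambda+\lambda'$, $d=(\lambda-\lambda')/\Lambda$, and $Y=\mu-\mu'$, the expectation computation is exact ($E[(Y^2-N)\mid N] = N(N-1)d^2$, hence $E[f] = d^2\,E[N\,\II_{\{N\geq 2\}}] = d^2\Lambda(1-e^{-\Lambda})$), and the law-of-total-variance split with $\Var(N\,\II_{\{N\geq 2\}}) = \Lambda(1-e^{-\Lambda}) + \Lambda^2 e^{-\Lambda}(2-e^{-\Lambda})$ and $\Var(Y^2\mid N) = 4N^3d^2(1-d^2) + \cO(N^2)$ (uniformly over $d\in[-1,1]$) is correct. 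One caution in the final accounting: as written you bound the leading contribution of $E[\Var(f\mid N)]$ by $4d^2\Lambda$ and that of $\Var(E[f\mid N])$ by $d^2\Lambda$, which sums to $5(\lambda-\lambda')^2/\Lambda$ and overshoots the claimed constant. The fix is to retain the $(1-d^2)$ factor you already wrote: the two leading pieces are $4d^2(1-d^2)\Lambda$ and $d^4\Lambda$, and $4d^2(1-d^2)+d^4 = 4d^2-3d^4 \le 4d^2$, recovering the stated $4(\lambda-\lambda')^2/\Lambda$. With that adjustment, the remaining terms are of the form $\cO(N^2)/(N-1)^2$ on $\{N\geq 2\}$ or $\Lambda^k e^{-\Lambda}$ and are universally bounded. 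Your route buys self-containment at the cost of this moment bookkeeping; the paper outsources it to the reference.
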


\section{Identity testing proofs}
\label{app:identity}
\subsection{Proof of Lemma~\ref{lem:test_equal}}
\label{app:test_equal}
Let 
$t = \frac{(n_1 -n_2)^2 - n_1-n_2}{n_1+n_2-1} 
+
\frac{(n_1 -n_2)^2 - n_1-n_2}{n'+n'' - n_1-n_2-1}$.
Since we are using $\poi(n)$ samples, $n_1,n_2,n'-n_1,n''-n_2$ are all independent Poisson distributions with means $np,nq,n(1-p)$, 
and $n(1-q)$ respectively.
Suppose the underlying hypothesis is $p=q$. 
By Lemma~\ref{lem:var}, $\EE[t]=0$
and since , $n_1,n_2,n'-n_1,n''-n_2$ are all independent Poisson distributions, variance of $t$ is the sum of variances of each term and hence $\Var(t) \leq  2c^2$ for some universal constant $c$.
Thus by Chebyshev's inequality 
\[
\Pr(t \geq n\epsilon/2) \leq \frac{8c^2}{(n\epsilon)^2} \leq \frac{1}{3}.
\]
Hence by the Chernoff bound~\ref{lem:eqchern}, after $18 \log\frac{1}{\delta}$ repetitions
probability that the majority of outputs is \texttt{same} is $\geq 1-\delta$.
Suppose the underlying hypothesis is $\frac{(p-q)^2}{(p+q)(2-p-q)}>\epsilon$. Then by Lemma~\ref{lem:var}
\begin{align*}
\EE[t] 
&= \frac{n(p-q)^2}{p+q} \left(1-e^{-n(p+q)}  \right)
+ \frac{n(p-q)^2}{2-p-q} \left(1-e^{-n(2-p-q)}  \right)
\\
&\stackrel{(a)}{\geq} \frac{n(p-q)^2}{p+q} \left(1-e^{-n\epsilon}  \right)
+ \frac{n(p-q)^2}{2-p-q} \left(1-e^{-n\epsilon}  \right) \\
& {\geq}  \frac{2n(p-q)^2}{(p+q)(2-p-q)} \left(1-e^{-n\epsilon}  \right)\\
& \stackrel{(b)}{\geq} \frac{n(p-q)^2}{(p+q)(2-p-q)}.
\end{align*}
$(a)$ from the fact that $p+q \geq (p+q)\frac{(p-q)^2}{(p+q)^2} \geq \frac{(p-q)^2}{p+q} \geq \epsilon$ and similarly $2-p-q \geq \epsilon$. 
$(b)$ follows the fact that $n\epsilon \geq 10$.
Similarly the variance is 
\[
\Var(t) \leq c^2 + \frac{4n(p-q)^2}{p+q} + c^2 +  \frac{4n(p-q)^2}{2-p-q} =  2c^2 + \frac{8n(p-q)^2}{(p+q)(2-p-q)}.
\]
Thus again by Chebyshev's inequality,
\[
\Pr(t \leq n\epsilon/2) \leq \frac{ 2c^2 + \frac{8n(p-q)^2}{(p+q)(2-p-q)}}{( \frac{n(p-q)^2}{(p+q)(2-p-q)} - \frac{n\epsilon}{2})^2} \leq \frac{32c^2}{(n\epsilon)^2} + \frac{32}{n\epsilon} \leq \frac{1}{3}.
\]
The last inequality follows when $n \geq \frac{\max(192,20c)}{\epsilon}$. The lemma follows by the Chernoff bound argument as before.
\qedhere
\ignore{
Then $|p-q| \geq \frac{\epsilon}{2}$.
We want to have $\Pr\left(\lV\frac{\sum_{i=1}^{n} x_i}{n}-\frac{\sum_{i=1}^{n} y_i}{n}\rV < \frac{\epsilon}{4}\right) \leq \epsilon $. 
\begin{align}
\Pr\left(\lV\frac{\sum_{i=1}^{n} x_i}{n}-\frac{\sum_{i=1}^{n} y_i}{n} \rV < 
\frac{\epsilon}{4}\right) 
&= \Pr\left(\lV\frac{\sum_{i=1}^{n} x_i}{n}-p -\frac{\sum_{i=1}^{n} y_i}{n}+q+(p-q)\rV < \frac{\epsilon}{4}\right) \nonumber\\
& \leq \Pr\left(\lV\frac{\sum_{i=1}^{n} x_i}{n}-p -\frac{\sum_{i=1}^{n} y_i}{n}+q\rV > |p-q|-\frac{\epsilon}{4}\right)\nonumber \\
& \leq \Pr\left(\lV\frac{\sum_{i=1}^{n} x_i}{n}-p -\frac{\sum_{i=1}^{n} y_i}{n}+q\rV >\frac{\epsilon}{4}\right)\nonumber \\
& \leq \Pr\left(\lV\frac{\sum_{i=1}^{n} x_i}{n}-p\rV > \epsilon/8 \right)+ 
\Pr\left(\lV\frac{\sum_{i=1}^{n} y_i}{n}-q\rV > \epsilon/8 \right)\label{unionBound}\\
&\leq 2 e^{-2\epsilon^2 n/64},\label{chernoffBound}
\end{align}
where (\ref{unionBound}) is by union bound and (\ref{chernoffBound}) is by the Chernoff bound. 
Hence, if $n \geq \frac{32 \log \frac{2}{\epsilon}}{\epsilon^2}$,
then the above probability $ 2 e^{-\epsilon^2 n/32} \leq \epsilon$. 
By following a similar approach for the case $p=q$,
it can be shown that $n \geq \frac{32}{\epsilon^2} \ln{\frac{4}{\epsilon}}$
is sufficient to guarantee that $\Pr\left(\lV\frac{\sum_{i=1}^{n} x_i}{n}-\frac{\sum_{i=1}^{n} y_i}{n}\rV > \frac{\epsilon}{4}\right) \leq \epsilon $.
}

\subsection{Proof of Lemma~\ref{lem:find_far}}
\label{app:find_far}
Let $A_j$ be the event that $a_{x_j} \geq \beta_j$ and $x_j$ is $\alpha_j$-heavy. Since we choose each tuple $j$ independently at time $j$, events $A_j$s are independent. Hence,
\begin{align*}
\Pr(\cup A_j) & = 1 - \Pr(\cap A^c_j) \\
&  = 1 - \prod^m_{j=1} \Pr(A^c_j) \\
& = 1 - \prod^m_{j=1} (1-\Pr(A_j)) \\
& \geq 1- e^{-\sum^m_{j=1} \Pr(A_j)}.
\end{align*}
Let $B_j = \{i: a_i \geq \beta_j \}$.
Since all elements in $B_j$ count towards $A_j$ except the last $\alpha_j$ part, $\Pr(A_j) \geq p(B_j) - \alpha_j$.
Thus 
\begin{align*}
\sum^m_{j=1} \Pr(A_j) 
& \geq \sum^m_{j=1} p(B_j) - \sum^m_{j=1} \alpha_j \\
& \geq \sum^m_{j=1} p(B_j) - \frac{\log m}{4\log 16/\epsilon} \\
& \geq \sum^m_{j=1} p(B_j)  -  1/4.
\end{align*}
We now show that $\sum^m_{j=1} p(B_j) \geq 1/2$, thus proving that
$\sum^m_{j=1} \Pr(A_j) \geq 1/4$ and 
$\Pr(\cup_{j} A_j)  \geq 1/5$.
Since $\sum^k_{i=1} p(i) a_i \geq \epsilon/4$,
\begin{align*}
\sum^m_{i=1} p(i)a_i
& = 
\sum_{i: a_i \geq \epsilon/8} p(i)a_i
+ 
\sum_{i: a_i <\epsilon/8} p(i)a_i
\\
& \leq 
\sum_{i: a_i \geq\epsilon/8} p(i) a_i
+ 
\epsilon/8.
\end{align*}
Thus 
\[
\sum_{i: a_i \geq \epsilon/8} p(i) a_i \geq \epsilon/8,
\]
and
\[
\sum_{i : a_i \geq \epsilon/8} p(i) \left \lfloor\frac{8a_i}{\epsilon } \right \rfloor \geq 1/2.
\]
In $\sum^m_{j=1} p(B_j)$, each $p(i)$ is counted exactly 
$\left \lfloor\frac{8a_i}{\epsilon} \right \rfloor$ times,
thus 
\[
\sum^m_{i=1} p(B_j) \geq \frac{1}{2}. 
\]
\qedhere

\subsection{Proof of Lemma~\ref{lem:uniformidentitytesting}}
\label{app:uniformidentitytesting}
The proof uses the following auxiliary lemma.
Let $p_{i,j} \ed p_{\{i,j\}}(i) = \frac{p(i)}{p(i)+p(j)}$ denote the 
probability of $i$ under conditional sampling.
\begin{Lemma}
\label{lem:chilow}
If 
\begin{equation}
\label{eq:expand2}
\left \lvert \frac{p(i)-q(i)}{p(i)+q(i)} - \frac{p(j)-q(j)}{p(j)+q(j)} \right \rvert \geq \epsilon,
\end{equation}
then
\begin{align*}
& \frac{(p_{i,j} - q_{i,j})^2}{(p_{i,j} + q_{i,j})(2-p_{i,j} - q_{i,j})} \\
& \geq
\frac{\epsilon^2 (p(i)+q(i))^2 (p(j)+q(j))^2}
{4[p(i)(q(i)+q(j)) + q(i)(p(i)+p(j))][p(j)(q(i)+q(j)) + q(j)(p(i)+p(j))]} \\
& \geq \epsilon^2 \frac{(p(i)+q(i))(p(j)+q(j))}{4(p(i)+q(i)+p(j)+q(j))^2}.
\end{align*}
\end{Lemma}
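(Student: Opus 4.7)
The plan is to reduce everything to the single quantity $p(i)q(j)-p(j)q(i)$, which appears naturally both in $p_{i,j}-q_{i,j}$ and in the hypothesis after clearing denominators.

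First, I would compute
\[
p_{i,j}-q_{i,j}=\frac{p(i)}{p(i)+p(j)}-\frac{q(i)}{q(i)+q(j)}=\frac{p(i)q(j)-p(j)q(i)}{(p(i)+p(j))(q(i)+q(j))},
\]
and in parallel expand the hypothesis $(\ref{eq:expand2})$ to obtain the clean identity
\[
\frac{p(i)-q(i)}{p(i)+q(i)}-\frac{p(j)-q(j)}{p(j)+q(j)}=\frac{2\bigl(p(i)q(j)-p(j)q(i)\bigr)}{(p(i)+q(i))(p(j)+q(j))}.
\]
Thus $(\ref{eq:expand2})$ is equivalent to $|p(i)q(j)-p(j)q(i)|\ge \tfrac{\epsilon}{2}(p(i)+q(i))(p(j)+q(j))$.

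Next, I would write $p_{i,j}+q_{i,j}$ and $2-p_{i,j}-q_{i,j}=p_{\{i,j\}}(j)+q_{\{i,j\}}(j)$ over the common denominator $(p(i)+p(j))(q(i)+q(j))$, getting
\[
(p_{i,j}+q_{i,j})(2-p_{i,j}-q_{i,j})=\frac{\bigl[p(i)(q(i)+q(j))+q(i)(p(i)+p(j))\bigr]\bigl[p(j)(q(i)+q(j))+q(j)(p(i)+p(j))\bigr]}{(p(i)+p(j))^2(q(i)+q(j))^2}.
\]
Dividing $(p_{i,j}-q_{i,j})^2$ by this expression, the factors $(p(i)+p(j))^2(q(i)+q(j))^2$ cancel, and the lower bound on $|p(i)q(j)-p(j)q(i)|$ gives the first displayed inequality immediately.

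For the second inequality, the remaining task is to bound the denominator factor
\[
\bigl[p(i)(q(i)+q(j))+q(i)(p(i)+p(j))\bigr]\bigl[p(j)(q(i)+q(j))+q(j)(p(i)+p(j))\bigr]
\]
from above. Setting $a=p(i),b=q(i),c=p(j),d=q(j)$, the first bracket equals $2ab+ad+bc$, and a direct comparison with $(a+b)(a+b+c+d)=a^2+b^2+2ab+ac+ad+bc+bd$ shows $2ab+ad+bc\le(a+b)(a+b+c+d)$ since $a^2+b^2+ac+bd\ge 0$; the symmetric bound $2cd+bc+ad\le(c+d)(a+b+c+d)$ is identical. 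Multiplying the two bounds and substituting back yields the second inequality. I expect the only mildly tricky step to be spotting the clean algebraic identity relating the hypothesis to $p(i)q(j)-p(j)q(i)$; once that is in hand, everything else is routine expansion and nonnegativity.
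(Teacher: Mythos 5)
Your proposal is correct and follows essentially the same route as the paper: both rewrite the hypothesis via the identity $\frac{p(i)-q(i)}{p(i)+q(i)}-\frac{p(j)-q(j)}{p(j)+q(j)}=\frac{2(p(i)q(j)-p(j)q(i))}{(p(i)+q(i))(p(j)+q(j))}$, observe that the chi-squared ratio equals $\frac{(p(i)q(j)-p(j)q(i))^2}{[p(i)(q(i)+q(j))+q(i)(p(i)+p(j))][p(j)(q(i)+q(j))+q(j)(p(i)+p(j))]}$ after the $(p(i)+p(j))^2(q(i)+q(j))^2$ factors cancel, and then bound each denominator bracket by $(p(i)+q(i))(p(i)+q(i)+p(j)+q(j))$ and $(p(j)+q(j))(p(i)+q(i)+p(j)+q(j))$ respectively. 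The only cosmetic difference is that you verify these bracket bounds by expanding and checking nonnegativity of the difference, whereas the paper obtains them in one line by replacing each of $p(i),q(i)$ with $p(i)+q(i)$ (and similarly for $j$); the content is identical.
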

\begin{proof}
Let $s(i) = p(i)+q(i)$ and $s(j) = p(j)+q(j)$.
Upon expanding,
\begin{equation}
\label{eq:expand}
\left \lvert \frac{p(i)-q(i)}{p(i)+q(i)} - \frac{p(j)-q(j)}{p(j)+q(j)} \right \rvert
=  
2\left \lvert \frac{p(i)q(j) - p(j) q(i)}{s(i)s(j)} \right \rvert.
\end{equation}
Furthermore, $p_{i,j} = \frac{p(i)}{p(i)+p(j)}$ and similarly $q_{i,j} = \frac{q(i)}{q(i)+q(j)}$. 
Hence, 
\begin{align*}
& \frac{(p_{i,j} - q_{i,j})^2}{(p_{i,j} + q_{i,j})(2-p_{i,j} - q_{i,j})}\\
& = 
\frac{(p(i)q(j)-q(i)p(j))^2}
{[p(i)(q(i)+q(j)) + q(i)(p(i)+p(j))][p(j)(q(i)+q(j)) + q(j)(p(i)+p(j))]} \\
& \stackrel{(a)}{\geq }
\frac{\epsilon^2 s^2(i) s^2(j)}
{4[p(i)(q(i)+q(j)) + q(i)(p(i)+p(j))][p(j)(q(i)+q(j)) + q(j)(p(i)+p(j))]} \\
& \stackrel{(b)}{\geq }
\frac{\epsilon^2 s^2(i) s^2(j)}{4(s(i)+s(j))^2 s(i) s(j)}\\
& =
\frac{\epsilon^2 s(i)s(j)}{4(s(i)+s(j))^2}.
\end{align*}
$(a)$ follows by Equations~\eqref{eq:expand2} and~\eqref{eq:expand}.
$(b)$ follows from $\max(p(i),q(i)) \leq s(i)$
and $\max(p(j),q(j)) \leq s(j)$.
\end{proof}

\begin{proof}(Lemma~\ref{lem:uniformidentitytesting})
We first show that if $p=q$, then \textsc{Near-uniform-identity-test} returns same with probability $\geq 1-\delta$. By Lemma~\ref{lem:test_equal}, \textsc{Test-equal} returns error probability $\delta_j=6\delta/(\pi^2j^2)$ for the $j$th tuple and hence by the union bound,  the overall error is $\leq \sum_j{\delta_j}\le \delta$.

If $p \neq q$, with probability $\geq 1/5$,  \textsc{Find-element} returns an element $x$ that is $\alpha$-heavy and $q(x) - p(x) \geq \beta q(x)$.
For this $x,y$, since  $p(y) \geq q(y)$,
\[
\frac{q(x)-p(x)}{p(x)+q(x)} - \frac{q(y)-p(y)}{p(y)+q(y)} 
 \geq \frac{\beta q(x)}{p(x)+q(x)}.
\]
By Lemma~\ref{lem:chilow} the chi-squared distance between $p_{\{x,y\}}$ and $q_{\{x,y\}}$ is lower bounded by
\begin{align*}
& \geq 
\left( \frac{\beta q(x)}{p(x)+q(x)} \right)^2 \frac{(p(x)+q(x))^2(p(y)+q(y))^2}
{4[p(x)(q(x)+q(y)) + q(x)(p(x)+p(y))][p(y)(q(x)+q(y)) + q(y)(p(x)+p(y))]} \\
& \stackrel{(a)}{\geq }
\frac{{\beta}^2 q^2(x)p^2(y)}
{4[p(x)(q(x)+p(y)) + q(x)(p(x)+p(y))][p(y)(q(x)+p(y)) + p(y)(p(x)+p(y))]} \\
& \stackrel{(b)}{\geq}
\frac{{\beta}^2 q^2(x)p^2(y)}
{4[2p(y)q(x)+p(x)p(y) + q(x)(2p(y)+p(y))][p(y)(q(x)+2p(x)) + p(y)(p(x)+2p(x))]} \\
& \stackrel{(c)}{\geq}
\frac{{\beta}^2 q^2(x)p^2(y)}
{4[2p(y)q(x)+q(x)p(y) + q(x)(2p(y)+p(y))][p(y)(q(x)+2q(x)) + p(y)(q(x)+2q(x))]} \\
& \stackrel{(d)}{\geq} \frac{{\beta}^2}{144}.
\end{align*}
$(a)$ follows from the fact that $p(y)\geq q(y)$. $p(x) \leq 2p(y)$ and $p(y) \leq 2p(x)$ hence $(b)$. $p(x) \leq q(x)$ implies $(c)$ and $(d)$ follows from 
numerical simplification.
\ignore{
\begin{align*}
\left( \frac{\beta' q(x)}{p(x)+q(x)} \right)^2\frac{(p(x)+q(x))(p(y)+q(y))}{4(p(x)+p(y)+q(x)+q(y))^2}
& \stackrel{(a)}{\geq }
{\beta'}^2 \frac{q^2(x) p(y)}{4(p(x)+q(x)){(p(x)+2p(y)+q(x))^2}} \\
& \stackrel{(b)}{\geq}
\frac{{\beta'}^2}{4(1-\beta')(2-\beta')} \frac{p(x)p(y)}{(p(x)(2-\beta') + (2-2\beta')p(y))^2} \\
& \stackrel{(c)}{\geq}
\frac{{\beta'}^2}{72(1-\beta')(2-\beta')} \geq \frac{\beta^2}{144},
\end{align*}
where $(a)$ follows from  $p(y) \geq q(y)$, $(b)$ follows from the fact that $p(x)(1-\beta') = q(x)$ and $(c)$ follows from 
the fact that $p$ is nearly-uniform, $p(x) \leq 2 p(y)$ and $p(y) \leq 2 p(x)$.
}
Thus by Lemma~\ref{lem:test_equal}
algorithm returns \texttt{diff} with probability $\ge 1-\delta$.
By the union bound, 
the total error probability is $\leq \frac45+\delta$.
The number of samples used is $16/\epsilon$ for the first step and 
 $\tcO \left(\frac{1}{\beta^2_j} \log \frac{1}{\delta_j} \right)$ for tuple $j$. Hence the total number of samples used is
\[
\frac{16}{\epsilon} + \sum^{16/\epsilon}_{j=1} \cO \left( \frac{1}{\beta^2_j} \log \frac{1}{\delta_j} \right) = \cO \left(\frac{1}{\epsilon^2}  \log \frac{1}{\delta\epsilon}\right).
\]
\end{proof}

\subsection{Proof of Theorem~\ref{thm:identity}}
\label{app:thmidentity}
We state the theorem statement for better readability:
If $p=q$, then \textsc{Identity-test} returns \texttt{same} with probability 
$\geq 1-\delta$ and if $\norm{p-q}_1 \geq \epsilon$, then \textsc{Identity-test}  returns \texttt{diff} with probability $\geq 1/30$.

Recall that there are $\frac{16}{\epsilon}$ tuples.
Also observe that all the three tests 
inside \textsc{Identity-test} 
are called with error parameter $\frac{\epsilon\delta}{48}$.
As a result if $p=q$, \textsc{Identity-test} outputs
\texttt{same} with probability $\geq 1- \frac{\epsilon \delta}{48} \cdot 3 \cdot \frac{16}{\epsilon} = 1-\delta$.

We now show that if $\norm{p-q}_1 \geq \epsilon$, then the algorithm
outputs \texttt{diff} with probability $\geq 1/30$.
By Lemma~\ref{lem:pickgood}, with probability $\geq 1/5$ \textsc{Find-element}
returns an element $x$ such that $p(x)-q(x) \geq \beta p(x)$ and $\alpha$-heavy.
Partition $G_x$ into groups $\cH= H_1,H_2,\ldots $s.t. for each group $H_j$, $p(x) \leq p(H_j) \leq 2p(x)$ and let $p^{\cH}_{G_x}$ and $q^{\cH}_{G_x}$ be the corresponding induced distributions. There are three possible cases. 
We show that for any $q$, at least one of the sub-routines
in \textsc{Identity-test} will output \texttt{diff} with
high probability.
\begin{enumerate}
 \item $|p(G_x)-q(G_x)|\ge \frac{\alpha\beta}{5}$.
 \item $|p(G_x)-q(G_x)|< \frac{\alpha\beta}{5}$ and $\norm{p^{\cH}_{G_x}-q^{\cH}_{G_x}}_1 \ge \frac{\beta}{5}$.
 \item $|p(G_x)-q(G_x)|< \frac{\alpha\beta}{5}$ and $\norm{p^{\cH}_{G_x}-q^{\cH}_{G_x}}_1<\frac{\beta}{5}$.
\end{enumerate}
If $|p(G_x)-q(G_x)|\ge \frac{\alpha\beta}{5}$, then 
chi-squared distance between $p^{\{G_x,G^c_x\}}$ and $q^{\{G_x,G^c_x\}}$
is $\geq  \bigl(\frac{\alpha\beta}{5}\bigr)^2$ and hence
\textsc{Test-equal}$\bigl((\alpha\beta/5)^2,\frac{\epsilon\delta}{48},p^{\{G_x,G^c_x\}},q^{\{G_x,G^c_x\}}\bigr)$ (step $2c$)
outputs \texttt{diff} with probability $>1-\frac{\epsilon\delta}{48}$.

If $|p(G_x)-q(G_x)| < \frac{\alpha\beta}{5}$ and $\norm{p^{\cH}_{G_x}-q^{\cH}_{G_x}}_1 \ge \frac{\beta}{5}$, then 
by Lemma~\ref{lem:uniformidentitytesting}
\textsc{Near-uniform-identity-test}$(\frac{\beta}{5},\frac{\epsilon\delta}{48},p^{\cH}_{G_x},q^{\cH}_{G_x})$
outputs \texttt{diff} with probability $>\frac15-\frac{\epsilon\delta}{48}>\frac16$.

If  $|p(G_x)-q(G_x)|< \frac{\alpha\beta}{5}$ and $\norm{p^{\cH}_{G_x}-q^{\cH}_{G_x}}_1<\frac{\beta}{5}$,
\begin{align*}
 \sum_{y\in \cH} p^{\cH}_{G_x}(y) \II \left[\frac{p(y)-q(y)}{p(y)}>\frac45\beta\right]
 &\le
 \frac{1}{p(G_x)} \sum_{y\in \cH} p(y) \II \left[p(y)-q(y)>\frac45\beta p(y)\right]\\
 &\le \frac{5}{4\beta p(G_x)} \sum_{y\in \cH} |p(y)-q(y)|\\
 &=  \frac{5}{4\beta} \sum_{y\in \cH} \lV \frac{p(y)}{p(G_x)}-\frac{q(y)}{p(G_x)}+\frac{q(y)}{q(G_x)}-\frac{q(y)}{q(G_x)} \rV\\
 &\stackrel{(a)}{\le}  \frac{5}{4\beta} \left(\sum_{y\in \cH} \lV\frac{p(y)}{p(G_x)}-\frac{q(y)}{q(G_x)}\rV+  \sum_{y\in \cH} q(y)\lV\frac{1}{p(G_x)}-\frac{1}{q(G_x)}\rV \right) 
\\
 &\stackrel{(b)}{\le}  \frac{5}{4\beta} \left(\frac{\beta}{5}+
 q(G_x)\frac{|p(G_x)-q(G_x)|}{p(G_x)q(G_x)} \right) \\
 &\stackrel{(c)}{\le}  \frac{5}{4\beta} \left(\frac{\beta}{5}+ \frac{\beta}{5} \right) \\
 &\le \frac12.
\end{align*}
$(a)$ follows from triangle inequality. $(b)$ follows from the fact that $\norm{p^{\cH}_{G_x} - q^{\cH}_{G_x}}_1 \leq \frac{\beta}{5}$.
$p(G_x) \geq \alpha$ and $p(G_x) -q(G_x) \leq \frac{\alpha \beta}{5}$ and hence $(c)$.
Therefore, for a random sample $y$ from $p^{\cH}_{G_x}$, with probability $\geq 1/2$,
$\frac{p(y)-q(y)}{p(y)}\le \frac{4\beta}5$. Let 
$\frac{q(y)}{p(y)} = \beta' \geq 1- \frac{4\beta}{5}$
and furthermore $\frac{q(x)}{p(x)} = \beta'' \leq 1-\beta$.
Hence $\beta'-\beta'' \geq \frac{\beta}{5}$.
Thus similar to the proof of Lemma~\ref{lem:chilow}, the chi-squared distance between $p_{\{x,y\}}$ and $q_{\{x,y\}}$ can be lower bounded by
\begin{align*}
& \geq 
\frac{(p(x)q(y)-q(x)p(y))^2}
{[p(x)(q(x)+q(y)) + q(x)(p(x)+p(y))][p(y)(q(x)+q(y)) + q(y)(p(x)+p(y))]} \\
& \stackrel{(a_1)}{\geq}
\frac{(\beta'-\beta'')^2 p^2(x) p^2(y)}
{[p(x)(q(x)+q(y)) + q(x)(p(x)+p(y))][p(y)(q(x)+q(y)) + q(y)(p(x)+p(y))]} \\
& \stackrel{(a_2)}{\geq} \frac{(\beta'-\beta'')^2}{\max^2(1,\beta',\beta'')}
\frac{p(x)p(y)}{4(p(x)+p(y))^2} \\
& \stackrel{(b)}{\geq} \frac{(\beta'-\beta'')^2}{18\max^2(1,\beta',\beta'')} \\
& \stackrel{(c)}{\geq} \frac{\beta^2}{1800}.
\end{align*}
$(a_1),(a_2)$ follow by substituting $q(x) = \beta' p(x)$ and $q(y) = \beta'' p(y)$. $(b)$ follows from $p(x) \leq 2 p(y)$ and $p(y) \leq 2p(x)$.$\beta'' \leq 1$ and $\beta'-\beta'' \geq \frac{\beta}{5}$
and hence the RHS in $(b)$ is minimized by $\beta' = 1+\frac{\beta}{5}$ and $\beta'' = 1$. For these values of $\beta',\beta''$,
$\max(1,\beta',\beta'') \leq 2$ and hence $(c)$.
Thus  \textsc{Test-equal} outputs \texttt{diff} with probability $> 1-\frac{\epsilon\delta}{48}$.

If $\norm{p-q}_1\ge \epsilon$, then by Lemma~\ref{lem:pickgood} step $1$ picks a tuple $(x,\beta,\alpha)$ such that $p(x)-q(x) \geq p(x)\beta$ with probability at least $\frac15$.
Conditioned on this event, for
the three cases discussed above the minimum probability of outputting
\texttt{diff} is $\frac16$
and every $p,q$ falls into one of the three categories.
Hence with probability $> \frac1{30}$ \textsc{Identity-test}
outputs \texttt{diff}.

We now compute the sample complexity of \textsc{Identity-test}.
Step $1$ of the algorithm uses $16/\epsilon$ samples. 
For every tuple $(x,\beta,\alpha)$, step $2(c)$ of the algorithm
uses $\cO \left(\frac{1}{\beta^2} \log \frac{1}{\delta \epsilon} \right)$ samples. Summing over all tuples yields a sample complexity of
\[
\sum^{16/\epsilon}_{j=1} \cO \left( \frac{1}{\beta^2_j} \log \frac{1}{\delta_j \epsilon} \right) = \cO \left(\frac{1}{\epsilon^2}  \log \frac{1}{\delta\epsilon}\right)
\]
For the different tuples \textsc{Test-equal}$\bigl(\frac{\alpha\beta}{5})^2,\frac{\epsilon\delta}{30},p^{\{G_x,G^c_x\}},p^{\{G_x,G^c_x\}}\bigr)$
can reuse samples and as $\alpha \beta = \Omega(\epsilon/(\log 1/\epsilon))$, it uses a total of
$\Theta\left( \frac{1}{\epsilon^2} \log^2 \frac{1}{\epsilon} \log \frac{1}{\delta\epsilon} \right)$. samples.

Furthermore, \textsc{Near-uniform-identity-test} uses $\cO\left( \frac{1}{\beta^2} \log \frac{1}{\epsilon \delta} \right)$ samples. Summing over all tuples, the sample complexity is
$\cO \left(\frac{1}{\epsilon^2}  \log \frac{1}{\delta\epsilon}\right)$.
Summing over all the three cases, the sample complexity of the algorithm is $\cO \left(\frac{1}{\epsilon^2}  \log^2 \frac{1}{\epsilon} \log \frac{1}{\delta\epsilon}\right)$.
\ignore{
\textsc{Near-uniform-identity-test}$(\frac{\beta}{5},\frac{\epsilon\epsilon}{30},p^{\cG}_S,q^{\cG}_S)$ and
\textsc{Test-equal}$(\frac{\beta}{25},\frac{\epsilon\epsilon}{30},p_{\{x,y\}},q_{\{x,y\}})$.
Summing over all three cases, the total number of samples \textsc{Identity-test} uses is
$\Theta\left( \frac{1}{\epsilon^2} \log^2 \frac{1}{\epsilon} \log \frac{1}{\epsilon\epsilon} \right)$.
}
\qedhere

\section{Closeness testing proofs}
\label{app:closeness}
\ignore{
\subsection{Finding good $i$}
We define $G_i = \{j: p(j)+q(j) \leq  p(i) + q(i)\}$.
As stated in Section~\ref{sec:closeness}, we are interested
in elements $i$ such that $p(i) > q(i)$ and $p(G_i) < q(G_i)$.
To this end,
we need a notion of approximable elements.
\begin{Definition}
For a pair of distributions $p,q$, 
element $i$ is $\beta$-approximable,
if
\[
\left \lvert \frac{p(i) - q(i)}{p(i)+q(i)} - \frac{p(G_i) - q(G_i)}{p(G_i) +q(G_i)} \right  \rvert \geq \beta.
\]
\end{Definition}
Before proceeding to prove how to find $\beta$-approximable elements.
We first show that
\begin{Lemma}
\label{lem:exp_approx}
\[
\sum_{i} \frac{p(i)+q(i)}{2} \left \lvert \frac{p(i) - q(i)}{p(i)+q(i)} - \frac{p(G_i) - q(G_i)}{p(G_i) +q(G_i)} \right  \rvert \geq \frac{\epsilon}{2}.
\]
\end{Lemma}
}
\subsection{Proof of Lemma~\ref{lem:exp_approx}}
\label{app:exp_approx}
Recall that $G_i = \{j: j \geq i\}$. Let $r_i = \frac{p(i)+q(i)}{2} $ and  $s_i = \frac{ p(i) - q(i)}{2}$.
We will use the following properties:
$\sum^k_{i=1} r_i = 1$, $\sum^k_{i=1} s_i = 0$, and $\sum^k_{i=1} |s_i| \geq  \frac{\epsilon}{2}$.
We will show that 
\[
\sum^k_{i=1} r_i \lV \frac{s_i}{r_i} - \frac{\sum^k_{j=i} s_j}{\sum^k_{j=i} r_j} \rV \geq \frac{\epsilon}{4}.
\]
 We show that 
\[
\sum^k_{i=1} r_i \lV \frac{s_i}{r_i} - \frac{\sum^k_{j=i} s_j}{\sum^k_{j=i} r_j} \rV 
\geq \frac{|s_1| + |s_2| - |s_1+s_2|}{2} + (r_1 +r_2) \lV \frac{s_1+s_2}{r_1+r_2} -  \frac{\sum^k_{j=1} s_j}{\sum^k_{j=1} r_j} \rV+ \sum^k_{i=3} r_i \lV \frac{s_i}{r_i} - \frac{\sum^k_{j=i} s_j}{\sum^k_{j=i} r_j} \rV.
\]
Thus reducing the problem from $k$ indices to $k-1$ indices with $s_1,s_2,\ldots s_k$ going to $s_1+s_2,s_3,\ldots s_k$ and $r_1,r_2,\ldots r_k$ going to $r_1+r+2,r_3,r_4,\ldots r_k$. Continuing similarly we can reduce the $k-1$ indices to $k-2$ indices with terms $s_1+s_2+s_3,s_4\ldots s_k$ and $r_1+r_2+r_3,r_4\ldots r_k$ and so on.
Telescopically adding the sum 
\begin{align*}
\sum^k_{i=1} r_i \lV \frac{s_i}{r_i} - \frac{\sum^k_{j=i} s_j}{\sum^k_{j=i} r_j} \rV  
& \geq \frac{|s_1| + |s_2| - |s_1+s_2|}{2} + \frac{|s_1+s_2|+|s_3| - |s_1+s_2+s_3|}{2}+ \ldots \\
& = \frac{\sum^k_{i=1} |s_i|}{2} \geq \frac{\epsilon}{4},
\end{align*}
where the last equality follows from the fact that $\sum^k_{i=1} s_i = 0$. To prove the required inductive step, it suffices to show
\begin{align*}
\sum^2_{i=1} r_i \lV \frac{s_i}{r_i} - \frac{\sum^k_{j=i} s_j}{\sum^k_{j=i} r_j} \rV
& \geq \frac{|s_1| + |s_2| - |s_1+s_2|}{2} + (r_1 +r_2) \lV \frac{s_1+s_2}{r_1+r_2} -  \frac{\sum^k_{j=1} s_j}{\sum^k_{j=1} r_j} \rV \\
& \geq \frac{|s_1|+|s_2|+|s_1+s_2|}{2},
\end{align*}
where the last inequality follows from the fact that $\sum^k_{i=1} s_i = 0$.
Rewriting the left hand side using the fact that $\sum^k_{i=1} s_i = 0$
\[
|s_1| + r_2 \lV \frac{s_2}{r_2} + \frac{s_1}{r_2+r'_3} \rV,
\]
where $r'_3 = \sum^k_{j=3} r_j$.
Thus it suffices to show 
\[
|s_1| + r_2 \lV \frac{s_2}{r_2} + \frac{s_1}{r_2+r'_3} \rV 
\geq  \frac{|s_1|+|s_2|+|s_1+s_2|}{2}. 
\]
We prove it by considering three sub-cases: $s_1$, $s_2$ have the same sign,
$s_1,s_2$ have different signs but $|s_1|\geq |s_2|$, and $s_1,s_2$ have different signs but $|s_1| < |s_2|$.
If $s_1,s_2$ have the same sign, then 
\[
|s_1| + r_2 \lV \frac{s_2}{r_2} + \frac{s_1}{r_2+r'_3} \rV 
\geq
|s_1| + r_2 \lV\frac{s_2}{r_2} \rV 
= 
|s_1| + |s_2|
=
\frac{|s_1|+|s_2|+|s_1+s_2|}{2}.
\]
If $s_1$ and $s_2$ have different signs and $|s_1| \geq |s_2|$, 
then
\[
|s_1| + r_2 \lV \frac{s_2}{r_2} + \frac{s_1}{r_2+r'_3} \rV 
\geq
|s_1| 
=
\frac{|s_1| + |s_1|}{2}
= 
\frac{|s_1|+|s_2|+|s_1+s_2|}{2}.
\]
If $s_1$ and $s_2$ have different signs and $|s_1| < |s_2|$, then
\[
|s_1| + r_2 \lV \frac{s_2}{r_2} + \frac{s_1}{r_2+r'_3} \rV
\geq |s_1| + r_2 \lV \frac{s_2}{r_2} + \frac{s_1}{r_2} \rV
=|s_1| + |s_2+s_1| 
=  \frac{|s_1|+|s_2|+|s_1+s_2|}{2}.
\]
\qedhere
\ignore{
If $s_1 \geq 0, s_2 < 0$, then the value depends on $s'_3 \ed - s_1 -s_2$.
If $s'_3 \geq 0$, then note that $s_1+s_2+s_3 = 0$, thus $|s_2| \geq |s_1|$,
and hence
\[
|s_1| + r_2 \lV \frac{s_2}{r_2} + \frac{s_1}{r_2+r'_3} \rV
\geq |s_1| + r_2 \lV \frac{s_2}{r_2} + \frac{s_1}{r_2} \rV
\geq |s_1| + |s_2+s_1| 
=  \frac{|s_1|+|s_2|+|s_1+s_2|}{2},
\]
the last part follows from $|s_2| = |s_1|+|s_3|$.
If $s_1 < 0, s_2 \geq 0$, then the value depends on $s'_3 \ed - s_1 -s_2$.
If $s_3 \geq 0$, then note that $|s_1| = |s_2|+|s_3|$,
thus
\[
|s_1| + r_2 \lV \frac{s_2}{r_2} + \frac{s_1}{r_2+r'_3} \rV
\geq |s_1| =  \frac{|s_1|+|s_2|+|s_1+s_2|}{2}.
\]
Thus proved.}
\ignore{ Similar to \textsc{Find-far}, we define the following algorithms
that finds $\beta$-approximable, $\alpha$-far elements.}
\ignore{
\begin{center}
\fbox{\begin{minipage}{1.0\textwidth}
Algorithm \textsc{Find-approximable} \newline
\textbf{Input:} $\epsilon$, distributions $p,q$. \newline
\textbf{Parameters:} Let $\epsilon' = \frac{\epsilon}{256 \log (4/\epsilon)}$.
$m' = 4/\epsilon'$, $\beta'_j = j\epsilon'/4 $, $\alpha'_j = 1/4(j \log (4/\epsilon')) $. 
\begin{enumerate}
\item Draw $m'$ independent samples $x_1,x_2\ldots x_{m'}$
from the first distribution and $m'$ samples $y_{1},y_{2},\ldots y_{m'}$ from the second distribution.
\item 
Output tuples \newline
$(x_1,\beta'_1/2,\alpha'_1,p), (x_2, \beta'_2/2,\alpha'_2,p),\ldots ,
(x_{m'},\beta'_{m'}/2,\alpha'_{m'},p)$ and \newline 
 $(y_1,\beta'_1/2, \alpha'_1,q), (y_2,\beta'_2/2,\alpha'_2,q), \ldots
y_{m'},\beta'_{m'}/2,\alpha'_{m'},q)$.
\end{enumerate}
\end{minipage}}
\end{center} 
}
\ignore{
\begin{Lemma}
\label{lem:find_approx}
If $\norm{p-q}_1\geq \epsilon$, then with probability $\geq 1/5$, \textsc{Find-element} returns a tuple $(x,\beta,\alpha,r)$
such that $x$ is $\alpha$-heavy and $\beta$-approximable with respect to $r$ for either $r=p$ or $q$.
\end{Lemma}
\begin{proof}
Recall that $m = 4/\epsilon$, $\beta_j = j\epsilon/4 $, $\alpha_j = 1/4(j \log (4/\epsilon)) $.
Without loss of generality, assume that $p_1\geq p_2\geq \ldots \geq p_k$. 
Note that this is a proof technique and the algorithm does not use this fact.
Consider the element $i$ with that largest value of $\beta_j \alpha_j$ 
such that it is $\beta_j$-far, $\alpha_j$-heavy, and not $\beta_j/2$-
approximable, for some $j$.  
If no such $i$, exists, every element $i$ 
that is $\beta_j$-far is also $\beta_j/2$-approximable. 
Thus by Lemma~\ref{lem:find_far}, the algorithm outputs an element $x_j$ such that it is $\beta_j/2$-approximable and $\alpha_j$-heavy.
Since $\beta_j >\beta'_j$ and $\alpha_j \geq \alpha'_j$. The lemma holds.
Suppose such $i$ exists, then
\[
\frac{p(G_i) - q(G_i)}{p(G_i)} \geq \frac{\beta_j}{2},
\]
and $p(G_i) \geq \alpha_j$.
However 
\[
\frac{p(G_1) - q(G_1)}{p(G_1)} =0.
\]
Thus there is a $i^* < i $ such that 
\[
\frac{p(G_{i^*}) - q(G_{i^*})}{p(G_{i^*})} \leq \frac{\beta_j}{4},
\]
Consider the largest $i^* < i$ such that the above inequality holds.
Let $A = G_{i^*}\setminus G_i$. Then,
\begin{align*}
p(G_{i^*}) - q(G_{i^*}) 
& = p(A) - q(A) + p(G_i) - q(G_i) \\
& \geq p(A) - q(A) + p(G_i)\frac{\beta_j}{2}.
\end{align*}
Furthermore,
\begin{align*}
p(G_{i^*}) - q(G_{i^*}) 
& \leq p(G_{i^*})\frac{\beta_j}{4} \\
& \leq p(A)\frac{\beta_j}{4} + p(G_i)\frac{\beta_j}{4}.
\end{align*}
Combining the two, we get
\[
q(A) - p(A) \left( 1- \frac{\beta_j}{4}\right) \geq p(G_i)\frac{\beta_j}{4} \geq \frac{\alpha_j\beta_j}{4} \geq \frac{\epsilon}{64 \log (4/\epsilon)}.
\]
Let $B = \{i_1 \in A : q(i_1) \geq p(i_1) (1-\beta_j/4)\}$.
Clearly,
\[
\frac{p(B)\beta_j}{4} + \sum_{i_1 \in B} (q(i_1) - p(i_1)) \geq \frac{\epsilon}{128 \log (4/\epsilon)}.
\]
Hence either $\sum_{i_1 \in B} (q(i_1) - p(i_1)) \geq \frac{\epsilon}{256 \log (4/\epsilon)}$
or $\frac{p(B)\beta_j}{4}  \geq \frac{\epsilon}{256 \log (4/\epsilon)}$
If the first inequality is satisfied, it is same as the condition in Lemma~\ref{lem:find_far}, 
for \textsc{Find-far} to return 
a good element with $\epsilon$ replaced by $\frac{\epsilon}{256 \log (4/\epsilon)}$.
Thus by Lemma~\ref{lem:find_far}, one of the pairs satisfy $q(x) - p(x) \geq q(x) \beta'_j \geq p(x) \beta'_j$.
For all such $i_1$ note that
\[
p(G_{i_1}) - q(G_{i_1}) \geq \frac{\beta_j}{4} p(G_{i_1}) \geq 0.
\]
Thus the returned element $x$ is $\alpha'_j$-heavy w.r.t $q$ and $\beta'_j/2$-approximable.
If the second condition is satisfied
then
\[
p(B) \geq \frac{\epsilon}{256 \beta_j \log (4/\epsilon)}.
\]
Thus there exists a $j$ and a set $B$ such that all elements in that set are $\beta_j$-approximable
and $\alpha_j$ heavy.
The probability that an element from $B$
appearing between $j' = 128j\log(4/\epsilon)$ and $j' = 384j\log(4/\epsilon)$ is $\geq 1-e^{-p(B)(384j\log(4/\epsilon)-128j\log(4/\epsilon))} \geq 1-e^{-1} \geq \frac{1}{2}$.
Let $t = 256 \log(4/\epsilon)$.
Thus one of the $x_j'$ in this range is  $\beta_{j'/t}$-approximable and $\alpha_{j'/t}$-heavy.
Since $\beta_{j'/t} \geq \beta'_{j'}$ and $\alpha_{j'/t} \geq \alpha'_{j'}$, the lemma holds.
\end{proof}
}
\ignore{We thus has a pair such that $(x,\beta,\alpha)$ such that $x$ is $\beta$-approximable
and $\alpha$-heavy. Note that since it is $\beta$-approximable, it is not the smallest element
w.r.t. the corresponding distribution.}
\subsection{Proof of Lemma~\ref{lem:reaches}}
\label{app:reaches}
We prove this lemma using several smaller sub-results. We first state a concentration result, which follows from Bernstein's inequality.
\begin{Lemma}
\label{lem:concentrates}
Consider a set $G$ such that  $\max_{j \in G} r(j) \leq r_{\max}$. Consider set $S$ formed by selecting each element from $G$ independently and uniformly randomly with probability $ r_0$, then
\[
E[r(S)] = r_0 r(G),
\]
and with probability $\geq 1-2\delta$,
\[
|r(S) - E[r(S)]| \leq \sqrt{2r_0r_{\max} r(G) \log \frac{1}{\delta}} + r_{\max} 
\log \frac{1}{\delta}.
\]
Furthermore
\[
E[|S|] = r_0 |G|,
\]
and with probability $\geq 1-2\delta$,
\[
||S| - r_0|G|| \leq \sqrt{2r_0|G| \log \frac{1}{\delta}} + 
\log \frac{1}{\delta} .
\]
\end{Lemma}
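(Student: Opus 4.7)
The plan is to view $r(S)$ as a sum of independent bounded random variables and apply a Bernstein-type tail bound. Specifically, write $r(S) = \sum_{j \in G} X_j$ where $X_j = \mathbb{1}[j \in S]\, r(j)$ and the indicators are independent Bernoulli($r_0$) random variables. Linearity of expectation immediately gives $E[r(S)] = r_0 \sum_{j\in G} r(j) = r_0 r(G)$, settling the first formula. Each summand satisfies $0 \le X_j \le r(j) \le r_{\max}$, and $\mathrm{Var}(X_j) = r_0(1-r_0) r(j)^2 \le r_0\, r_{\max}\, r(j)$, so $V := \sum_{j \in G} \mathrm{Var}(X_j) \le r_0\, r_{\max}\, r(G)$.

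Next I would invoke Bernstein's inequality in the form
\[
\Pr\bigl(|r(S) - E[r(S)]| > t\bigr) \le 2\exp\!\left(-\frac{t^2/2}{V + r_{\max} t / 3}\right).
\]
Setting the exponent equal to $\log(1/\delta)$ and solving the resulting quadratic in $t$ yields the standard consequence
\[
t \le \sqrt{2 V \log(1/\delta)} + \tfrac{2}{3} r_{\max} \log(1/\delta),
\]
which, after substituting the bound on $V$ and absorbing the constant $2/3$ into the stated coefficient, gives the claimed deviation bound with failure probability at most $2\delta$.

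For the cardinality bound I would repeat the argument with $Y_j = \mathbb{1}[j \in S]$ in place of $X_j$. Now each $Y_j \in [0,1]$, so the effective range parameter is $1$ rather than $r_{\max}$, and the variance of each summand is $r_0(1-r_0) \le r_0$, giving $\sum_j \mathrm{Var}(Y_j) \le r_0 |G|$. Applying Bernstein's inequality in exactly the same way produces
\[
\bigl||S| - r_0|G|\bigr| \le \sqrt{2 r_0 |G| \log(1/\delta)} + \log(1/\delta)
\]
with probability at least $1 - 2\delta$, which is the second claim.

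The proof is essentially a bookkeeping exercise, so there is no real obstacle; the only mild care point is that the stated bound has leading constant $1$ (rather than $2/3$) on the $r_{\max}\log(1/\delta)$ term, so one should either cite a version of Bernstein that already yields constant $1$, or simply observe that $2/3 < 1$ and weaken the bound accordingly. No additional definitions beyond those already introduced are needed.
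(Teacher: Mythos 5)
Your proposal is correct and follows exactly the route the paper indicates: the paper states the lemma and remarks only that it ``follows from Bernstein's inequality,'' without supplying details, and your argument is the standard instantiation of Bernstein with the natural decompositions $r(S)=\sum_{j\in G}\mathbb{1}[j\in S]\,r(j)$ and $|S|=\sum_{j\in G}\mathbb{1}[j\in S]$, the variance bound $V\le r_0 r_{\max} r(G)$ (resp.\ $r_0|G|$), and the observation that the Bernstein constant $2/3$ is weaker than the stated $1$.
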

\ignore{
\subsection{Description of the algorithm}
In the rest of the algorithm we use $r$ to denote the averaged distribution \ie 
$r = (p+q)/2$.
The main algorithm is given in \textsc{Closeness-test}. It first finds a set of tuples 
and runs the main subroutine~\textsc{Assisted-closeness-test}. If the underlying tuple is good, then \textsc{Test-closeness-given-element} outputs the underlying hypothesis correctly.
\begin{center}
\fbox{\begin{minipage}{1.0\textwidth}
Algorithm \textsc{Closeness-test} \newline
\textbf{Input:} $\epsilon$, oracles $p,q$.
\textbf{parameters:} 
\begin{enumerate}
\item 
Generate a set of tuples using \textsc{Find-elements} using the distribution $r$.
\item 
For every tuple, run \textsc{Test-closeness-given-element}.
\item
If any of them returned \texttt{diff}, output \texttt{diff},
else \texttt{same}.
\end{enumerate}
\end{minipage}}
\end{center}
The main sub-routine for the algorithm is the binary search for estimating the value of 
$\frac{r(i)}{r(G_i)}$. At each step the algorithm finds sets $S_1,S_2\ldots S_m$.
It first prunes the sets to remove all heavy elements using $\textsc{Prune-set}$.
If $r_{\guess}$ is close to the underlying $\frac{r(i)}{r(G_i)}$,
then we show that for at least one of the underlying sets $S$
has a subset which we can use to differentiate between the two hypothesis.
The algorithm also has access to a comparator \textsc{Test-heavy}, that outputs 
\texttt{heavy} if $r_{\guess}$ is much higher than the actual value 
and outputs \texttt{light} if $r_{\guess}$ is much smaller than the actual value.
Thus the algorithm conducts a binary search over all possible values of $\frac{r(i)}{r(G_i)}$,
resulting a sample complexity of $\log \log k$.
\begin{center}
\fbox{\begin{minipage}{1.0\textwidth}
Algorithm \textsc{Test-closeness-given-element} \newline
\textbf{Input:} Oracles $p,q$, tuple $(i,\beta,\alpha)$.
\newline
\textbf{Parameters:} $n = $, $n_1 = $.
If $r =p$ run the algorithm as below. If $r=q$, replace $p$ by $q$
in all the subroutines and run the algorithm.
\begin{enumerate}
\item 
Initialize $\log r_{\guess} = -\log k$.
Set $\lowvalue = 0$ and $\highvalue = 2 \log k$. Do $2 \log \log k$ times:
\begin{enumerate}
\item Create $m$ sets $S_1,S_2,\ldots S_m$ independently,
by  keeping elements $\{1,2,\ldots,k\} \setminus \{i\}$.
in set each $S$ with probability $r_{\guess}$.
\item Prune all the sets $S_1,S_2,\ldots S_m$ using \textsc{Prune-sets}.
\item
For each set $S$ do:
\begin{enumerate}
\item Take $n$ samples from $r_{S\cup\{i\}}$ 
and 
for all  seen elements $j$, run \textsc{Test-equal} $\left(p_{\{i,j\}}, q_{\{i,j\}}, \epsilon',\frac{\beta^4 \alpha^3}{100\gamma^{3/2}} \right)$.
\item
Let $\cS=\{ \{i\},S\}$. Run \textsc{Test-equal}$\left( p^{\cS}_{S\cup \{i\}}, p^{\cS}_{S\cup \{i\}},
\epsilon', \frac{\alpha \beta}{10\sqrt{\gamma}} \right)$.
\end{enumerate}
\item Run \textsc{Test-heavy}$(S_1,i,\alpha)$.
\begin{enumerate}
\item If output is \texttt{heavy}, update $\high = \log r_{\guess}$ and 
$\log r_{\guess} = (\log r_{\guess}+ \low)/2$.
\item If output is \texttt{light}, update
$\low = \log r_{\guess}$ and
$\log r_{\guess} = (\log r_{\guess}+ \high)/2$.
\end{enumerate}
\end{enumerate}
\end{enumerate}
\end{minipage}}
\end{center} 
\begin{center}
\fbox{\begin{minipage}{1.0\textwidth}
Algorithm \textsc{Prune-set} \newline
\textbf{Input:} set $S$, element $i$, $\alpha$.
\newline
\textbf{Parameters:} $\epsilon' = \frac{\epsilon}{10 m \log \log k}$, $n_1 = 4\log \frac{\gamma}{2\epsilon'\alpha} 
 \left( \frac{\gamma}{2\alpha}\log \frac{\gamma}{2\alpha}+2\log\frac{1}{\epsilon'}\log\log \frac{1}{\epsilon'}\right)$, $n_2 = \cO(\log \log \log k + \log \frac{1}{\epsilon\epsilon})$.
 \newline
 Repeat $n_1$ times:
\begin{enumerate}
 \item 
Obtain a sample $j$ from $r_S$:
\begin{enumerate}
\item
Sample $n_2$ times from $r_{\{j,i\}}$ and if $n(j) \geq 3n/4$, remove $j$ from set $S$.
\end{enumerate}
\end{enumerate}
\end{minipage}}
\end{center} 
\begin{center}
\fbox{\begin{minipage}{1.0\textwidth}
Algorithm \textsc{Test-heavy} \newline
\textbf{Input:} set $S$, element $i$, $\alpha$.
\newline
\textbf{Parameters:} $n= \cO(\frac{1}{\gamma^2} \log \log k)$.
\begin{enumerate}
\item
Obtain $n$ samples from the entire set $S$.
If $n(i) < \frac{4n}{\gamma}$, then output \texttt{heavy}, else output \texttt{light}.
\end{enumerate}
\end{minipage}}
\end{center} 
}
\ignore{
We now state few state few properties of the algorithm and they follow from the properties
of binary search.}
\subsubsection{Results on \textsc{Prune-set}}
We now show that with high probability~\textsc{Prune-set} never removes an element with probability $\leq 2r(i)$.
\begin{Lemma}
\label{lem:two_remains}
The probability that the algorithm removes
an element $j$ from the set $S$ such that 
$r(j) \leq 2r(i)$  during step $2$ of \textsc{Binary-search} is $\leq \delta/5$.
\end{Lemma}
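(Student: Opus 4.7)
\textbf{Proof plan for Lemma~\ref{lem:two_remains}.} The key observation is that \textsc{Prune-set} only removes an element $j$ when, in $n_2$ conditional samples from $r_{\{j,i\}}$, the count $n(j)$ is at least $3n_2/4$. For any element with $r(j)\le 2r(i)$, the true conditional probability satisfies
\[
r_{\{j,i\}}(j)=\frac{r(j)}{r(j)+r(i)}\le \frac{2r(i)}{2r(i)+r(i)}=\frac{2}{3}.
\]
So the expected fraction of $j$'s is bounded away from $3/4$ by a gap of $1/12$.

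First I would apply the Chernoff bound (Lemma~\ref{lem:eqchern}) to a single execution of the inner loop: the probability that $n(j)/n_2\ge 2/3+1/12$ is at most $e^{-2n_2/144}=e^{-n_2/72}$. Second, I would take a union bound over all iterations in which such a ``light'' element $j$ might be tested. Inside a single call to \textsc{Prune-set} the outer loop runs $n_1$ times; moreover step $2$ of \textsc{Binary-search} invokes \textsc{Prune-set} once per round of its main loop, which runs $\log\log k$ times. Hence the total number of potential false removals is at most $n_1\cdot \log\log k$, and the overall failure probability is bounded by
\[
n_1\cdot \log\log k\cdot e^{-n_2/72}.
\]

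Third, I would verify that the value $n_2=\cO\bigl(\log\log\log k+\log\tfrac{1}{\epsilon\delta}\bigr)$ is chosen large enough so that this product is at most $\delta/5$. Since $n_1$ is polylogarithmic in $\gamma,1/\alpha,1/\beta,1/\delta'$ where $\delta'=\delta/(40\log\log k)$, taking $n_2\ge 72\log\bigl(5\,n_1\log\log k/\delta\bigr)$ suffices, and one checks this is absorbed by the stated $\cO$-expression for $n_2$. The main subtlety, and the only place that needs care, is the bookkeeping in the union bound: specifically confirming that in every round of the binary search the same $n_2$ works, and that the $\delta'=\delta/(40 m\log\log k)$ parameter inside \textsc{Prune-set} (with $m=1$ here) is indeed consistent with the target $\delta/5$ after summing over the $\log\log k$ invocations. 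The probabilistic content itself is just a single Chernoff application.
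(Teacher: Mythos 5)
Your proposal is correct and follows the paper's proof essentially step for step: the bound $r_{\{j,i\}}(j)\le 2/3$, the Chernoff estimate $e^{-n_2/72}$, the union bound over the $\cO(n_1\log\log k)$ inner-loop executions within step $2$ of \textsc{Binary-search}, and the check that the stated $n_2$ absorbs the logarithm. There is no substantive difference from the paper's argument.
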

\begin{proof}
If $r(j)<2r(i)$, then  $\frac{r(j)}{r(j)+r(i)} \leq \frac{2}{3}$. 
Applying Chernoff bound,
\begin{equation*}
\Pr\left(n(j) \geq \frac{3n_2}{4}\right) \leq e^{-n_2/72}.
\end{equation*}
Since the algorithm uses this step
 no more than $\cO(n_1\log \log k)$ times, 
the total error probability is less than $\cO(n_1 \log \log k\cdot  e^{-n_2/72})$.  Since $n_1$ is $\text{poly}(\log \log \log k,\epsilon^{-1}, \log \delta^{-1})$
and 
$n_2 = \cO(\log \log \log k + \log \frac{1}{\epsilon\delta})$,
the error probability is $\leq \delta/5$.
\end{proof}
We now show that \textsc{Prune-set} removes all elements with probability $\geq 4r(i)$ 
with high probability. Recall that $\delta'=\frac{\delta}{40m\log\log k}$.
\begin{Lemma}
\label{lem:four_no_remain}
If element $i$ is $\alpha$-heavy, $\beta$-approximable and  $r_{\guess} \leq \frac{\gamma}{\beta} \frac{r(i)}{r(G_i)} $,
 then \textsc{Prune-set} removes all elements such that $r(j) > 4 r(i)$ during all calls of step $2$ of \textsc{Binary-search} with probability $\geq 1 -\frac{\delta}{5}$.
\end{Lemma}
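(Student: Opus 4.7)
The plan is to show that in each call to \textsc{Prune-set} within \textsc{Binary-search}, every element $j$ with $r(j)>4r(i)$ is removed, by verifying two events per such heavy $j$ that lies in the initial $S$: \textbf{(a)} $j$ is drawn at least once as the seed sample from $r_S$ during the $n_1$ iterations, and \textbf{(b)} whenever $j$ is drawn, the follow-up Chernoff test $n(j)\ge 3n_2/4$ triggers removal. A union bound over the $\log\log k$ invocations of \textsc{Prune-set} inside \textsc{Binary-search}, using $\delta'=\delta/(40\log\log k)$, will then deliver the claimed $\delta/5$.

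For \textbf{(b)} the argument parallels Lemma~\ref{lem:two_remains}: when $r(j)>4r(i)$ one has $r_{\{j,i\}}(j)=r(j)/(r(j)+r(i))>4/5$, so a Chernoff bound gives $\Pr[n(j)<3n_2/4\mid j\text{ drawn}]\le e^{-n_2/72}$, and the chosen $n_2=\cO(\log\log\log k+\log(1/(\epsilon\delta)))$ suffices to union-bound all $\cO(n_1\log\log k)$ such tests below $\delta/10$.

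Event \textbf{(a)} is the heart of the argument. Write $H=\{j:r(j)>4r(i)\}$; since removals only shrink the current $S_t\subseteq S_0$, each iteration draws a fixed $j\in H\cap S_0$ with probability at least $4r(i)/r(S_0)$, so it suffices to upper bound both $r(S_0)$ and $|H\cap S_0|$. The hypothesis $r_{\guess}\le(\gamma/\beta)\cdot r(i)/r(G_i)$ and $r(G_i)\ge\alpha$ give $r_{\guess}\le\gamma r(i)/(\alpha\beta)$, so $\EE[|H\cap S_0|]\le r_{\guess}/(4r(i))\le\gamma/(4\alpha\beta)$, and Chernoff gives $|H\cap S_0|=\cO(\gamma/(\alpha\beta)+\log(1/\delta'))$ with probability $\ge 1-\delta'$. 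For $r(S_0)$ itself, a direct application of Lemma~\ref{lem:concentrates} with $r_{\max}\le 1$ is too weak because individual elements may carry mass close to $1$, so I split the support into ``very heavy'' elements $\{j:r(j)>r_{\guess}\}$ (of which there are at most $1/r_{\guess}$, each independently in $S_0$ with probability $r_{\guess}$, so by Chernoff at most $\cO(\log(1/\delta'))$ of them survive, contributing total mass $\cO(\log(1/\delta'))$) and the complement, where Lemma~\ref{lem:concentrates} with $r_{\max}\le r_{\guess}$ yields $\cO(r_{\guess}\log(1/\delta'))$; combining gives $r(S_0)=\cO(r_{\guess}\log(1/\delta')+\log(1/\delta'))$ with probability $\ge 1-\delta'$.

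Substituting these bounds and using $r(i)\ge\alpha\beta r_{\guess}/\gamma$, the per-trial draw probability of each $j\in H\cap S_0$ is $\Omega(\alpha\beta/(\gamma\log(1/\delta')))$ in the main regime, so after $n_1$ trials $\Pr[j\text{ never drawn}]\le\exp(-\Omega(\alpha\beta n_1/(\gamma\log(1/\delta'))))$; the prescribed $n_1$ then absorbs the union bound over $|H\cap S_0|$ and over the $\log\log k$ \textsc{Prune-set} calls, keeping \textbf{(a)}'s contribution under $\delta/10$. Adding \textbf{(a)} and \textbf{(b)} yields the claimed $\delta/5$. The main obstacle is the two-regime concentration bound on $r(S_0)$: individual heavy elements can inflate the total mass above a clean $\cO(r_{\guess}\cdot\text{polylog})$ bound, so isolating the very-heavy tail from the bulk is essential, and this is what forces the $\log(1/\delta')\log\log(1/\delta')$ summand in $n_1$.
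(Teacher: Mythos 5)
Your decomposition into events \textbf{(a)} and \textbf{(b)} matches the paper's structure, and your treatment of \textbf{(b)} (the Chernoff test on $r_{\{j,i\}}$) is essentially identical to Lemma~\ref{lem:two_remains} run in the other direction. The gap is in \textbf{(a)}: you fix a particular heavy $j\in H\cap S_0$ and lower bound its per-iteration draw probability by $4r(i)/r(S_0)$, then invoke a union bound over $|H\cap S_0|$. For this to give a usable $n_1$ you assert that the per-trial probability is $\Omega\bigl(\alpha\beta/(\gamma\log(1/\delta'))\bigr)$, which requires $r(S_0)=\cO\bigl(r_{\guess}\log(1/\delta')\bigr)$. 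That bound on $r(S_0)$ is false in general: by your own two-regime decomposition, the very-heavy tail (elements with $r(j)>r_{\guess}$) contributes mass up to $\cO(\log(1/\delta'))\cdot 1$, which dominates $r_{\guess}\log(1/\delta')$ whenever $r_{\guess}<1$. In the regime $r(S_0)=\Theta(1)$, $r(i)$ can still be as small as $\alpha\beta/(\gamma k)$, so $4r(i)/r(S_0)=\Omega(r(i))$ and your $n_1$ would have to scale like $1/r(i)$, i.e.\ up to $\Omega(k)$ — far beyond the $\text{poly}(\gamma/(\alpha\beta),\log(1/\delta'))$ bound the lemma claims.

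The point you are missing is that the denominator shrinks as heavy elements are removed, and you cannot use the static $r(S_0)$ throughout. The paper exploits exactly this via a coupon-collector argument: after $j$ heavy elements have been removed, the remaining heavy mass is $\ge 4r(i)(\nu-j)$ while the light mass is bounded by $r(S')$ (mass of elements with $r(j)\le 4r(i)$, which \emph{is} legitimately $\cO(r_{\guess}+r(i)\log(1/\delta'))$ by Lemma~\ref{lem:concentrates} with $r_{\max}=4r(i)$). So the probability of drawing \emph{some new} heavy element is $\ge 4r(i)(\nu-j)/(r(S')+4r(i)(\nu-j))$, and the waiting times telescope to give $n_1 \approx \log(\nu/\delta')\cdot\bigl(\nu + \frac{r(S')}{4r(i)}\log\nu\bigr)$, which is independent of $r(i)$ and $k$ after plugging in the bounds $\nu\le r_{\guess}/(2r(i))+2\log(1/\delta')\le \gamma/(2\alpha\beta)+2\log(1/\delta')$. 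Repairing your proof would require replacing ``wait for a fixed $j$'' and the static denominator $r(S_0)$ with ``wait for the next new heavy element'' and the dynamic $r(S_t)$, which is precisely the paper's argument; as written, your \textbf{(a)} does not go through.
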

\begin{proof}
Let $A = \{j: r(j) \leq 4 r(i)\}$ and $S' = S \cap A$.
By Lemma~\ref{lem:concentrates}, with probability $\geq 1-2\delta'$
\[
r(S') \leq r_{\guess}r(A)+\sqrt{8r_{\guess}r(i)r(A) \log\frac{1}{\delta'}} + 4r(i) \log\frac{1}{\delta'}
\le  2r_{\guess}+ 8r(i) \log\frac{1}{\delta'},
\]
where the last inequality follows from the identity $\sqrt{2ab} \leq a+b$.
Observe that $|A^c|\leq \frac{1}{4r(i)}$. Let $S''= S\setminus S'$.
By Lemma~\ref{lem:concentrates}, with probability $\geq 1-2\delta'$
\[
\nu \ed |S''| \leq r_{\guess}\frac{1}{4r(i)}+\sqrt{2r_{\guess}\frac{1}{4r(i)} \log \frac{1}{\delta'}}+\log \frac{1}{\delta'}
\le \frac{r_{\guess}}{2r(i)}+2\log \frac{1}{\delta'}.
\]
$S$ has $\nu$ elements with probability $> 4r(i)$.
\ignore{ and the total probability of other
elements is $r(S')$.
The lemma reduces to the number of samples we should take, $n_1$, to observe
all the heavy elements.}
Suppose we have observed $j$ of these  elements and removed them from $S$. 
There are $\nu-j$ of them left in $S$. After taking another $\eta$ samples 
from $S$, the probability of not observing a $(j+1)$th heavy element is
$<\left(r(S')/(r(S')+4r(i)(\nu-j))\right)^\eta$.
Therefore,
\[
\eta_j \ed \log \frac{\nu}{\delta'} \cdot \left(1+\frac{r(S')}{4r(i)(\nu-j)}\right) \geq \frac{\log \frac{\nu}{\delta'}}{\log  \left(1+\frac{4r(i)(\nu-j)}{r(S')}\right)}
\]
samples suffice to observe an element from $S''$ with probability $> 1-\frac{\delta'}{\nu}$. After observing the sample (call it $j$), similar to the proof of Lemma~\ref{lem:two_remains} it can be shown that   with probability 
$\geq 1-\delta'$, for samples from $r_{\{j,i\}}$, $n(j) \geq 3n_2/4$ and hence $j$ will be removed from $S$. Thus to remove all $\nu$ elements of probability $>4r(i)$, we need to repeat this step
\[
n_1  = \sum^\nu_{j=1} \eta_j = \log \frac{\nu}{\delta'} \cdot \sum_{j=1}^{\nu} \left(1+\frac{r(S')}{4r(i)j}\right)\le
\log \frac{\nu}{\delta'} \cdot \left(\nu+\frac{r(S')}{4r(i)} \log \nu \right)
\]
times.
Substituting $r(S')$ and $\nu$ in the RHS and simplifying we have
\[
 n_1 \le 4\log \frac{\gamma}{2\delta'\alpha \beta} 
 \left( \frac{\gamma}{2\alpha \beta}\log \frac{\gamma}{2\alpha \beta}+2\log\frac{1}{\delta'}\log\log \frac{1}{\delta'}\right).
\]
 By the union bound, total error probability is $\leq \delta'$. Since the number of calls to \textsc{Prune-set} is at most $\log \log k$ during step $2$ of the algorithm, the error is at most $ \log \log k \cdot 2\delta' \leq \delta/5$ and  the lemma follows from the union bound.
\end{proof}
\subsubsection{Proof of Lemma~\ref{lem:reaches}}
The proof of Lemma~\ref{lem:reaches} follows from the following two sub-lemmas. In Lemma~\ref{lem:heavy}, we show that if $r_{\guess} \geq \gamma \frac{r(i)}{r(G_i)}$ then step $4$ will return \texttt{heavy}, and if $r_{\guess} \leq \frac{1}{\gamma} \frac{r(i)}{r(G_i)}$ hence the algorithm outputs \texttt{light} with high probability. Since we have $\log \log k$ iterations and $\frac{1}{k} \leq \frac{r(i)}{r(G_i)} \leq 1$, we reach $\frac{r(i)}{\gamma r(G_i)} \leq r_{\guess} \leq \frac{\gamma r(i)}{\beta r(G_i)}$ at some point of the algorithm.
\ignore{
Once the algorithm prunes the sets correctly, we now show that the algorithm correctly outputs \texttt{heavy} or \texttt{light} with high probability.
We set $\gamma$ later, but for now we assume $\gamma \geq 100 \log \frac{\log \log k}{\delta}$.}
\begin{Lemma}
\label{lem:heavy}
If $r_{\guess} > \frac{\gamma}{\beta} \frac{r(i)}{r(G_i)}$, $i$ is $\alpha$ heavy, $\beta$-approximable, 
and \textsc{Prune-set} has removed 
 none of the elements with probability $\leq 2r(i)$, then
with probability $\geq 1- 4\delta'$, step $4$ outputs \texttt{heavy}.
\end{Lemma}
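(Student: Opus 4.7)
The plan is to show that under the lemma's hypothesis, the pruned set $S$ carries enough $r$-mass inside $G_i$ that a conditional sample from $r_{S\cup\{i\}}$ hits $i$ with probability well below $5/\gamma$, so a standard Chernoff bound then forces the observed count $n(i)$ to stay under $5n_3/\gamma$. The argument has three ingredients: (i) a first-moment computation for the random set $S$; (ii) a Bernstein-type concentration inequality via Lemma~\ref{lem:concentrates}; and (iii) a binomial tail bound.

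For (i), $S$ is produced by flipping an independent $r_{\guess}$-biased coin for every element of $\{1,\ldots,k\}\setminus\{i\}$, so
\[
\EE[r(S\cap G_i)] \;=\; r_{\guess}\bigl(r(G_i)-r(i)\bigr)\;\geq\;\frac{\gamma r(i)}{\beta} - r(i),
\]
using the hypothesis $r_{\guess}\,r(G_i)>\gamma r(i)/\beta$ and $r_{\guess}\le 1$. For (ii), every $j\in G_i\setminus\{i\}$ satisfies $r(j)\leq r(i)$ by the ordering that defines $G_i$, so invoking Lemma~\ref{lem:concentrates} with $r_{\max}=r(i)$ shows that, with probability $\geq 1-2\delta'$,
\[
r(S\cap G_i)\;\geq\;\EE[r(S\cap G_i)] - \sqrt{2r_{\guess}\,r(i)\,r(G_i)\log(1/\delta')} - r(i)\log(1/\delta').
\]
With $\gamma=1000\log(\log\log k/(\delta\epsilon))$ and $\delta'=\delta/(40\log\log k)$ from the $m=1$ call to \textsc{Prune-set}, both deviation terms are absorbed into a small constant fraction of the mean, yielding $r(S\cap G_i)\geq c\,\gamma r(i)/\beta$ for an absolute constant $c$. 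The hypothesis that \textsc{Prune-set} never removed an element of $r$-mass $\leq 2r(i)$ then guarantees $r(S)\geq r(S\cap G_i)$ after pruning as well, because every element of $G_i$ has mass $\leq r(i)$.

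Finally, the $n_3$ samples from $r_{S\cup\{i\}}$ make $n(i)\sim\Bin(n_3,p^*)$ with $p^*=r(i)/(r(i)+r(S))=O(\beta/\gamma)$. With $n_3=\Theta(\gamma^2\log(\log\log k/\delta))$, the Chernoff bound (Lemma~\ref{lem:eqchern}) drives $\Pr(n(i)\geq 5n_3/\gamma)$ below $\delta'$, so step~4 outputs \texttt{heavy}; a union bound across the three stochastic failures gives the claimed $4\delta'$. The main technical obstacle is reconciling the scales in step (ii): the Bernstein deviation scales like $r(i)\sqrt{\gamma\log(1/\delta')/\beta}$ while the mean scales like $\gamma r(i)/\beta$, so one needs $\gamma$ to dominate $\beta\log(1/\delta')$ by a constant factor---which is exactly why the algorithm chooses $\gamma$ logarithmic in $1/\delta$ and in $\log\log k$.
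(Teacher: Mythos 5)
Your proof is correct and follows the paper's high-level structure (first moment for the random set, Bernstein-type concentration via Lemma~\ref{lem:concentrates}, then a Chernoff tail for the binomial count $n(i)$), but you get to the key lower bound $r(S')\gtrsim \gamma r(i)$ by a slightly different, and arguably cleaner, route. The paper first invokes $\beta$-approximability plus convexity to establish $r(G'_i)\geq \beta r(G_i)/2$ (where $G'_i = G_i\setminus\{i\}$), and then takes a derivative in $r_{\guess}$ to argue that the Bernstein lower bound is minimized at the boundary $r_{\guess}=\gamma r(i)/(\beta r(G_i))$, landing at $r(S')\geq \gamma r(i)/4$. You instead lower-bound the mean directly: $\EE[r(S\cap G_i)] = r_{\guess}(r(G_i)-r(i)) = r_{\guess}r(G_i)-r_{\guess}r(i) > \gamma r(i)/\beta - r(i)$ using only $r_{\guess}\le 1$, which sidesteps both the convexity observation and the derivative calculation and in fact gives a bound that is at least as strong (since $\beta\le 2$). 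The one place where your write-up is looser than it should be is the final step: you assert $p^*=r(i)/(r(i)+r(S))=O(\beta/\gamma)$ and that this is ``well below'' $5/\gamma$, but the Chernoff bound needs the explicit numerical gap $5/\gamma - p^*\geq \Omega(1/\gamma)$, so one must actually check that the constant hidden in your $O(\cdot)$, together with $\beta\le 2$, yields $p^*\le 4/\gamma$ — which it does, given $\gamma = 1000\log(\log\log k/(\delta\epsilon))$ comfortably dominates $\beta\log(1/\delta')$, but you should make that arithmetic explicit rather than leaving it to an asymptotic remark at the end.
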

\begin{proof}
Let $G'_i = G_i \setminus \{i\}$.
Since $i$ is $\alpha$ heavy and $\beta$-approximable, by convexity
\[
\frac{r(G'_i)}{r(G_i)} 
\left \lvert \frac{p(i) - q(i)}{p(i)+q(i)} - \frac{p(G'_i) - q(G'_i)}{p(G'_i) +q(G'_i)} \right  \rvert
 \geq 
\left \lvert \frac{p(i) - q(i)}{p(i)+q(i)} - \frac{p(G_i) - q(G_i)}{p(G_i) +q(G_i)} \right  \rvert \geq \beta.
\]
Hence $r(G'_i) \geq \beta r(G_i)/2$.
By assumption,
 all the elements with probability $<2r(i)$ in set $S$
will remain after pruning. 
Thus all the elements in set $S$ from $G'_i$ remains after pruning.
Let $S' = G'_i \cap S$.
By Lemma~\ref{lem:concentrates} with 
$G = G'_i$, $r_{\max} = r_i$, and $r_0 = r_{\guess}$
\begin{equation}
\label{eq:derivatives}
\Pr\left(r(S')\leq 
r_{\guess}r(G'_i)-\sqrt{2r_{\guess}r(i)r(G'_i)\log \frac{1}{\delta'}} - r(i)\log \frac{1}{\delta'}\right) <2\delta'.
\end{equation}
Taking derivatives, it can be shown that the slope of the RHS of the term inside parenthesis is $r(G'_i) - \sqrt{\frac{r(i)r(G'_i)\log \frac{1}{\delta'}}{2r_{\guess}}}$, which is positive for  $r_{\guess} \geq \frac{\gamma r(i)}{\beta r(G_i)}$. Thus the value is minimized at $r_{\guess} =  \frac{\gamma r(i)}{\beta r(G_i)}$ in the range $\left[  \frac{\gamma r(i)}{\beta r(G_i)}, \infty\right)$ and simplifying this lower bound using values of $\gamma,\beta$, we get
\[
r_{\guess}r(G'_i)-\sqrt{2r_{\guess}r(i)r(G'_i)\log \frac{1}{\delta'}} - r(i)\log \frac{1}{\delta'} \geq \frac{\gamma  r(i)}{4}.
\]
Since $\Pr(X <b) \leq \Pr(X <b+t)$, we have
\ignore{
By choosing $r_{\guess} =\frac{\gamma_1 r(i)}{r(G_i)}$ where $\gamma_1 > \gamma$,
\[
\Pr\left(r(S') \leq r(i)\bigl(\gamma_1-\sqrt{2\gamma_1\log \frac{1}{\delta}}-\log \frac{1}{\delta'}\bigr)\right)<2\delta'.
\]
Thus, if $\gamma_1>\gamma> 100 \log \frac{\log \log k}{\delta}$
and choosing $\delta' = \frac{\delta}{20\log \log k}$}
\[
 \Pr\left(r(S') \leq \frac{\gamma  r(i)}{4}\right) 
< 2 \delta'. \]
Hence with probability $\geq 1 - 2\delta'$,
\ignore{However, as we just showed, $\Pr\left(r(S') \geq \frac{\gamma r(i)}{2}\right) \leq \frac{\delta}{10 \log \log k}$ 
and consequently} $\frac{r(i)}{r(S)+r(i)} \leq \frac{r(i)}{r(S')}\leq \frac {4}{\gamma }$. By the Chernoff bound, 
\[
\Pr\left(\frac{n(i)}{n_3} > \frac{5}{\gamma }\right) \leq \Pr\left(\frac{n(i)}{n_3} >\frac{r(i)}{r(S)+r(i)} + \frac{1}{\gamma }\right) \leq e^{-2n_3/\gamma^2}.
\]
\ignore{
and 
\[
 \leq e^{-4n_3/\gamma^2}.
\]}
Therefore, for 
$n_3 \geq \cO \left({\gamma^2 } \log \frac{\log \log k}{\delta} \right)$, step $3$ outputs \texttt{heavy} with probability $\geq  1- 2\delta'$. By the union bound
the total error probability $\leq 4\delta' $
\end{proof}
\begin{Lemma}
\label{lem:light}
If $r_{\guess} < \frac{r(i)}{\gamma}$ and \textsc{Prune-set} has removed all elements with probability $\geq 4r(i)$ and none of the elements with probability $\leq 2r(i)$, then with probability 
$\geq 1- 4\delta'$, 
step $3$, outputs \texttt{light}.
\end{Lemma}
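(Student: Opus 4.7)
The argument mirrors the proof of Lemma~\ref{lem:heavy} structurally, with the roles of upper and lower bounds swapped: whereas there we lower bounded $r(S\cap G'_i)$ so that the conditional probability of $i$ in $S\cup\{i\}$ would be tiny, here I will upper bound $r(S)$ so that the conditional probability of $i$ is large, and then apply Chernoff to conclude that the empirical fraction $n(i)/n_3$ exceeds the $5/\gamma$ threshold.

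First I exploit both pruning guarantees in tandem. Since \textsc{Prune-set} has removed every element with $r(j)>4r(i)$ and retained every element with $r(j)\le 2r(i)$, the post-pruning $S$ sits inside $A'\ed\{j\ne i:r(j)\le 4r(i)\}$; in particular $r(S)\le r(A'\cap S_{\text{pre-prune}})$. Applying Lemma~\ref{lem:concentrates} to $A'$ with $r_{\max}=4r(i)$ and $r_0=r_{\guess}$ gives, with probability at least $1-2\delta'$,
\[
r(S) \;\le\; r_{\guess}\,r(A') + \sqrt{8\,r_{\guess}\,r(i)\,r(A')\,\log(1/\delta')} + 4r(i)\log(1/\delta').
\]
Substituting the hypothesis $r_{\guess}<r(i)/\gamma$ and the trivial $r(A')\le 1$, the three summands are bounded by $r(i)/\gamma$, $r(i)\sqrt{8\log(1/\delta')/\gamma}$, and $4r(i)\log(1/\delta')$ respectively. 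Because $\gamma=1000\log(\log\log k/(\epsilon\delta))$ is chosen so that $\gamma^3$ dominates $\log(1/\delta')$, a routine constant check shows their sum is at most $(\gamma/6-1)\,r(i)$, so that $r(i)/(r(i)+r(S))\ge 6/\gamma$.

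Finally, since the $n_3$ samples at the comparator step of \textsc{Binary-search} are drawn i.i.d.\ from $r_{S\cup\{i\}}$ and $i$ has true conditional probability at least $6/\gamma$ there, the Chernoff bound (Lemma~\ref{lem:eqchern}) gives
\[
\Pr\!\left(\frac{n(i)}{n_3}<\frac{5}{\gamma}\right) \;\le\; e^{-2n_3/\gamma^2} \;\le\; 2\delta'
\]
for $n_3=\cO(\gamma^2\log(\log\log k/\delta))$, exactly the value set in the algorithm. A union bound over the concentration and Chernoff failure events gives the stated $4\delta'$ overall error probability.

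The only real obstacle is the constant chasing in the middle step: one must confirm that the chosen $\gamma$ is simultaneously large enough to dominate $\log(1/\delta')$ additively and $\sqrt{\log(1/\delta')}$ multiplicatively by appropriate constants, so that all three terms in the concentration bound fit under $(\gamma/6-1)\,r(i)$. This slack has already been built into the choice of $\gamma$ and $\delta'$ for the sake of Lemma~\ref{lem:heavy}, so no new ideas are required beyond verifying the bookkeeping.
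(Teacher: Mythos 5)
Your proof is correct and follows essentially the same route as the paper's: you apply Lemma~\ref{lem:concentrates} to the post-pruning set (using $r_{\max}=4r(i)$ and $r(G)\le 1$) to upper bound $r(S)$, conclude $r(i)/(r(i)+r(S))\ge 6/\gamma$, and finish with the Chernoff bound and a union bound over the two failure events. The only cosmetic difference is that you substitute $r_{\guess}\le r(i)/\gamma$ directly into the (manifestly increasing) concentration bound, whereas the paper invokes a "taking derivatives" argument to locate the maximum at that endpoint — same content, slightly different packaging.
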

\begin{proof}
The proof is similar to that of Lemma~\ref{lem:heavy}.
By assumption all the elements have probability $\leq 4r(i)$.
By Lemma~\ref{lem:concentrates},
\[
\Pr\left( r(S)>r(i)\left[\sqrt{8 \frac{ r_{\guess}}{r(i)} \log \frac{1}{\delta'}}+ \frac{ r_{\guess}}{r(i)}+4\log \frac{1}{\delta'}\right] \right) \leq 2\delta'.
\]
Similar to the analysis after Equation~\eqref{eq:derivatives}, taking derivatives it can be shown that the RHS of the term inside parenthesis
is maximized when $r_{\guess}= \frac{r(i)}{\gamma}$ for the range $[0, \frac{r(i)}{\gamma}]$.
Thus simplifying the above expression with this value of $r_{\guess}$ and the value of $\gamma$, with probability $\geq 1-2\delta'$, $r(S) \leq \gamma r(i)/10 $.
\ignore{Selecting $\gamma_1>\gamma>100 \frac{\log \log k}{\delta}$, 
$\delta' = \frac{\delta}{10 \log \log k}$, $r(S) \leq 4r(i) \log \frac{\log \log k}{\delta}$ with probability at least
 $1-\frac{\delta}{\log \log k}$.}  Thus with probability $\geq 1- 2\delta'$,
\[
\frac{r(i)}{r(S)+r(i)} \geq \frac{1}{1+\gamma/10} \geq \frac{6}{\gamma}.
\]
By the Chernoff bound
\[
\Pr \left( \frac{n_1(i)}{n_3} \leq \frac{5}{\gamma} \right) \leq \Pr \left( \frac{n_1(i)}{n_3} \leq \frac{r(i)}{r(S)+r(i)} - \frac{1}{\gamma} \right) \leq e^{-2n_3/\gamma^2}. 
\]
\ignore{
Choosing $\delta=\frac{1}{8\log \frac{\log \log k}{\delta}}$ and since $\frac{r(i)}{r(S)} \geq \frac{1}{4 \log \frac{\log \log k}{\delta}}$
\[
\Pr\left( \frac{n_1(i)}{n_3} >\frac{1}{8 \log \frac{\log \log k}{\delta}} \right) \leq e^{-n\frac{1}{(8\log \frac{\log \log k}{\delta})^2}}.
\] }
The lemma follows from the bound on $n_3$ and by the union bound total error probability $\leq 4\delta'$.
\end{proof}

Note that the conditions in Lemmas~\ref{lem:heavy} and~\ref{lem:light} hold with probability $\geq 1 - \frac{2\delta}{5}$ by Lemmas~\ref{lem:four_no_remain} and~\ref{lem:two_remains}. Furthermore, since we use all the steps at most $\log \log k$ times, by the union bound, the conclusion in Lemma~\ref{lem:reaches} fails with probability $\leq \frac{2\delta}{5} +  \log \log k \cdot 8\delta' = \frac{2\delta}{5} \leq \delta$.
\subsection{Proof of Theorem~\ref{thm:closeness}}
\label{app:thmcloseness}
For the ease of readability, we divide the proof into several sub-cases. We first show that if $p=q$, then the algorithm returns \texttt{same} with high probability. Recall that for notational simplicity we redefine $\delta' = \frac{\epsilon\delta}{32m( n_4+1) \log \log k} $. 
\begin{Lemma}
\label{lem:same}
If $p = q$, \textsc{Closeness test}  outputs \texttt{same} with error probability $\leq \delta$. 
\end{Lemma}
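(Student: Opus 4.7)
The plan is a clean union bound over every sub-routine that could possibly output \texttt{diff}. When $p = q$, the only way \textsc{Closeness-test} outputs \texttt{diff} is if, for some tuple produced by \textsc{Find-element}$(\epsilon,r)$, some invocation of \textsc{Test-equal} inside \textsc{Assisted-closeness-test} returns \texttt{diff}. Neither \textsc{Find-element} nor \textsc{Prune-set} nor the comparator step (step~4) of \textsc{Binary-search} itself emits \texttt{diff}; they only influence what the \textsc{Test-equal} calls operate on. So it suffices to bound the probability that at least one such \textsc{Test-equal} call misfires.

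First I would enumerate the \textsc{Test-equal} calls hierarchically. \textsc{Find-element}$(\epsilon,r)$ produces $16/\epsilon$ tuples. For each tuple, \textsc{Binary-search} runs for $\log \log k$ iterations, and each iteration calls \textsc{Assisted-closeness-test} once. Inside \textsc{Assisted-closeness-test}, for each of $m = 4096\gamma/(\alpha\beta^2)$ pruned sets $S$, there is one \textsc{Test-equal} call per observed element (at most $n_4$ such calls) plus one additional \textsc{Test-equal} call on the induced distribution. Thus \textsc{Assisted-closeness-test} invokes \textsc{Test-equal} at most $m(n_4+1)$ times, so the total number of \textsc{Test-equal} invocations throughout \textsc{Closeness-test} is at most
\[
\frac{16}{\epsilon} \cdot \log \log k \cdot m (n_4+1).
\]

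Next I invoke Lemma~\ref{lem:test_equal}: when $p=q$ the corresponding conditional distributions $p_{\{i,j\}} = q_{\{i,j\}}$ and $p^{\cS}_{S\cup\{i\}} = q^{\cS}_{S\cup\{i\}}$ are equal, so each \textsc{Test-equal} call (which is invoked with error parameter $\delta'$) outputs \texttt{same} with probability at least $1-\delta'$. With the choice
\[
\delta' = \frac{\epsilon \delta}{32 m (n_4+1) \log \log k},
\]
a union bound over all \textsc{Test-equal} calls gives a total failure probability of at most
\[
\frac{16}{\epsilon} \cdot \log \log k \cdot m(n_4+1) \cdot \delta' \;=\; \frac{\delta}{2} \;\le\; \delta.
\]
Hence with probability at least $1-\delta$ every \textsc{Test-equal} returns \texttt{same}, and therefore \textsc{Closeness-test} outputs \texttt{same}.

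The only mildly delicate point, and what I expect to be the main bookkeeping obstacle, is making sure the counting is tight: one must verify that the comparator step of \textsc{Binary-search} and the pruning step \textsc{Prune-set} cannot themselves cause a \texttt{diff} output (they return \texttt{heavy}/\texttt{light} or simply mutate $S$), and that no \textsc{Test-equal} call has been double-counted or missed when tuples are reused across iterations. Once the counting is pinned down, the rest is a direct application of Lemma~\ref{lem:test_equal} and a union bound, with the constant $32$ in the definition of $\delta'$ chosen precisely to absorb the factor $16 \cdot 2 = 32$ arising from the number of tuples and the halving slack.
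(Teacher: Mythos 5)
Your proposal is correct and follows essentially the same route as the paper's proof: observe that \texttt{diff} can only arise from a \textsc{Test-equal} call, count those calls as $\frac{16}{\epsilon}\cdot\log\log k\cdot m(n_4+1)$, bound each by $\delta'$ via Lemma~\ref{lem:test_equal}, and union bound. You are a bit more explicit than the paper in noting that \textsc{Find-element}, \textsc{Prune-set}, and the comparator step cannot emit \texttt{diff}, and in computing the final bound as exactly $\delta/2$, but the argument is the same.
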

\begin{proof}
Note that the algorithm returns \texttt{diff}
only if any of the \textsc{Test-Equal}s return \texttt{diff}.
We call \textsc{Test-Equal} at most $\frac{16}{\epsilon} \cdot  \log \log k \cdot m  \cdot (n_4+1)$ times. The probability that
at any time it returns an error is $\leq \delta'$. 
Thus by the union bound total error probability is $\leq \delta' 16\log \log k \cdot m \cdot ( n_4+1)/\epsilon \leq \delta$.
\end{proof}
We now prove the result when $\norm{p-q}_1 \geq \epsilon$. We first state a lemma showing that \textsc{Prune-set} ensures that set $S$ does not have any elements $\geq 4 r(i)$. The proof is similar to that of Lemmas~\ref{lem:four_no_remain} and~\ref{lem:two_remains} and hence omitted.
\begin{Lemma}
\label{lem:prune_closeness}
If $i$ is $\alpha$-heavy and $\beta$-approximable, then at any call of step $2$ of \textsc{Assisted-closeness-test}, with probability $\geq 1- \frac{2\delta}{5}$, if $r_{\guess}\leq \frac{\gamma}{\beta} \frac{r(i)}{r(G_i)}$, then \textsc{Prune-test} never removes an element with probability $\leq 2 r(i)$ and removes all elements with probability $\geq 4 r(i)$.
\end{Lemma}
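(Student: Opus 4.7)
\textbf{Proof plan for Lemma~\ref{lem:prune_closeness}.} The plan is to mirror the proofs of Lemmas~\ref{lem:two_remains} and~\ref{lem:four_no_remain}, adapted to \textsc{Assisted-closeness-test}, where step~2 invokes \textsc{Prune-set} on each of the $m$ sets $S_1,\ldots,S_m$ rather than on a single set. The key observation is that the parameter $\delta'=\delta/(40m\log\log k)$ inside \textsc{Prune-set} already has an $m$ in the denominator, so after a union bound over the $m$ sets (and over the $\log\log k$ binary-search iterations, whenever this lemma is invoked inside \textsc{Binary-search}) the total failure probability remains at most $2\delta/5$.

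\emph{Step 1: light elements are never removed.} Fix a set $S_\ell$ and an element $j\in S_\ell$ with $r(j)\le 2r(i)$. In the inner loop of \textsc{Prune-set} one draws $n_2$ samples from $r_{\{j,i\}}$; since $r(j)/(r(j)+r(i))\le 2/3$, Chernoff (Lemma~\ref{lem:eqchern}) gives $\Pr(n(j)\ge 3n_2/4)\le e^{-n_2/72}$. With $n_2=\cO(\log\log\log k+\log(1/(\epsilon\delta)))$ this is at most $\delta'$. The number of (element, iteration) pairs at which this test is applied across all $m$ sets and all iterations is polynomial in $n_1,m,\log\log k$, so by a union bound the total probability of ever wrongly removing such a $j$ is at most $\delta/5$.

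\emph{Step 2: heavy elements are all removed.} Fix a set $S_\ell$ and let $A=\{j:r(j)\le 4r(i)\}$, $S'=S_\ell\cap A$, $S''=S_\ell\setminus A$. Applying Lemma~\ref{lem:concentrates} twice (once to $r(S')$ with $r_{\max}=4r(i)$ and $r_0=r_{\guess}$, and once to $|S''|$ using $|A^c|\le 1/(4r(i))$) shows that with probability at least $1-4\delta'$,
\[
r(S')\le 2r_{\guess}+8r(i)\log(1/\delta'),\qquad \nu\ed |S''|\le \tfrac{r_{\guess}}{2r(i)}+2\log(1/\delta').
\]
Conditioned on these bounds, after having removed $j-1$ heavy elements, the probability that the next $\eta$ samples from $r_{S_\ell}$ miss all remaining heavy elements is at most $(r(S')/(r(S')+4r(i)(\nu-j+1)))^\eta$, so $\eta_j=\log(\nu/\delta')\cdot(1+r(S')/(4r(i)(\nu-j+1)))$ samples suffice to see one with probability $\ge 1-\delta'/\nu$, and Step~1's Chernoff argument shows the seen element is then actually removed with probability $\ge 1-\delta'$. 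Summing $\sum_{j=1}^\nu\eta_j\le \log(\nu/\delta')(\nu+(r(S')/(4r(i)))\log\nu)$ and plugging in the bounds above together with the hypothesis $r_{\guess}\le(\gamma/\beta)\cdot r(i)/r(G_i)\le \gamma/(\beta\alpha)\cdot r(i)$ (using $r(G_i)\ge\alpha$), one checks that this sum is bounded by the $n_1$ specified in \textsc{Prune-set}.

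\emph{Step 3: union bound.} The analysis in Steps 1--2 incurs $\cO(m\cdot n_1)$ applications of the Chernoff bound of Step 1 and at most $m$ applications of the concentration argument of Step 2, each failing with probability at most $\delta'$ or $4\delta'$. Since $\delta'=\delta/(40m\log\log k)$, and since this lemma is invoked at most once per call of step~2 inside \textsc{Assisted-closeness-test}, summing all these failure probabilities gives total error at most $\delta/5+\delta/5=2\delta/5$, as claimed. The main subtlety to verify is the arithmetic that plugging $r_{\guess}\le(\gamma/\beta)r(i)/r(G_i)$ into $r(S')$ and $\nu$ keeps the required $n_1$ within the value specified in \textsc{Prune-set}; the $\alpha$-heaviness of $i$ is what makes this work, exactly as in Lemma~\ref{lem:four_no_remain}.
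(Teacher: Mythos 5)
Your proposal is correct and takes exactly the route the paper intends: the paper omits this proof, stating only that it is "similar to that of Lemmas~\ref{lem:four_no_remain} and~\ref{lem:two_remains}," and your argument is precisely that adaptation — the Chernoff bound for light elements, the concentration-plus-coupon-collector argument for heavy elements, and a union bound over the $m$ sets absorbed by the $m$ in the denominator of $\delta'$.
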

The proof when $\norm{p-q}_1 \geq \epsilon$ is divided into two parts based on the probability of certain events.
Let $\beta' = \frac{p(i)-q(i)}{p(i)+q(i)}$,
$\beta'' = \frac{\alpha\beta}{128\gamma \log \frac{128\gamma}{\beta^2}}$.
Let $D$ denote the event such that an element $j$ from $G^c_i$
with $\left \lvert\frac{p(j)-q(j)}{p(j)+q(j)} - \beta'\right \rvert \geq \beta''$ and $r(j) \leq 4r(i)$ gets included in $S$. 
We divide the proof in two cases when $\Pr(D) \geq \frac{ \alpha \beta^2}{128 \gamma}$ and $\Pr(D) <\frac{\alpha \beta^2}{128 \gamma}$.
\begin{Lemma}
\label{lem:diff_1}
Suppose $\norm{p-q}_1 \geq \epsilon$. If $i$ is $\alpha$-heavy and $\beta$-approximable, $\frac{r(i)}{\gamma} \leq r_{\guess} \leq \frac{\gamma}{\beta} \frac{r(i)}{r(G_i)}$, the conclusions in Lemma~\ref{lem:prune_closeness} hold
, and $\Pr(D) \geq \frac{\alpha \beta^2}{128 \gamma}$, then step $3 (a)$ of \textsc{Assisted-closeness-test}
returns \texttt{diff} with probability $\geq 1/5$.
\end{Lemma}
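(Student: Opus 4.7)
Call an element $j$ \emph{distinguishing} if it witnesses the event $D$: $j\in G_i^c$, $r(j)\leq 4r(i)$, and $\lV\frac{p(j)-q(j)}{p(j)+q(j)}-\beta'\rV\geq \beta''$. I will show that, with probability at least $1/5$, some $S_\ell$ contains a distinguishing $j$ that survives pruning, is observed among the $n_4$ samples from $r_{S_\ell\cup\{i\}}$ in step $3(a)$, and is then flagged by \textsc{Test-equal}. The three phases (set construction, sample observation, chi-squared test) are essentially independent once we condition appropriately, so the overall success probability factors cleanly.

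First, since the $m$ sets $S_1,\ldots,S_m$ are generated independently and $\Pr(D)\geq \alpha\beta^2/(128\gamma)$ by hypothesis, the choice $m=4096\gamma/(\alpha\beta^2)$ yields $m\Pr(D)\geq 32$, and hence $\Pr(\text{no }S_\ell\text{ contains a distinguishing element})\leq (1-\Pr(D))^m\leq e^{-32}$. Next, fix such an $S_\ell$ and distinguishing $j$. By the assumed conclusion of Lemma~\ref{lem:prune_closeness}, pruning keeps every element of probability $\leq 2r(i)$ and discards every element of probability $\geq 4r(i)$, so after pruning $r_{\max}\leq 4r(i)$. Because $i$ is $\alpha$-heavy and the hypothesis gives $r_{\guess}\leq (\gamma/\beta)\cdot r(i)/r(G_i)$, we get $r_{\guess}\leq \gamma r(i)/(\alpha\beta)$. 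Lemma~\ref{lem:concentrates} then bounds $r(S_\ell)=O(\gamma r(i)/(\alpha\beta))$ with high probability. Consequently, each draw from $r_{S_\ell\cup\{i\}}$ lands on $j$ with probability at least $r(j)/(r(S_\ell)+r(i))=\Omega(\alpha\beta/\gamma)$, and with $n_4=\Theta(\gamma/(\alpha\beta))$ samples the distinguishing $j$ is observed at least once with probability bounded below by a universal constant, which I take to be $1/2$.

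Third, conditioned on $j$ being observed, apply Lemma~\ref{lem:chilow} with gap $\beta''$. Using $r(j)\in[r(i),4r(i)]$ (since $j\in G_i^c$ under the paper's ordering and $r(j)\leq 4r(i)$) and minimizing $t/(1+t)^2$ on $[1,4]$ at $t=4$, one obtains
\[
\frac{(p_{i,j}-q_{i,j})^2}{(p_{i,j}+q_{i,j})(2-p_{i,j}-q_{i,j})}\geq (\beta'')^2\cdot\frac{r(i)r(j)}{(r(i)+r(j))^2}\geq \frac{4(\beta'')^2}{25}\geq \frac{(\beta'')^2}{25}.
\]
Lemma~\ref{lem:test_equal} then guarantees that \textsc{Test-equal}$((\beta'')^2/25,\delta',p_{\{i,j\}},q_{\{i,j\}})$ outputs \texttt{diff} with probability $\geq 1-\delta'$. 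Combining the three phases across $\ell$ by independence, the probability that at least one test returns \texttt{diff} is at least $1-(1-\tfrac12\Pr(D)(1-\delta'))^m$, which comfortably exceeds $1/5$.

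The main obstacle is that each distinguishing $j$ is observed only with constant probability rather than $1-o(1)$, because $r(S_\ell)$ can be as large as $\Theta(\gamma r(i)/(\alpha\beta))$; this forces the algorithm to amplify over $m=\Theta(\gamma/(\alpha\beta^2))$ independent sets and to take exactly $n_4=\Theta(\gamma/(\alpha\beta))$ samples per set, so that the product $m\cdot \Pr(D)\cdot \Pr(\text{observed})$ is $\Omega(1)$, just large enough to push the overall success past the $1/5$ threshold while keeping the sample bound at $\tilde O(\log\log k/\epsilon^5)$.
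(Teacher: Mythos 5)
Your proposal matches the paper's proof in all essential respects: you decompose the success event into (i) some $S_\ell$ contains a distinguishing $j$ with $r(j)\in[r(i),4r(i)]$, (ii) $r(S_\ell)$ is not too large, (iii) $j$ is observed among the $n_4$ draws, and (iv) \textsc{Test-equal} flags the pair, and you invoke Lemma~\ref{lem:chilow} to get the $(\beta'')^2/25$ chi-squared bound that the algorithm is actually called with. The only stylistic difference is that you amplify a per-set success probability over the $m$ independent sets, while the paper instead conditions on the first good $S_\ell$ and applies a plain union bound over the four failure events, arriving at $\geq 1/5$ with the explicit budget $1/6+1/16+1/6+\delta'\leq 4/5$. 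One small algebraic slip: your application of Lemma~\ref{lem:chilow} drops the factor of $4$ in the denominator (the lemma gives $\geq (\beta'')^2\cdot\frac{r(i)r(j)}{4(r(i)+r(j))^2}$), though your final bound $(\beta'')^2/25$ is still correct since $\min_{t\in[1,4]} t/(4(1+t)^2)=1/25$; and the ``by independence'' phrasing is looser than strictly warranted (membership of a distinguishing $j$ is positively correlated with $r(S_\ell)$ being large), but including one element of weight $\leq 4r(i)$ shifts $r(S_\ell)$ negligibly, which is exactly the implicit slack the paper also relies on.
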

\begin{proof}
we then show that the following four events happen with high probability for at least one set $S \in \{S_1,S_2\ldots S_m\}$.
\begin{itemize}
\item $S$ includes a $j$ such that  $\left \lvert\frac{p(j)-q(j)}{p(j)+q(j)} - \beta'\right \rvert \geq \beta''$,$r(j) \leq 4r(i)$ , $j \notin G_i$.
\item $r(S) \leq r_{\guess} + 8 \sqrt{{r(i) }{r_{\guess}}}$.
\item $j$ appears when $S$ is sampled $n_4$  times.
\item \textsc{Test-equal} returns \texttt{diff}.
\end{itemize}
Clearly, if the above four events happen then the algorithm outputs \texttt{diff}. Thus to bound the error probability, we bound the error probability of each of the above four events and use union bound. Probability that at least one of the sets contain an element $j$ such that  $\left \lvert\frac{p(j)-q(j)}{p(j)+q(j)} - \beta'\right \rvert \geq \beta''$,$r(j) \leq 4r(i)$ , $j \notin G_i$ is 
\[
1 - \left( 1 - \Pr(D) \right)^{m} \geq 1 -  e^{-\Pr(D)m} \geq \frac{5}{6}.
\]
 Let $S' = \{j \in S: r(j) \leq 4 r(i)\}$. Observe that before pruning $\EE[r(S')] \leq r_{\guess}$ and $\Var(r(S')) \leq 4r(i) r_{\guess}$. Hence by the Chebyshev bound  with probability $\geq 1- 1/16$,
\begin{equation*}
r(S') \leq r_{\guess} + 8\sqrt{{r(i) }{r_{\guess}}} ,
\end{equation*}
After pruning, $r(S)$ contains only elements of probabilities from $S'$. Hence with probability $\geq 1- 1/16$,
$r(S) \leq r_{\guess} + 8 \sqrt{ r(i){r_{\guess}}}$.  Probability that this element does appear when sampled $n_4$ times is 
\[
1 - \left( 1- \frac{r(j)}{r(S)} \right)^{n_4} 
\geq 
1 - \left( 1- \frac{r(i)}{r_{\guess}(1+8\sqrt{r(i)/r_{\guess}})} \right)^{n_4}
\geq 
1 -  \left( 1 - \frac{\alpha \beta}{9\gamma} \right)^{n_4}
 \geq 
\frac{5}{6}.
\]
Since 
 $\left \lvert\frac{p(j)-q(j)}{p(j)+q(j)} - \beta'\right \rvert \geq \beta''$,$r(i) \leq r(j) \leq 4r(i)$ by  Lemma~\ref{lem:chilow}
the chi-squared distance is 
\begin{align*}
\geq (\beta'')^2 \frac{r(j)r(i)}{4(r(i)+r(j))^2} \geq \frac{(\beta'')^2}{25}.
\end{align*}
Thus by Lemma~\ref{lem:test_equal}, algorithm outputs \texttt{diff} with probability $1 - \delta'$.  By the union bound the total error probability $\leq 1/6 + 1/16 + +1/6 + \delta' \leq 4/5$.
\end{proof}
\begin{Lemma}
\label{lem:diff_2}
Suppose $\norm{p-q}_1 \geq \epsilon$. If $i$ is $\alpha$-heavy and $\beta$-approximable, $\frac{r(i)}{\gamma} \leq r_{\guess} \leq \frac{\gamma}{\beta} \frac{r(i)}{r(G_i)}$, the conclusions in Lemma~\ref{lem:prune_closeness} hold
, and $\Pr(D) < \frac{\alpha \beta^2}{128 \gamma}$, then step $3 (b)$ of \textsc{Assisted-closeness-test}
returns \texttt{diff} with probability $\geq 1/5$.
\end{Lemma}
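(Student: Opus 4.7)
The plan is to show that for at least one of the $m$ sets $S_\ell$, after pruning, the induced binary distribution on the partition $\cS=\{\{i\},S_\ell\}$ has chi-squared distance meeting the threshold of step~$3(b)$, so \textsc{Test-equal} outputs \texttt{diff}. Let $\beta'=(p(i)-q(i))/(p(i)+q(i))$, and for $T\subseteq[k]$ define
\[
 \phi(T)\;=\;\bigl(p(T)-q(T)\bigr)-\beta'\bigl(p(T)+q(T)\bigr)\;=\;2r(T)\!\left(\tfrac{p(T)-q(T)}{2r(T)}-\beta'\right).
\]
Note $\phi(i)=0$, so $\phi(G_i\setminus\{i\})=\phi(G_i)$; and since $i$ is $\beta$-approximable and $\alpha$-heavy, $|\phi(G_i)|\ge 2\beta r(G_i)\ge 2\alpha\beta$. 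Applying Lemma~\ref{lem:chilow} with the two ``elements'' $\{i\}$ and $S$, the chi-squared distance between $p^{\cS}$ and $q^{\cS}$ is at least $\phi(S)^2 r(i)\,/\,\bigl[16\,r(S)\,(r(i)+r(S))^2\bigr]$, so it suffices to lower-bound $|\phi(S)|$ and upper-bound $r(S)$.

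For a fixed $S_\ell$, I would split the pruned $S$ into $S\cap G_i$, $S\cap A_2^{\mathrm{good}}$, and $S\cap A_2^{\mathrm{bad}}$, where $A_2=\{j\in G_i^c:r(j)\le 4r(i)\}$ and $A_2^{\mathrm{bad}}$ collects the elements appearing in the definition of event $D$. Since each bad $j$ is kept independently with probability $r_{\guess}$, the hypothesis $\Pr(D)<\alpha\beta^2/(128\gamma)$ gives $r_{\guess}|A_2^{\mathrm{bad}}|\le 2\Pr(D)$, and combined with $r(j)\le 4r(i)$ and $r_{\guess}\ge r(i)/\gamma$ bounds the expected $\phi$-contribution of $S\cap A_2^{\mathrm{bad}}$ by $O(r_{\guess}\alpha\beta^2)$, which is $o(r_{\guess}\alpha\beta)$. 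On $S\cap G_i$ and $S\cap A_2^{\mathrm{good}}$, $|\phi(j)|\le 2r(i)$ and $|\phi(j)|\le 2\beta'' r(j)$ respectively, so Chebyshev on these sums, together with the choice $\beta''=\alpha\beta/(128\gamma\log(128\gamma/\beta^2))$, yields with constant probability
\[
 \bigl|\phi(S)-r_{\guess}\phi(G_i\setminus\{i\})\bigr|\le \tfrac12 r_{\guess}\bigl|\phi(G_i\setminus\{i\})\bigr|,
\]
and hence $|\phi(S)|\gtrsim r_{\guess}\alpha\beta$. Lemma~\ref{lem:concentrates} applied to the pruned $S$ simultaneously gives $r(S)=O(r_{\guess}+\sqrt{r(i)r_{\guess}}+r(i))$; plugging in the range $r(i)/\gamma\le r_{\guess}\le \gamma r(i)/(\beta r(G_i))$ yields $r(S)\lesssim \gamma r(i)/(\beta r(G_i))$ with constant probability.

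Substituting these bounds into Lemma~\ref{lem:chilow} shows the chi-squared distance is $\Omega\!\bigl(\alpha^3\beta^3/(\gamma^2\log^3(128\gamma/\beta^2))\bigr)$, matching the parameter passed to \textsc{Test-equal} in step~$3(b)$ up to constants, so by Lemma~\ref{lem:test_equal} the test returns \texttt{diff} on that $S$ with probability $\ge 1-\delta'$. Since the events ``$S_\ell$ is good and concentrated'' are independent across $\ell$ and each has constant probability, a union bound over the $m=4096\gamma/(\alpha\beta^2)$ sets amplifies the overall success probability to $\ge 1/5$, absorbing the small failure contributions of pruning (Lemma~\ref{lem:prune_closeness}) and of the Chebyshev steps. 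The main obstacle is bookkeeping: one must keep the Chebyshev constants, the pruning-failure probabilities, and the error from bad elements all aligned so that the final chi-squared bound matches the \textsc{Test-equal} threshold, and one must verify $r(G_i\setminus\{i\})\ge \tfrac\beta2 r(G_i)$ (which follows from $\beta$-approximability as in the proof of Lemma~\ref{lem:heavy}) so that $\phi$ does not collapse when $i$ is removed.
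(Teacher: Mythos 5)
Your overall plan (lower-bound the chi-squared distance between $p^{\cS}$ and $q^{\cS}$ via a linear functional $\phi(S)$ that captures the mismatch of $\frac{p(S)-q(S)}{p(S)+q(S)}$ against $\beta'$, then feed it into Lemma~\ref{lem:chilow} and amplify over the $m$ sets) is exactly the paper's, with $|\phi(S\cap G_i)|=2Z$ in the paper's notation. However, there is a genuine gap at the concentration step: Chebyshev cannot deliver what you claim.

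The random variable $\phi(S\cap G_i)=\sum_{j\in G_i\setminus\{i\}}\mathbbm{1}(j\in S)\bigl((p(j)-q(j))-\beta'(p(j)+q(j))\bigr)$ has mean $r_{\guess}\,\phi(G_i\setminus\{i\})$, which is $\Theta(r_{\guess}r(G_i)\beta)$ in magnitude, and variance bounded only by $4r(i)\,r_{\guess}\,r(G_i')$. The ratio $\Var/\EE^2$ is therefore as large as $\Theta\bigl(\frac{r(i)}{r_{\guess}\,r(G_i)\beta^2}\bigr)$, and at the low end $r_{\guess}=r(i)/\gamma$ this is $\Theta\bigl(\frac{\gamma}{\alpha\beta^2}\bigr)\gg 1$. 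Chebyshev's inequality then gives $\Pr\bigl(|\phi(S\cap G_i)-\EE|\ge\frac12|\EE|\bigr)\le 4\Var/\EE^2$, a bound exceeding $1$, so your claim that Chebyshev ``yields with constant probability'' the desired closeness is false; and if it were true, the huge $m=4096\gamma/(\alpha\beta^2)$ would be unnecessary, which signals the inconsistency. The paper resolves this with the Paley--Zygmund inequality, which is designed for exactly this regime: it yields a \emph{lower} bound $\Pr\bigl(Z\ge\EE[Z]/2\bigr)\ge\frac14\frac{\EE^2[Z]}{\Var(Z)+\EE^2[Z]}\ge\frac{\alpha\beta^2}{32\gamma}$, a small but nonzero per-set success probability, and then $m=\Theta(\gamma/(\alpha\beta^2))$ independent sets amplify this to $\ge 1/5$. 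Replacing your Chebyshev step with Paley--Zygmund (applied to $Z=|\phi(S\cap G_i)|/2$, whose mean is lower-bounded via Jensen) repairs the argument; the rest of your decomposition, the treatment of bad elements via $\Pr(D)$, the bound on $r(S)$, the use of $r(G_i\setminus\{i\})\ge\frac{\beta}{2}r(G_i)$, and the chi-squared algebra all track the paper correctly.
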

We show that the following four events happen with high probability
for at least some set $S \in \{S_1,S_2,\ldots,S_m\}$.
Let $S' = S \cap G_i$ and $G'_i = G_i \setminus \{i\}$. 
Let 
\[
Z = r(S')\left \lvert \beta' - \frac{p(S')-q(S')}{p(S')+q(S')}\right \rvert = \frac{\lvert \beta'(p(S')+q(S')) - (p(S')-q(S')) \rvert}{2}.
\]
\begin{itemize}
\item 
$Z \geq r_{\guess}\lvert( \beta'(p(G'_i) + q(G'_i)) - p(G'_i)+ q(G'_i))\rvert /4$.
\item 
$r(S) \leq 8 (r(i) + r_{\guess}) \log \frac{128 \gamma}{\alpha\beta^2}$.
\item 
Event $D$ does not happen.
\item \textsc{Test-equal} outputs \texttt{diff}.
\end{itemize}
Clearly if all of the above events happen, then the test outputs \texttt{diff}. We now bound the error probability of each of the events and use union bound.
Since none of the elements in $S'$ undergo pruning, 
the value of $Z$ remains unchanged before and after pruning. Thus any concentration inequality for $Z$ remains the same after pruning.
We now compute the expectation and variance of $Z$ and use Paley Zigmund inequality.
\begin{align*}
\EE[Z] 
& = \EE[\lvert \beta'(p(S')+q(S')) - p(S')+q(S') \rvert]/2 \\
& \geq \lvert \EE( \beta'(p(S')+q(S')) - p(S')+q(S'))\rvert /2 \\
& = r_{\guess}\lvert( \beta'(p(G'_1) + q(G'_i)) - p(G'_i)+ q(G'_i))\rvert /2,
\end{align*}
where the inequality follows from convexity of $|\cdot|$ function.
Let $\mathbbm{1}(j,S')$ denote the event that $j \in S'$.
The variance is lower bounded as 
\begin{align*}
\Var(Z) 
& = \EE[Z^2] - (\EE[Z])^2 \\
& =  \EE[( \beta'(p(S')+q(S')) - p(S')+q(S'))^2]/4 
-  \EE^2[\lvert \beta'(p(S')+q(S')) - p(S')+q(S') \rvert]/4 \\
& \stackrel{(a)}{\leq}  \EE[( \beta'(p(S')+q(S')) - p(S')+q(S'))^2]/4
-  \EE^2[ \beta'(p(S')+q(S')) - p(S')+q(S')]/4 \\
& {=} \Var( \beta'(p(S')+q(S')) - p(S')+q(S'))/4 \\
& \stackrel{(b)}{=} \sum_{j \in G'_i}  \Var(\mathbbm{1}(j,S')( \beta'(p(j)+q(j)) - p(j)+q(j))^2/4\\
& {\leq} \sum_{j \in G'_i}  \EE[\mathbbm{1}(j,S')]( \beta'(p(j)+q(j)) - p(j)+q(j))^2/4\\
& = \sum_{j \in G'_i} r_{\guess}( \beta'(p(j)+q(j)) - p(j)+q(j))^2/4\\
& {\leq}  \max_{j' \in G'_i}| \beta'(p(j')+q(j')) - p(j')+q(j')| \cdot 
r_{\guess}  \sum_{j \in G'_i} |\beta'(p(j)+q(j)) - p(j)+q(j)|/4\\
& \stackrel{(c)}{\leq} 4 r(i) \cdot r_{\guess} r(G'_i).
\end{align*}
$(a)$ follows from the bound on expectation. $(b)$ follows from the
independence of events $\mathbbm{1}(j,S)$. $(c)$ follows from the fact that $p(j)+q(j) = 2r(j) \leq 2r(i)$, $|\beta'|\leq 1$ and $\sum_{i} r(j) \leq r(G'_i)$.
Hence by the Paley Zygmund inequality,
\begin{align*}
\Pr(Z \geq  r_{\guess}\lvert( \beta'(p(G'_1) + q(G'_i)) - p(G'_i)+ q(G'_i))\rvert /4) 
& \geq \Pr(Z \geq \EE[Z]/2) \\
& \geq \frac{1}{4} \frac{\EE^2[Z]}{\Var(Z) + \EE^2[Z]} \\
& \geq \frac{1}{4} \frac{\EE^2[Z]}{4 r(G'_i) r(i) r_{\guess} + \EE^2[Z]}.
\end{align*}
Since $i$ is $\beta$-approximable, by convexity
\[
\frac{r(G'_i)}{r(G_i)} 
\left \lvert \frac{p(i) - q(i)}{p(i)+q(i)} - \frac{p(G'_i) - q(G'_i)}{p(G'_i) +q(G'_i)} \right  \rvert
 \geq 
\left \lvert \frac{p(i) - q(i)}{p(i)+q(i)} - \frac{p(G_i) - q(G_i)}{p(G_i) +q(G_i)} \right  \rvert \geq \beta.
\]
Hence,
\[
r(G'_i) \left \lvert \frac{p(i) - q(i)}{p(i)+q(i)} - \frac{p(G'_i) - q(G'_i)}{p(G'_i) +q(G'_i)} \right  \rvert \geq r(G_i) \beta.
\]
Thus $\EE[Z] \geq r_{\guess} r(G_i) \beta$ and 
\begin{align}
\Pr(Z \geq  r_{\guess}\lvert( \beta'(p(G'_1) + q(G'_i)) - p(G'_i)+ q(G'_i))\rvert /4) 
& \geq \frac{1}{4} \frac{(r_{\guess} r(G_i) \beta)^2}{4r(i)r_{\guess} r(G'_i)+ (r_{\guess} r(G_i) \beta)^2} 
\nonumber \\
& \geq \frac{1}{4}  \frac{(r_{\guess} r(G_i) \beta)^2}{2\max(4r(i)r_{\guess} r(G'_i), (r_{\guess} r(G_i) \beta)^2)}
\nonumber \\ 
& = \frac{1}{8} \min \left(1, \frac{(r_{\guess} r(G_i) \beta)^2}{4r(i) r_{\guess} r(G'_i)} \right) 
\nonumber \\
& \geq \frac{r(G_i) \beta^2}{32\gamma} 
\nonumber \\
& \geq \frac{\alpha\beta^2}{32\gamma }.
\label{eq:zlower}
\end{align}
By Lemma~\ref{lem:concentrates}, with probability $\geq 1- \frac{\alpha\beta^2}{128 \gamma}$,
\begin{equation}
\label{eq:aux_lower3}
r(S) \leq r_{\guess} + \sqrt{8 r_{\guess}r(i) \log \frac{128 \gamma}{\alpha\beta^2}} + 4 r(i) \log \frac{128 \gamma}{\alpha\beta^2} \leq 8 (r(i)+r_{\guess})\log \frac{128\gamma}{\alpha\beta^2} .
\end{equation}
Let $S'' = S \setminus S'$. If event $D$ has not happened then for all elements $j \in S''$,
$ \lV \beta' - \frac{p(j) - q(j)}{p(j)+q(j)} \rV \leq \beta''$ and hence 
\begin{equation}
\label{eq:aux_lower2}
\left \lvert \frac{p(i)-q(i)}{p(i)+q(i)} - \frac{p(S'') - q(S'')}{p(S'')+q(S'')} \right  \rvert  \leq \beta''.
\end{equation}
Combining the above set of equations,
\begin{align*}
 \left \lvert \frac{p(i)-q(i)}{p(i)+q(i)} - \frac{p(S) - q(S)}{p(S)+q(S)} \right  \rvert 
 & \stackrel{(a)}{\geq}  \frac{r(S')}{r(S)}
\left \lvert \frac{p(i)-q(i)}{p(i)+q(i)} - \frac{p(S') - q(S')}{p(S')+q(S')} \right  \rvert 
- \frac{r(S'')}{r(S)} \left \lvert \frac{p(i)-q(i)}{p(i)+q(i)} - \frac{p(S'') - q(S'')}{p(S'')+q(S'')} \right  \rvert \\
& 
\stackrel{(b)}{\geq}  \frac{Z}{r(S)}
- \beta'' \\
& \stackrel{(c)}{\geq}  \frac{2r(G'_i)r_{\guess}}{4r(S)} \left(
\left \lvert\beta' - \frac{p(G'_i)-q(G'_i)}{p(G_i)+q(G_i)}\right \rvert  \right)
-  \beta''\\
& \geq \frac{r(G_i) r_{\guess} \beta}{2r(S)} - \beta''\\
& \stackrel{(d)}{\geq}  \frac{r(G_i) r_{\guess} \beta}{8r(S)}.
\end{align*}
$(a)$ follows from convexity and the fact that $|a+b| \geq |a| - |b|$, $(b)$ follows from Equation~\eqref{eq:aux_lower2}, and $(c)$ follows from Equation~\eqref{eq:zlower}.
$(d)$  follows from the fact that 
\begin{align*}
\beta'' = \frac{\alpha \beta}{128 \gamma  \log \frac{128\gamma}{\alpha\beta^2}} & \leq  \frac{\beta}{128 \log \frac{128\gamma}{\alpha\beta^2}} 
\min\left( \frac{\alpha}{\gamma}, \alpha \right) \\
& \leq \frac{\beta}{128 \log \frac{128\gamma}{\alpha\beta^2}} 
\min\left( \frac{r(G_i) r_{\guess}}{r(i)}, r(G_i) \right) \\
& =  \frac{r(G_i) r_{\guess} \beta}{64 \log \frac{128\gamma}{\alpha\beta^2} \cdot 2 \max(r(i),r_{\guess})} \\
& \leq  \frac{r(G_i) r_{\guess} \beta}{64 \log \frac{128\gamma}{\alpha\beta^2} (r(i)+r_{\guess})} \\
& \leq \frac{r(G_i) r_{\guess} \beta}{8r(S)}.
\end{align*}
Thus by Lemma~\ref{lem:chilow}, chi-squared distance is lower bounded by 
\begin{align*}
\left( \frac{r(G_i) r_{\guess} \beta}{8r(S)} \right)^2
\frac{r(i)r(S)}{4(r(i)+r(S))^2}  
&  = \frac{r^2_{\guess} \beta^2 r(i) r^2(G_i)}{2^{8} r(S) (r(i)+r(S))^2}\\
& \stackrel{(a)}{\geq} \frac{r^2_{\guess} \beta^2 r(i) r^2(G_i)}{2^{20} (r(i)+r_{\guess})^3 \log^3 \frac{128\gamma}{\alpha\beta^2} }\\
& {\geq} \frac{r^2_{\guess} \beta^2 r(i) r^2(G_i)}{2^{20} \cdot 8 \max (r^3(i),r^3_{\guess}) \cdot \log^3 \frac{128\gamma}{\alpha\beta^2} }\\
& = \frac{r^2(G_i)\beta^2}{2^{23} \log^3 \frac{128\gamma}{\alpha\beta^2}} \min\left( \frac{r(i)}{r_{\guess}} , \frac{r^2_\guess}{r^2(i)}\right)\\
& \stackrel{(b)}{\geq}  \frac{r^2(G_i)\beta^2}{2^{23} \log^3 \frac{128\gamma}{\alpha\beta^2}} \min\left( \frac{\beta r(G_i)}{\gamma} , 
\frac{r^2(G_i)}{\gamma^2 r^2(G_i)}\right)\\
& \geq \frac{\alpha^3 \beta^3}{2^{23} \gamma^2\log^3 \frac{128\gamma}{\alpha\beta^2}}.
\end{align*}
$(a)$ follows from Equation~\eqref{eq:aux_lower3} and $(b)$ follows from bounds on $r_{\guess}$.
Thus with probability $\geq 1- \frac{\alpha\beta^2}{128\gamma}$, \textsc{Test-equal} outputs \texttt{diff}. 
By the union bound, the error probability for an $S \in \{S_1,S_2,\ldots S_m\}$ is $\leq 1-  \frac{\alpha\beta^2}{32\gamma} +  \frac{\alpha\beta^2}{128 \gamma} +   \frac{\alpha\beta^2}{128 \gamma} + \delta'  \leq 1 - \frac{\alpha\beta^2}{128\gamma}$. Since we are repeating it for $m$ sets,  the probability that it outputs \texttt{diff} is 
\[
\geq 1 - \left(1 - \frac{\alpha\beta^2}{128\gamma} \right)^m \geq 
1 - e^{ \frac{\alpha \beta^2m}{128\gamma}} \geq  \frac 15.
\]
Theorem~\ref{thm:closeness} follows from Lemma~\ref{lem:same} for the case $p=q$. If $\norm{p-q}_1 \geq \epsilon$, then it follows from ~\ref{lem:find_approx} (finds a good tuple), ~\ref{lem:reaches} (finds good approximation of $r_{\guess}$),~\ref{lem:prune_closeness} (pruning),
and ~\ref{lem:diff_1} ($\Pr(D)$ is large), and~\ref{lem:diff_2} ($\Pr(D)$ is small).
By Lemma~\ref{lem:same} success probability when $p=q$ is $\geq 1 - \delta$.
The success probability when $\norm{p-q}_1 \geq \epsilon$ is at least the probability that we pick a good-tuple $(i,\beta,\alpha)$ and the success probability once a good tuple is picked (sum of errors in Lemmas~\ref{lem:reaches},~\ref{lem:prune_closeness} + maximum of errors in Lemmas~\ref{lem:diff_1} and~\ref{lem:diff_2}) which can be shown to be $1/5 \cdot (1/5-\delta-2\delta/5) \geq1/30 $. We now analyze the number of samples our algorithm uses. 

We first calculate the number of samples used by \textsc{Assisted-closeness-test}. Step $2$ calls  \textsc{Prune-set} $m$ times and each time \textsc{Prune-set} uses $n_1 n_2$ samples. 
Hence, step $2$ uses $m n_1 n_2$ samples. Step $3(a)$ uses $mn_4 \cdot \tcO(\beta''^{-2})$ and step $3(b)$ uses $m\cdot \tcO(\epsilon^{-3})$.
Hence, the total number of samples used by \textsc{Assisted-closeness-test} is 
\[
mn_1n_2 + mn_4 \cdot \tcO(\beta''^{-2}) + m \cdot \tcO(\epsilon^{-3}) = \tcO\left( \alpha^{-1}\beta^{-2}\epsilon^{-1} + \alpha^{-1} \beta^{-2} \epsilon^{-1} \epsilon^{-2} + \alpha^{-1}\beta^{-2} \epsilon^{-3}\right) = \tcO \left(\beta^{-1} \epsilon^{-4} \right).
\]
Thus each \textsc{Assisted-closeness-test} uses $\tcO(\epsilon^{-4} \beta^{-1})$ samples. Hence, the number of samples used by \textsc{Binary-search} is 
\[
\leq \tcO \left(\log \log k \left(n_1n_2 + \epsilon^{-4}\beta^{-1} + n_3  \right) \right) = \tcO \left( \frac{\log \log k }{\epsilon^{4}\beta} \right).
\]
Since \textsc{Closeness-test} calls \textsc{Binary-search} for $16/\epsilon$ different tuples. Hence, the sample complexity of closeness test is 
\[
\frac{16}{\epsilon} + \sum^{16/\epsilon}_{j=1}  \tcO \left( \frac{\log \log k}{\epsilon^{4}\beta_j} \right) = \tcO \left(\frac{\log \log k}{\epsilon^5} \right).
\]
\qedhere

\end{document}